\newcommand{\myparatight}[1]{\smallskip\noindent{\bf {#1}:}~}
\definecolor{PineGreen}{RGB}{0,139,114}
\definecolor{BrickRed}{RGB}{140,55,62}
\newcommand{\cmark}{{\color{PineGreen}\ding{51}}}%
\newcommand{\xmark}{{\color{BrickRed}\ding{55}}}%
\newcommand{\alg}{{\textsf{BALANCE}}~}
\newcommand{\algns}{{\textsf{BALANCE}}}
\newtheorem{assumption}{Assumption}
\newtheorem{thm}{Theorem}
\newtheorem{lem}{Lemma}
\newtheorem*{remark}{Remark}
\DeclareMathOperator{\EX}{\mathbb{E}}
\definecolor{greyL}{RGB}{230,248,255}
\definecolor{ffe8e7L}{RGB}{255,255,248}
\newcommand{\hairi}[1]{{\color{green}}}
\algnewcommand\algorithmicforpara{\textbf{for}}
\algnewcommand\algorithmicdoinparallel{\textbf{do in parallel}}
\DeclareMathOperator*{\argmin}{arg\,min}
\gdef\@copyrightpermission{
  \begin{minipage}{0.3\columnwidth}
   \href{https://creativecommons.org/licenses/by/4.0/}{\includegraphics[width=0.90\textwidth]{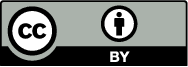}}
  \end{minipage}\hfill
  \begin{minipage}{0.7\columnwidth}
   \href{https://creativecommons.org/licenses/by/4.0/}{This work is licensed under a Creative Commons Attribution International 4.0 License.}
  \end{minipage}
  \vspace{5pt}
}
\begin{document}

\title{Byzantine-Robust Decentralized Federated Learning}

\author{Minghong Fang}
\authornote{Corresponding author}
\affiliation{
	\institution{University of Louisville}
	\country{}
}

\author{Zifan Zhang}
\affiliation{
	\institution{North Carolina State University}
	\country{}
}

\author{Hairi}
\affiliation{
	\institution{University of Wisconsin-Whitewater}
	\country{}
}

\author{Prashant Khanduri}
\affiliation{
	\institution{Wayne State University}
	\country{}
}

\author{Jia Liu}
\affiliation{
	\institution{The Ohio State University}
	\country{}
}

\author{Songtao Lu}
\affiliation{
	\institution{IBM Thomas J. Watson Research Center}
	\country{}
}

\author{Yuchen Liu}
\affiliation{
	\institution{North Carolina State University}
	\country{}
}

\author{Neil Gong}
\affiliation{
	\institution{Duke University}
	\country{}
}

\renewcommand{\shortauthors}{Minghong Fang et al.}

\begin{abstract}
Federated learning (FL) enables multiple clients to collaboratively train machine learning models without revealing their private training data.
In conventional FL, the system follows the server-assisted architecture (server-assisted FL), where the training process is coordinated by a central server.
However, the server-assisted FL framework suffers from poor scalability due to a communication bottleneck at the server, and trust dependency issues.
To address challenges, decentralized federated learning (DFL) architecture has been proposed to allow clients to train models collaboratively in a serverless and peer-to-peer manner.
However, due to its fully decentralized nature, DFL is highly vulnerable to poisoning attacks, where malicious clients could manipulate the system by sending carefully-crafted local models to their neighboring clients.
To date, only a limited number of Byzantine-robust DFL methods have been proposed, most of which are either communication-inefficient or remain vulnerable to advanced poisoning attacks.  
In this paper, we propose a new algorithm called \alg (\underline{B}yzantine-robust \underline{a}veraging through \underline{l}ocal simil\underline{a}rity i\underline{n} de\underline{ce}ntralization) to defend against poisoning attacks in DFL.
In \algns, each client leverages its own local model as a similarity reference to determine if the received model is malicious or benign.
We establish the theoretical convergence guarantee for \alg under poisoning attacks in both strongly convex and non-convex settings. 
Furthermore, the convergence rate of \alg under poisoning attacks matches those of the state-of-the-art counterparts in Byzantine-free settings.
Extensive experiments also demonstrate that \alg outperforms existing DFL methods and effectively defends against poisoning attacks.

\end{abstract}

\begin{CCSXML}
	<ccs2012>
	<concept>
	<concept_id>10002978.10003022.10003026</concept_id>
	<concept_desc>Security and privacy~Systems security</concept_desc>
	<concept_significance>500</concept_significance>
	</concept>
	</ccs2012>
\end{CCSXML}

\ccsdesc[500]{Security and privacy~Systems security}

\keywords{Decentralized Federated Learning, Poisoning Attacks, Byzantine Robustness}

\maketitle


\section{Introduction} \label{sec:intro}
Federated learning (FL)~\cite{McMahan17} has recently emerged as a powerful distributed learning paradigm that leverages multiple clients to train machine learning models collaboratively without sharing their raw training data.
FL naturally follows the server-based distributed architecture, also known as {\em server-assisted federated learning (server-assisted FL)}~\cite{Konen16,McMahan17}, where the training process is orchestrated by a server. 
However, despite its simplicity, the server-assisted FL framework suffers from {\em three key limitations} due to the reliance of 
a central server.
The first limitation is that the server is vulnerable to the {\em single-point-of-failure risk}, which renders the server a clear target for cyber-attacks or the server itself could experience crashes or other system failures~\cite{dai2022dispfl,guo2021byzantine,he2022byzantine}.
The second limitation of the server-assisted FL framework is that its single-level tree topology implies a {\em communication bottleneck} at the server (the root node) as the number of clients increases~\cite{dai2022dispfl,lian2017can,pasquini2022privacy}.
This communication bottleneck significantly worsens the scalability of large-scale distributed training over a server-based architecture.
The third limitation is that server-assisted FL suffers from {\em trust dependency issues}: all participating clients have to trust the server, which has the potential to influence clients' models arbitrarily~\cite{fowl2021robbing,pasquini2022eluding,pasquini2022privacy}.

\begin{figure}[!t]
	\centering
   	\includegraphics[scale = 0.38]{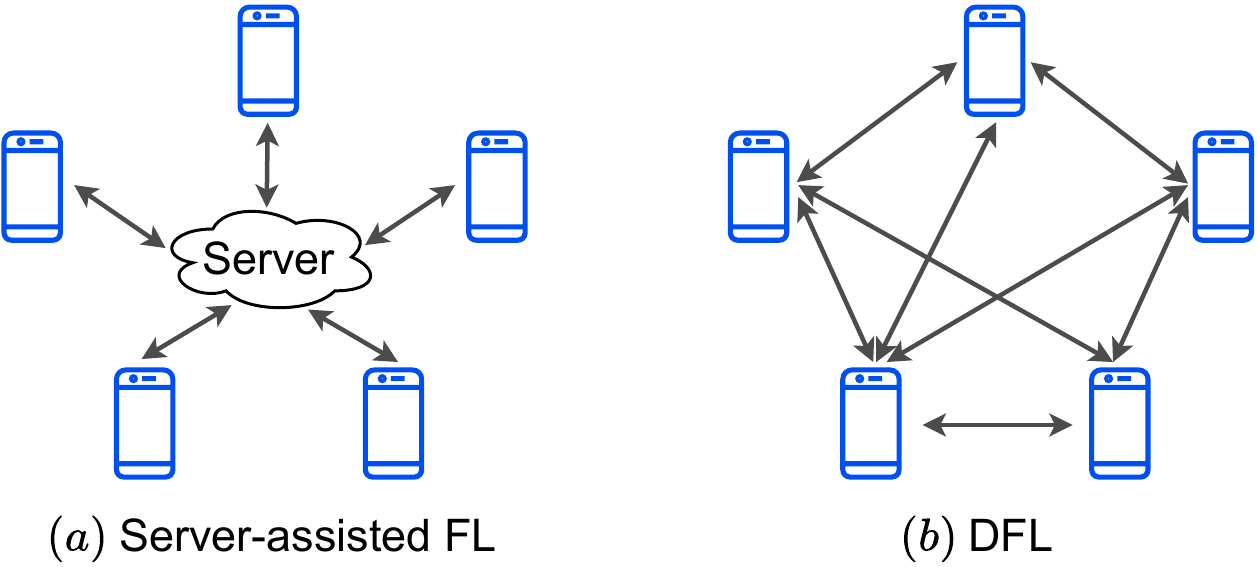}
	\caption{Server-assisted FL vs. DFL.}\label{centralized_decentralized_FL}
\end{figure}

The limitations of these existing server-assisted FL systems have motivated researchers to pursue a fully-decentralized FL design, also known as {\em decentralized federated learning (DFL)}~\cite{beltran2022decentralized,dai2022dispfl,el2021collaborative,kalra2023decentralized,lalitha2018fully,pasquini2022privacy,zhang2022net,liu2023prometheus}.
In DFL, clients exchange information in a peer-to-peer fashion, without the assistance of a server.
Clients in DFL follow the same process of training their local models as in server-assisted FL. However, in this fully-decentralized setting, each client only needs to send its updated model to its neighboring clients and perform local aggregation of the received models.
Fig.~\ref{centralized_decentralized_FL} shows the difference between server-assisted FL and DFL.
Thanks to its salient features, DFL has found a wide range of applications,
such as healthcare services~\cite{ludefl20,nguyen2022novel,rieke2020future} and autonomous driving~\cite{beltran2022decentralized,nguyen2022deep}.

Although DFL provides many benefits, a significant barrier to its widespread adoption is the susceptibility of DFL models to poisoning attacks.
Malicious clients in FL could arbitrarily manipulate the system via poisoning their local training data (aka \textit{data poisoning attacks})~\cite{biggio2012poisoning,munoz2017towards,tolpegin2020data} or local models (aka \textit{model poisoning attacks})~\cite{bagdasaryan2020backdoor,baruch2019little,fang2020local,shejwalkar2021manipulating,xie2020dba,li2022learning} to degrade the learning performance.
To address this challenge, a number of Byzantine-robust FL algorithms have been proposed with varying degrees of success
(e.g.,~\cite{blanchard2017machine,cao2020fltrust,chu2022securing,fu2019attack,kumari2023baybfed,mozaffari2023every,xie2021crfl,Yin18,zhang2022fldetector}).
However, almost all of these existing defenses are based on the server-assisted FL design.
To defend against poisoning attacks in DFL systems, the straightforward approach is to adapt the existing defenses designed for server-assisted FL to DFL setting. However, our later experiments show that directly applying these server-based defenses to DFL leads to unsatisfactory performance since they are not designed for DFL architecture.

We note, however, that designing Byzantine-robust DFL algorithms is highly non-trivial.
One of the main challenges is that, unlike server-assisted FL where the server maintains a single global model; in DFL, each client not only performs model training, but also acts as a ``parameter server'' to aggregate the received models.
At the end of the training process, each client in DFL holds its own final trained model.
The second challenge is that in DFL, clients connect to each other randomly, and different clients may have varying numbers of malicious neighbors.
Furthermore, in DFL, clients interact exclusively with their topological neighbors.
As a result, each client only has a ``partial view'' of the entire system.
The above challenges make it difficult to guarantee that all benign clients in DFL obtain accurate final models, both theoretically and empirically.
Recently, a few Byzantine-robust DFL methods have been proposed~\cite{el2021collaborative,guo2021byzantine,he2022byzantine}.
However, these DFL methods suffer from the following limitations:
First, some approaches lack communication efficiency.
For instance, in LEARN~\cite{el2021collaborative}, each client needs to exchange information with its neighboring clients multiple times during each training round, resulting in a significant communication overhead. 
Second, certain defenses cannot provide theoretical guarantees that all benign clients will obtain accurate final models through the collaborative learning process. 
Moreover, even when such guarantees are provided, these methods need to assume that each benign client has knowledge of its malicious ratio (fraction of neighbors that are malicious).
Third, as our experimental results will demonstrate, existing DFL methods are inherently vulnerable to poisoning attacks.

\myparatight{Our work}In this paper, we aim to bridge this gap.
We propose a novel method called \alg (\ul{B}yzantine-robust \ul{a}veraging through \ul{l}ocal simil\ul{a}rity i\ul{n} de\ul{ce}ntralization) to defend against poisoning attacks in DFL.
Our proposed \alg method is based on the observation that the attacker could manipulate the directions or magnitudes of local models on malicious clients in order to effectively corrupt the FL system.
In our proposed method, each client uses its own local model as a {\em  similarity reference} to assess whether the model it received is malicious or benign.
The high-level idea of \alg is that if the received model is {\em close} to the client's own model in both direction and magnitude, it is considered benign; otherwise, the received model will be ignored.
We provide theoretical guarantees of \alg under poisoning attacks in both strongly convex and non-convex settings.
Specifically, in the case of a strongly convex population loss, we theoretically prove that for our \alg method, the final model learned by each benign client converges to the neighborhood of the global minimum.
In the non-convex setting, we theoretically demonstrate that the final model of each benign client could converge to a neighborhood of a stationary point.
Additionally, the convergence rates of our proposed method in both strongly convex and non-convex settings align with the optimal convergence rate observed in Byzantine-free strongly convex and non-convex optimizations, respectively.
Notably, our theoretical guarantees are established without relying on the stringent and often unrealistic assumptions commonly made in existing DFL methods.
These include the need for the communication graph to be complete and the requirement for each client to know the percentage of their neighbors who are malicious.

We extensively evaluate our proposed method on 5 datasets from different domains, 9 poisoning attacks (including attacks specifically developed for server-assisted FL and those customized for DFL architectures), 12 communication graphs, along
with 8 state-of-the-art FL baselines. 
Furthermore, we explore various practical settings in DFL, including but not limited to, clients having highly non-independent and identically distributed training data (e.g., each client possessing data from merely three classes), clients employing different robust aggregation rules to combine the received models, clients starting with different initial models, various fractions of edges between malicious and benign clients, and time-varying communication graphs (e.g., clients may disconnect from the protocol due to Internet issues).
We summarize our main contributions in this paper as follows:

\begin{list}{\labelitemi}{\leftmargin=1em \itemindent=-0.08em \itemsep=.2em}

\item 
We propose \algns, a novel approach to defend against poisoning attacks in DFL. 
In contrast to existing DFL defenses, 
our \alg algorithm achieves the same communication complexity as that of the state-of-the-art server-assisted FL algorithms.

\item 
We theoretically establish the convergence rate performance of \alg under poisoning attacks in both strongly convex and non-convex settings.
We note that the convergence rate performance of \alg under strongly convex and non-convex settings match the optimal convergence rates in Byzantine-free strongly convex and non-convex optimizations, respectively.

\item 
Our extensive experiments on different benchmark datasets, various poisoning attacks and practical DFL settings demonstrate and verify the efficacy of our proposed \alg method.

\end{list}


\section{Preliminaries and Related Work} \label{sec:related}

\textbf{Notations:} 
Throughout this paper, matrices and vectors are denoted by boldface letters. 
We use $\left\| \cdot \right\|$ for $\ell_2$-norm.
For any given set $\mathcal{V}$, we use $|\mathcal{V}|$ denote its cardinality.

\subsection{Decentralized Federated Learning (DFL)}

Typically, in federated learning (FL), the training procedure can be formulated as an empirical risk minimization (ERM) problem. The aim is to learn a model $\bm{w}^*$ that minimizes the optimization problem expressed as follows:
\begin{align} 
\label{eqn_ERM}
\bm{w}^* = \argmin_{\bm{w} \in \Theta} F(\bm{w}) = \frac{1}{\left| D \right|} \sum\limits_{\zeta \in  D} f(\bm{w}, \zeta),
\end{align}
where $ \Theta \subset \mathbb{R}^d$ is the parameter space, $d$ corresponds to the dimension of model parameter; $F(\cdot)$ denotes the population risk function; $D$ denotes the entire training dataset; $f(\bm{w}, \zeta)$ represents the empirical loss function, which is computed using weight parameter $\bm{w}$ and a training sample $\zeta$.

In this paper, we try to solve the FL problem in~(\ref{eqn_ERM}) in a fully decentralized manner, without requiring assistance from a centralized server.
Specifically, consider a DFL system with a set of clients $\mathcal{V}$.
We let $|\mathcal{V}|$ denote the number of clients in the system.
The network topology of this DFL system is defined by an undirected and unweighted communication graph $\mathcal{G}=(\mathcal{V}, \mathcal{E})$, where $\mathcal{E}$ denotes the set of edges between clients, and self-loops are not allowed.
Communication is only possible between two clients if there is an edge connecting them. 
Each individual client, denoted as $i \in \mathcal{V}$, has its own private training dataset $D_i$.
Collectively, we denote the joint global training dataset as $D=\bigcup_{i \in \mathcal{V}} D_{i}$.
Every client $i$ maintains a model $\bm{w}_i$ that is based on its local training data and information (e.g., model parameter in this paper) gathered from its neighboring clients. 
Specifically, in each training round $t$, each client conducts the following two steps:

 \myparatight{Step I (Local model training and exchanging)}
    Client $i \in \mathcal{V}$ performs local training to get an intermediate model $\bm{w}_i^{t+\frac{1}{2}}$, and subsequently sends $\bm{w}_i^{t+\frac{1}{2}}$ to its neighboring client $j \in \mathcal{N}_i$, where $\mathcal{N}_i$ is the set of neighbors of client $i$, {\em not} including the client $i$ itself.
    At the same time, client $i$ also receives the intermediate local model $\bm{w}_j^{t+\frac{1}{2}}$ from its neighboring client $j$.
    Lines 5-7 in Algorithm~\ref{training_procedure_DFL} summarize Step I. The local training of clients is shown in Algorithm~\ref{local_training}.

     \myparatight{Step II (Local model aggregation)}
     Upon receiving all intermediate local models from its neighbors,
     each client $i \in \mathcal{V}$ aggregates and then updates its model as follows (Line 9 in Algorithm~\ref{training_procedure_DFL}):
	\begin{align}
	\label{local_model_agg}
	  \bm{w}_i^{t+1} = \alpha  \bm{w}_i^{t+\frac{1}{2}} + (1-\alpha) \text{AGG} \{     \bm{w}_j^{t+\frac{1}{2}}, j \in \mathcal{N}_i \},
	\end{align}
	where  $\alpha$ is a trade-off parameter, a larger $\alpha$ indicates trust more in the client’s own intermediate local model, while a smaller value of $\alpha$ means put more weight on the aggregated neighboring models; $\text{AGG} \{ \bm{w}_j^{t+\frac{1}{2}}, j \in \mathcal{N}_i \}$ denotes that the client $i$ employs a certain aggregation rule, represented by $\text{AGG}$, to combine the local models received from its neighboring clients.
    The aggregation rule $\text{AGG}$ can be FedAvg~\cite{McMahan17} or Median~\cite{Yin18}.

In~\cite{pasquini2022privacy}, clients update their local models as \(\bm{w}_i^{t+1} = \text{AGG} \{ \bm{w}_j^{t+\frac{1}{2}}, j \in \widehat{\mathcal{N}_i} \}\), with \(\widehat{\mathcal{N}_i} = \mathcal{N}_i \cup \{ i\}\). Our experiments later reveal that even in non-adversarial settings, such aggregation leads to high error rates in final models using existing Byzantine-robust methods.

\begin{algorithm}[t]
\small
		\caption{LocalTraining($\bm{w}, D, \eta$).}
		\label{local_training}
		\begin{algorithmic}[1]
         \renewcommand{\algorithmicensure}{\textbf{Output:}}
         \Ensure $\bm{w}$. 
		\For {each local iteration}
             \State Sample a mini-batch of training data from $D$ to compute stochastic gradient $\bm{g}(\bm{w})$.
             \State $\bm{w} \leftarrow  \bm{w} - \eta \bm{g}(\bm{w})$.
		\EndFor
		\end{algorithmic}
\end{algorithm}

\begin{algorithm}[t]
\small
		\caption{Training procedure of DFL.}
		\label{training_procedure_DFL}
		\begin{algorithmic}[1]
  	\renewcommand{\algorithmicrequire}{\textbf{Input:}}
		\renewcommand{\algorithmicensure}{\textbf{Output:}}
         \Require Set of clients $\mathcal{V}$; local training data $D_i$, $i \in \mathcal{V}$; training rounds $T$; learning rate $\eta$; communication graph $\mathcal{G}$; parameter $\alpha$; aggregation rule $\text{AGG}$.
         \Ensure Local models $\bm{w}_i^T$, $i \in \mathcal{V}$. 
		   \State Initialize $\bm{w}_i^0$, $i \in \mathcal{V}$.
            \For {$t = 0, 1, \cdots, T-1$}
			\For {each client $i\in \mathcal{V}$ in parallel}
             \State  // Step I: Local model training and exchanging.
			\State $\bm{w}_i^{t+\frac{1}{2}} =\text{LocalTraining($\bm{w}_i^t, D_i, \eta$)}$.
			\State Send $\bm{w}_i^{t+\frac{1}{2}}$ to all neighboring clients $j\in\mathcal{N}_i$.
			\State Receive $\bm{w}_j^{t+\frac{1}{2}}$ from all neighboring clients $j\in\mathcal{N}_i$.
            \State  // Step II: Local model aggregation.
                \State $ \bm{w}_i^{t+1} = \alpha  \bm{w}_i^{t+\frac{1}{2}} + (1-\alpha) \text{AGG} \{     \bm{w}_j^{t+\frac{1}{2}}, j \in \mathcal{N}_i \}$.
			\EndFor
			\EndFor
		\end{algorithmic}
\end{algorithm}

\subsection{Poisoning Attacks to FL} 

FL is vulnerable to both data poisoning attacks~\cite{biggio2012poisoning,munoz2017towards,tolpegin2020data} and model poisoning attacks~\cite{bagdasaryan2020backdoor,baruch2019little,blanchard2017machine,fang2020local,fung2020limitations,gu2017badnets,shejwalkar2022back,xie2020dba,cao2022mpaf,yin2024poisoning,zhang2024poisoning}.
In data poisoning attacks, malicious clients poison their training data. For instance, in a label flipping attack~\cite{tolpegin2020data}, the attacker flips the labels of local training data in malicious clients while leaving the features unchanged.
Malicious clients can also manipulate their local models directly, which are known as model poisoning attacks~\cite{bagdasaryan2020backdoor,blanchard2017machine,fang2020local,gu2017badnets,shejwalkar2022back}.
Depending on the attacker's objective, model poisoning attacks can be categorized as either untargeted attacks~\cite{blanchard2017machine,fang2020local,shejwalkar2022back} or targeted attacks~\cite{bagdasaryan2020backdoor,gu2017badnets,yar2023backdoor}.
In untargeted attacks, the attacker manipulates the FL system in a way that the final learned model will make incorrect predictions on a significant number of test examples without distinction. 
Conversely, in targeted attacks, the attacker seeks to influence the predictions of the final learned model on the specific inputs.
A recent study~\cite{pasquini2022privacy} demonstrates that DFL is vulnerable to privacy attacks, this topic is out of the scope of this paper.

\begin{table}[!t]
	\small
	 \addtolength{\tabcolsep}{-3.95pt}
	\centering
	\caption{
		``Convex guarantee'' and ``Non-convex guarantee'' mean the method provides theoretical performance guarantees under strongly convex and non-convex settings, respectively; 
		``No know. about $c_i$'' means benign client $i$ has no knowledge about $c_i$ (malicious ratio of client $i$);
		``No complete graph assum.'' means the approach does not require the assumption that the communication graph $\mathcal{G}$ must be a complete graph.;
		``No extra comm. cost'' means the method does not incur extra communication cost compared to FedAvg.
	}
	\label{comp_diff_method}
	\begin{tabular}{|c|c|c|c|c|c|c|}
		\hline
		Method & \multicolumn{1}{c|}{\makecell {Convex \\ guarantee}}  
		 & \multicolumn{1}{c|}{\makecell {Non-convex \\ guarantee}}      
		 & \multicolumn{1}{c|}{\makecell {No know. \\ about $c_i$}} 
		 & \multicolumn{1}{c|}{\makecell {No complete \\ graph assum.}} 
		 & \multicolumn{1}{c|}{\makecell {No extra \\ comm. cost}} \\
		\hline
		UBAR~\cite{guo2021byzantine}   & \xmark  &    \xmark         &  \xmark &  \xmark & \cmark \\
		\hline
		LEARN~\cite{el2021collaborative}     &   \xmark      &    \cmark        & \xmark & \xmark  &  \xmark \\
		\hline
		SCCLIP~\cite{he2022byzantine}    &   \xmark     &  \cmark      & \cmark   & \cmark & \cmark\\
		\hline
		\alg  &  \cmark      &   \cmark         &  \cmark & \cmark & \cmark   \\
		\hline
	\end{tabular}
\end{table}

\subsection{Byzantine-robust DFL Aggregation Rules}

Since FL is vulnerable to poisoning attacks, many Byzantine-robust aggregation mechanisms for FL have been developed~\cite{blanchard2017machine,cao2020fltrust,chu2022securing,fu2019attack,karimireddy2020byzantine,kumari2023baybfed,Mhamdi18,munoz2019byzantine,pan2020justinian,Yin18,cao2022flcert,cao2023fedrecover,fang2022aflguard,xu2024robust}.
However, most of them are based on the server-assisted FL design.
Recently, a few Byzantine-robust FL methods have been proposed to tackle this challenge in DFL setting~\cite{el2021collaborative,guo2021byzantine,he2022byzantine}.
For instance, in the UBAR~\cite{guo2021byzantine} method, each client first selects a group of neighboring clients that has the smallest sum of distances to its own local model, then further excludes information from neighbors that would result in a larger loss.
LEARN~\cite{el2021collaborative} is another type of Byzantine-robust aggregation protocol designed for DFL.
Clients in LEARN share both local model updates and local models with their neighboring clients in each training round. 
Specifically, each client first exchanges local model update with neighboring clients for $\lceil \log_2 t \rceil$ times, and then shares its local model one time, where $t$ is the current training round. The trimmed mean aggregation rule is used by clients to combine the received local model updates and models.
In the SCCLIP~\cite{he2022byzantine} aggregation rule, each client clips all received local models from neighboring clients to make sure the norm of a clipped received local model is no larger than that of the client's own model.

However, there are several inherent limitations in existing DFL defenses.
First, UBAR cannot theoretically guarantee that every benign client could learn an accurate model.
Second, although some methods offer theoretical guarantees for benign clients, they assume that each benign client $i$ has knowledge of its malicious ratio $c_i$, 
which is computed as number of malicious neighbors divided by the total number of neighbors of client $i$.
That is to say, in these methods, it is assumed that each benign client knows the number of neighbors that are malicious.
Third, the LEARN method additionally presupposes that the underlying communication graph \(\mathcal{G}\) must be a complete (fully connected) graph.
Notably, the LEARN method incurs a large communication cost, as during each training round, clients need to exchange information with their peers several times.
Contrary to existing DFL defenses, our proposed \alg method addresses the above limitations.
We compare our proposed \alg with existing Byzantine-robust DFL methods, and summarize the comparison in Table~\ref{comp_diff_method}.
It is important to note that in Table~\ref{comp_diff_method}, we do not compare our method with server-assisted FL methods such as Krum and Median. This is because the theoretical results of server-assisted FL methods cannot be straightforwardly transferred to the DFL framework, owing to significant variations in their architectures and operational procedures. DFL necessitates distinct theoretical frameworks and analyses tailored to its specific features and obstacles. Determining how to adapt the theoretical results of server-assisted FL methods to the DFL context is a challenging task and falls outside the scope of this paper.


\section{Problem Statement} 
\label{problem_statement}

\myparatight{Threat Model}
Similar to prior works~\cite{fang2020local,guo2021byzantine,he2022byzantine,shejwalkar2021manipulating,shejwalkar2022back}, we assume that the attacker controls some malicious clients, those malicious clients could either poison their local training data or directly manipulate the local models that are sent to their neighboring clients.
Note that each malicious client could only send malicious local models to its neighbors.
Additionally, a malicious client could distribute  different local models to different neighboring clients.
We also remark that following~\cite{guo2021byzantine,he2022byzantine, pasquini2022privacy}, the attacker cannot change the communication graph $\mathcal{G}$ between clients. 
However, clients may disconnect from the DFL protocol 
in each training round because of  Internet-related issues.

\myparatight{Attacker’s Knowledge}
In terms of attacker's knowledge, following~\cite{cao2020fltrust,guo2021byzantine,he2022byzantine}, we consider the worst-case attack scenario where the attacker has full-knowledge about the FL system, which includes local training data, the aggregation rule and trade-off parameter $\alpha$ utilized by all clients, as well as the communication graph $\mathcal{G}$.
Note that in both non-adversarial and adversarial scenarios, each client knows the local models of its neighbors since local models are exchanged in DFL.

\myparatight{Defender’s Knowledge and Goal}
The proposed defense has no knowledge about the attacker's strategy nor the communication graph $\mathcal{G}$, but is expected to be capable of withstanding powerful adaptive attacks. 
It is important to note that in the proposed defense, each benign client is unaware of the total number of malicious clients in the system nor the number of neighboring clients that are malicious.
We aim to develop a reliable and resilient DFL approach that meets the following three key goals. 1) Competitive learning performance: the proposed defense scheme for DFL should be effective in non-adversarial settings. Specifically, when there are no malicious clients, the model learned by each benign client using our proposed algorithm should attain comparable test error rate performance to that of averaging-based aggregation, which achieves state-of-the-art performance in non-adversarial settings;
2) Byzantine robustness: the proposed DFL method should demonstrate both theoretical and empirical resilience against Byzantine attacks;
and 3) Communication and computation efficiency: the proposed algorithm should not result in any additional communication or computation costs when compared to FedAvg~\cite{McMahan17} in the absence of attacks.


\section{The \alg Algorithm} \label{sec:defense}

\begin{figure}[!t]
	\centering
	\includegraphics[scale = 0.7]{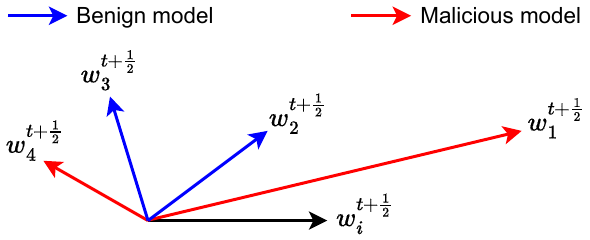}
      \vspace{.01in}
	\caption{Illustration of our proposed \alg method.}
    \label{fig_high_level}
\end{figure}

As summarized in Table~\ref{comp_diff_method}, existing DFL methods 
either would incur a large communication cost, or make strong assumption that each benign client needs to know its malicious ratio.
However, this assumption does not hold in practical scenarios, as in DFL, different clients connect to a different number of malicious clients. 
Moreover, ensuring that every benign client can learn an accurate final model by exchanging information with other clients presents a significant challenge.
In this section, we aim to design a simple yet effective DFL method to achieve three goals defined in Section~\ref{problem_statement}.

As shown in~\cite{fang2020local,shejwalkar2021manipulating}, the attacker could launch model poisoning attacks on FL either by manipulating the directions or magnitudes of the local models on malicious clients. 
In our proposed \algns, if the received intermediate model differs significantly from the client's own intermediate model, it is assumed to be potentially malicious and is ignored.
Specifically, at training round $t$, when client $i \in \mathcal{V}$ receives a intermediate model $\bm{w}_j^{t+\frac{1}{2}}$ from its neighboring client $j \in \mathcal{N}_i$, it uses its own intermediate model $\bm{w}_i^{t+\frac{1}{2}}$ as a {\em similarity reference} to check whether the received model $\bm{w}_j^{t+\frac{1}{2}}$ is malicious or benign.
If $\bm{w}_j^{t+\frac{1}{2}}$ is {\em close} to $\bm{w}_i^{t+\frac{1}{2}}$, both in terms of direction and magnitude, then client $i$ will consider $\bm{w}_j^{t+\frac{1}{2}}$ as a benign model; 
otherwise client $i$ will disregard $\bm{w}_j^{t+\frac{1}{2}}$.
Additionally, as the model approaches convergence, $\bm{w}_j^{t+\frac{1}{2}}$ becomes more closer to $\bm{w}_i^{t+\frac{1}{2}}$. Based on the above insights, client $i$ will only accept $\bm{w}_j^{t+\frac{1}{2}}$ if the following condition holds:
\begin{align}
\label{our_agg}
\| \bm{w}_i^{t+\frac{1}{2}} - \bm{w}_j^{t+\frac{1}{2}} \| \le \gamma \cdot \exp(-\kappa \cdot \lambda(t))   \| \bm{w}_i^{t+\frac{1}{2}}    \|,
\end{align}
where the parameter $\gamma > 0$ sets an upper limit for accepting a model as benign. The value of $\kappa > 0$ determines the rate at which the exponential function decreases; a larger $\kappa$ results in a faster decay, while a smaller $\kappa$ leads to a slower decay. The function $\lambda(t)$ is a monotonically increasing and non-negative function associated with the training round index $t$, meaning it becomes larger as $t$ increases. When $\gamma$ and $\kappa$ are fixed, the term $\gamma \cdot \exp(-\kappa \cdot \lambda(t))$ decreases as $t$ increases.
Various methods exist for choosing \(\lambda(t)\). For instance, a straightforward approach is to define it as \(\lambda(t) = \frac{t}{T}\), where $T$ is the total number of training rounds.

Fig.~\ref{fig_high_level} shows the high-level idea of proposed \alg defense.
In Fig.~\ref{fig_high_level}, $\bm{w}_1^{t+\frac{1}{2}}$, $\bm{w}_2^{t+\frac{1}{2}}$, $\bm{w}_3^{t+\frac{1}{2}}$, and $\bm{w}_4^{t+\frac{1}{2}}$ are four intermediate models sent from neighboring clients of client $i$; $\bm{w}_i^{t+\frac{1}{2}}$ is client $i$'s own intermediate model.
Client $i$ will accept $\bm{w}_2^{t+\frac{1}{2}}$ and $\bm{w}_3^{t+\frac{1}{2}}$ because they are close to $\bm{w}_i^{t+\frac{1}{2}}$.
However, models $\bm{w}_1^{t+\frac{1}{2}}$ and $\bm{w}_4^{t+\frac{1}{2}}$ are flagged as malicious since $\bm{w}_1^{t+\frac{1}{2}}$ deviates significantly from $\bm{w}_i^{t+\frac{1}{2}}$ in terms of magnitude, and $\bm{w}_4^{t+\frac{1}{2}}$ considerably differs from $\bm{w}_i^{t+\frac{1}{2}}$ in terms of direction.

During training round $t$, we define the set $\mathcal{S}_i^t \subseteq \mathcal{N}_i$ as the collection of neighboring clients of client $i$ whose intermediate models satisfy Eq.~(\ref{our_agg}). 
Client $i$ then aggregates the received intermediate models from its neighboring clients by computing the average of all accepted models as $\frac{1}{| \mathcal{S}_i^{t } |} \sum\nolimits_{j \in \mathcal{S}_i^{t}} \bm{w}_j^{t+\frac{1}{2}}$. 
Finally, client $i$ could update its model by combining its own intermediate model with the aggregated intermediate model in the following manner:
\begin{align}
\label{our_agg_combine}
    \bm{w}_i^{t+1} = \alpha  \bm{w}_i^{t+\frac{1}{2}} + (1-\alpha) \frac{1}{| \mathcal{S}_i^{t } |} \sum\nolimits_{j \in \mathcal{S}_i^{t}} \bm{w}_j^{t+\frac{1}{2}}.
\end{align}

Algorithm~\ref{our_alg} shows the pseudocode of our proposed \alg algorithm. 
During the training round $t$, each client executes Lines 4-15 of Algorithm~\ref{our_alg} in parallel.
Specifically, for client $i\in \mathcal{V}$, it first performs local model training and exchanging (Lines 5-7). 
Note that the LocalTraining procedure is shown in Algorithm~\ref{local_training}.
If client $i$ is a malicious client, it can choose to send arbitrary or carefully-crafted intermediate models to its neighboring clients at Line 6.
After that, client $i$ accepts intermediate models that satisfy Eq.~(\ref{our_agg}) and further updates its local model (Lines 9-15).

\begin{algorithm}[t]
\small
		\caption{\algns.}
		\label{our_alg}
		\begin{algorithmic}[1]
  	\renewcommand{\algorithmicrequire}{\textbf{Input:}}
		\renewcommand{\algorithmicensure}{\textbf{Output:}}
         \Require Set of clients $\mathcal{V}$; local training data $D_i$, $i \in \mathcal{V}$; training rounds $T$; learning rate $\eta$; communication graph $\mathcal{G}$; parameters $\alpha$, $\gamma$, $\kappa$ and $\lambda(t)$.
         \Ensure Local models $\bm{w}_i^T$, $i \in \mathcal{V}$. 
		   \State Initialize $\bm{w}_i^0$, $i \in \mathcal{V}$.
            \For {$t = 0, 1, \cdots, T-1$}
			\For {each client $i\in \mathcal{V}$ in parallel}
             \State  // Step I: Local model training and exchanging.
			\State $\bm{w}_i^{t+\frac{1}{2}} =\text{LocalTraining($\bm{w}_i^t, D_i, \eta$)}$.
			\State Send $\bm{w}_i^{t+\frac{1}{2}}$ to all neighboring clients $j\in\mathcal{N}_i$.
			\State Receive $\bm{w}_j^{t+\frac{1}{2}}$ from all neighboring clients $j\in\mathcal{N}_i$.
            \State  // Step II: Local model aggregation.
            \State $\mathcal{S}_i^t=\emptyset$.
            \For {each client $j \in \mathcal{N}_i$ }
             \If {Eq.~(\ref{our_agg}) satisfies}
             \State $\mathcal{S}_i^t=\mathcal{S}_i^t \bigcup \{j \} $.
            \EndIf
		  \EndFor
            \State $\bm{w}_i^{t+1} = \alpha  \bm{w}_i^{t+\frac{1}{2}} + (1-\alpha) \frac{1}{| \mathcal{S}_i^{t } |} \sum\nolimits_{j \in \mathcal{S}_i^{t}} \bm{w}_j^{t+\frac{1}{2}}$.
		  \EndFor
             \EndFor
		\end{algorithmic}
\end{algorithm}

\noindent
{\bf Complexity analysis:} 
In our proposed \alg method, at training round $t$, client $i \in \mathcal{V}$ computes the distance between its own intermediate model $\bm{w}_i^{t+\frac{1}{2}}$ and the received intermediate model $\bm{w}_j^{t+\frac{1}{2}}$ from a neighboring client $j \in \mathcal{N}_i$.
Since the dimension of local model is $d$, and client $i$ has $|\mathcal{N}_i|$ neighboring clients, the computational complexity of each client in our method is $\mathcal{O}(d|\mathcal{N}_i|)$ at each training round.

\section{Theoretical Performance Analysis} \label{sec:analysis}

In this section, we present the convergence performance guarantee of our proposed \algns. 
We let $\mathcal{B} \subseteq \mathcal{V}$ be the set of benign clients.
Let $\mathcal{G}_B$ be the subgraph induced by benign clients.
In a convex setting, we denote the global minimum as $\bm{w}^*$; while in a non-convex setting, $\bm{w}^*$ represents a stationary point (a point which has zero gradient).
Before introducing the theoretical results,
we first present some technical assumptions that are standard in the literature~\cite{el2021collaborative,he2022byzantine,karimireddy2020byzantine,lian2017can,Yin18}.

\begin{assumption}
\label{assumption_1}
The population risk $F(\bm{w})$ is $\mu$-strongly convex, i.e., for all  $\bm{w}_1, \bm{w}_2 \in \Theta$, one has that:
\begin{align}
	F(\bm{w}_1) + \left\langle {\nabla F(\bm{w}_1),\bm{w}_2 - \bm{w}_1} \right\rangle + \frac{\mu}{2}{\left\| {\bm{w}_2 - \bm{w}_1} \right\|^2} \le
	F(\bm{w}_2). \nonumber
\end{align}
\end{assumption}

\begin{assumption}
\label{assumption_2}
The population risk $F(\bm{w})$ is $L$-smooth, i.e., for all  $\bm{w}_1, \bm{w}_2 \in \Theta$, we have that:
\begin{align}	
\left\| {\nabla F(\bm{w}_1) - \nabla F(\bm{w}_2) } \right\| \le L\left\| {\bm{w}_1 - \bm{w}_2 } \right\|. \nonumber
\end{align}
\end{assumption}

\begin{assumption}
\label{assumption_3}
The stochastic gradient $\bm{g}(\bm{w}_{i})$ computed by a benign client $i \in \mathcal{B}$ is an unbiased estimator of the true gradient, and $\bm{g}(\bm{w}_{i})$ has bounded variance, where $\mathcal{B}$ is the set of benign clients.
That is, $\forall i \in \mathcal{B}$, one has that:
\begin{align}
\EX [\bm{g}(\bm{w}_{i}) ] = \nabla F(\bm{w}_{i}), \quad \EX [ \|  \bm{g} (\bm{w}_{i}) -  \nabla F(\bm{w}_{i}) \|]^2 \le  \delta ^2.  
\nonumber
\end{align}
\end{assumption}

\begin{assumption}
\label{assumption_4}
For any benign client $i \in \mathcal{B}$, the model $\bm{w}_{i}$ and $\|  \nabla F(\bm{w}_{i}) \| $ are bounded. 
That is, $\forall i \in \mathcal{B}$, we have $\| \bm{w}_{i} \|  \le \psi$, and $  \| \nabla F(\bm{w}_{i}) \| \le \rho$.
\end{assumption}

\begin{assumption}
\label{assumption_6}
$\mathcal{G}_B$ is connected.
\end{assumption}

With the above assumptions, we provide the theoretical findings of our \alg both in strongly convex and non-convex settings. 
In the strongly convex setting, we have the following result.
\begin{thm}[The Strongly Convex Setting]
\label{Theorem1}
Suppose Assumptions~\ref{assumption_1}-\ref{assumption_6} hold, clients select intermediate models according to Eq.~(\ref{our_agg}).
Let the learning rate $\eta$ and $\gamma$ be chosen as such that $\eta \le \min\{\frac{1}{4L},\frac{1}{\mu}\}$ and $\gamma \le \frac{\rho }{L\psi(1-\alpha)}$.
The value of $\kappa \cdot \lambda(t)$ is larger than 0.
After $T$ training rounds, for any benign client $i \in \mathcal{B}$, it holds that:
\begin{align}
\EX [ F(\bm{w}_{i}^{T})  - F(\bm{w}^*) ] 
& \le (1-\mu \eta)^T  [   F(\bm{w}_{i}^0)  - F(\bm{w}^*) ]\nonumber \\
& \quad + \frac{ 2 L \eta \delta^2}{\mu}  + \frac{2 \gamma   \rho \psi (1-\alpha)}{\mu \eta}, \nonumber
\end{align}
where $\bm{w}_{i}^0$ is client $i$'s initial model.
 \end{thm}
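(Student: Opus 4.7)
The plan is to reduce the \alg update to a perturbed SGD step for each benign client and then apply the standard one-step descent recursion for strongly convex smooth objectives. First I would rewrite Eq.~(\ref{our_agg_combine}) additively as
\begin{equation*}
\bm{w}_i^{t+1} = \bm{w}_i^{t+\frac{1}{2}} + (1-\alpha)\bm{e}_i^t, \quad \bm{e}_i^t := \frac{1}{|\mathcal{S}_i^t|}\sum_{j\in\mathcal{S}_i^t}\bigl(\bm{w}_j^{t+\frac{1}{2}} - \bm{w}_i^{t+\frac{1}{2}}\bigr).
\end{equation*}
The key robustness consequence of the acceptance rule in Eq.~(\ref{our_agg}) is that every summand in $\bm{e}_i^t$ has norm at most $\gamma\exp(-\kappa\lambda(t))\|\bm{w}_i^{t+\frac{1}{2}}\|$, which together with Assumption~\ref{assumption_4} and $\kappa\lambda(t)>0$ yields the uniform deterministic bound $\|\bm{e}_i^t\|\le \gamma\psi$. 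Crucially this holds regardless of how many or which neighbors of client $i$ are malicious, so no knowledge of $c_i$ is needed.

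Next I would apply $L$-smoothness (Assumption~\ref{assumption_2}) to the aggregation step, using Cauchy--Schwarz, the bound $\|\nabla F(\bm{w}_i^{t+\frac{1}{2}})\|\le \rho$, and the hypothesis $\gamma\le \rho/(L\psi(1-\alpha))$ to obtain a bound of the form
\begin{equation*}
F(\bm{w}_i^{t+1}) \le F(\bm{w}_i^{t+\frac{1}{2}}) + C\cdot\gamma\rho\psi(1-\alpha)
\end{equation*}
for a small absolute constant $C$. The choice of $\gamma$ is exactly what is needed to ensure that the quadratic term $\tfrac{L}{2}(1-\alpha)^2\|\bm{e}_i^t\|^2$ is dominated by the same $\gamma\rho\psi(1-\alpha)$ scale as the linear cross term, so the overall perturbation error per round is $\mathcal{O}(\gamma\rho\psi(1-\alpha))$.

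For the SGD half-step $\bm{w}_i^{t+\frac{1}{2}} = \bm{w}_i^t - \eta\bm{g}(\bm{w}_i^t)$, I would carry out the standard strongly-convex SGD one-step analysis: $L$-smoothness plus Assumption~\ref{assumption_3} gives $\EX[F(\bm{w}_i^{t+\frac{1}{2}})\mid\bm{w}_i^t] \le F(\bm{w}_i^t) - \eta(1-L\eta/2)\|\nabla F(\bm{w}_i^t)\|^2 + L\eta^2\delta^2/2$, and then strong convexity $\|\nabla F(\bm{w}_i^t)\|^2 \ge 2\mu(F(\bm{w}_i^t)-F(\bm{w}^*))$ combined with $\eta\le 1/(4L)$ collapses this to $\EX[F(\bm{w}_i^{t+\frac{1}{2}}) - F(\bm{w}^*)\mid\bm{w}_i^t] \le (1-\mu\eta)(F(\bm{w}_i^t) - F(\bm{w}^*)) + C'L\eta^2\delta^2$. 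Chaining with the perturbation bound produces a recursion of the form $\EX\phi_{t+1} \le (1-\mu\eta)\EX\phi_t + C'L\eta^2\delta^2 + C\gamma\rho\psi(1-\alpha)$, with $\phi_t := F(\bm{w}_i^t)-F(\bm{w}^*)$. Unrolling over $T$ rounds and using the geometric sum $\sum_{k=0}^{T-1}(1-\mu\eta)^k \le 1/(\mu\eta)$ delivers the three summands in the theorem; the factors $2L\eta\delta^2/\mu$ and $2\gamma\rho\psi(1-\alpha)/(\mu\eta)$ are convenient upper bounds on the resulting expressions.

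The main obstacle I expect is the bookkeeping that ensures both the cross term and the squared-perturbation term collapse onto the single scale $\gamma\rho\psi(1-\alpha)$ after substituting $\gamma\le \rho/(L\psi(1-\alpha))$. This hypothesis on $\gamma$ is the sole reason the quadratic-in-$\gamma$ contribution is absorbed into a linear-in-$\gamma$ error floor, and verifying that the combined constant matches the advertised bound is the delicate calculation. Everything else is standard strongly-convex SGD analysis on top of the similarity-filter perturbation bound, with the argument being entirely local to a single benign client.
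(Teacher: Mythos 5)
Your proposal is correct and follows essentially the same route as the paper's proof: the deterministic bound $\|\bm{e}_i^t\|\le\gamma\psi$ extracted from the acceptance rule and Assumption~4, the use of $\gamma\le\rho/(L\psi(1-\alpha))$ to absorb the quadratic perturbation term into the linear $\gamma\rho\psi(1-\alpha)$ scale, the PL inequality from strong convexity, and the geometric-sum telescoping are all exactly the paper's steps. The only (immaterial) difference is that you apply $L$-smoothness separately to the SGD half-step and the aggregation step, whereas the paper applies it once to the combined update $\bm{w}_i^{t+1}-\bm{w}_i^t=-\eta\bm{g}(\bm{w}_i^t)+(1-\alpha)\bm{e}_i^t$.
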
	
\begin{proof}
The proof is relegated to Appendix~\ref{sec:appendix_1}.
\end{proof}

Theorem~\ref{Theorem1} says  that by choosing an appropriate learning rate $\eta$ and $\gamma$, for any benign client, the final learned model converges to the neighborhood of the global minimum. 
More importantly, the linear convergence rate is the {\em same} as the optimal convergence rate in Byzantine-free strongly convex optimization algorithm. 
In other words, Byzantine attacks do not hurt the convergence rate of our proposed method.
Note that in this paper, we only consider standard gradient descent, without using the higher-order information or relying on accelerated methods.

In the non-convex setting, we have the following result.
\begin{thm}[The Non-convex Setting]
\label{Theorem2}
Under Assumptions~\ref{assumption_2}-\ref{assumption_6}, clients select intermediate models according to Eq.~(\ref{our_agg}).
Select  $\eta$ and $\gamma$ such that $\eta \le \min\{\frac{1}{4L},\frac{1}{\mu}\}$ and $\gamma \le \frac{\rho }{L\psi(1-\alpha)}$.
In addition, $\kappa \cdot \lambda(t)>0$.
After $T$ training rounds, the following holds for any benign client $i \in \mathcal{B}$:
\begin{align}
\frac{1}{T}  \sum\limits_{t=0}^{T-1}   \EX [   \| \nabla F(\bm{w}_{i}^t) \|^2 ] 
 \le \frac{2  [   F(\bm{w}_{i}^0)  -  F(\bm{w}^*) ]}{\eta T}   + 4 L \eta \delta^2  
 + \frac{4 \gamma   \rho\psi (1-\alpha)}{\eta}. \nonumber
\end{align}
 \end{thm}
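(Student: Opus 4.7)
The plan is to mirror the non-convex SGD descent-lemma approach, treating the DFL update in Eq.~(\ref{our_agg_combine}) as a perturbed gradient step in which the perturbation is the aggregation offset from the neighbors. Since strong convexity is dropped (Assumption~\ref{assumption_1} is not invoked), I expect no contractive one-step recursion; instead the argument will produce a per-round descent inequality that telescopes to a time-averaged gradient bound. Most of the machinery should be reusable from the proof of Theorem~\ref{Theorem1} in Appendix~\ref{sec:appendix_1}, with the strong-convexity step simply replaced by a direct telescoping sum.

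First, I would rewrite the update as $\bm{w}_{i}^{t+1} - \bm{w}_{i}^{t} = -\eta \bm{g}(\bm{w}_{i}^{t}) + (1-\alpha)\bm{e}_{i}^{t}$, where $\bm{e}_{i}^{t} := \frac{1}{|\mathcal{S}_{i}^{t}|}\sum_{j\in\mathcal{S}_{i}^{t}}\bigl(\bm{w}_{j}^{t+\frac{1}{2}} - \bm{w}_{i}^{t+\frac{1}{2}}\bigr)$ is the aggregation perturbation. Using the acceptance condition in Eq.~(\ref{our_agg}) together with $\kappa\lambda(t)>0$ (so $\exp(-\kappa\lambda(t))\le 1$) and Assumption~\ref{assumption_4}, each summand obeys $\|\bm{w}_{j}^{t+\frac{1}{2}}-\bm{w}_{i}^{t+\frac{1}{2}}\|\le \gamma\psi$; by the triangle inequality, $\|\bm{e}_{i}^{t}\|\le \gamma\psi$. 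This single inequality is exactly what quarantines the adversarial influence.

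Second, I would apply $L$-smoothness (Assumption~\ref{assumption_2}) to obtain
\begin{align*}
F(\bm{w}_{i}^{t+1})-F(\bm{w}_{i}^{t})
\le \langle\nabla F(\bm{w}_{i}^{t}),\,\bm{w}_{i}^{t+1}-\bm{w}_{i}^{t}\rangle + \tfrac{L}{2}\|\bm{w}_{i}^{t+1}-\bm{w}_{i}^{t}\|^{2}.
\end{align*}
Taking conditional expectation and using Assumption~\ref{assumption_3}, the linear term contributes $-\eta\|\nabla F(\bm{w}_{i}^{t})\|^{2}$ from the stochastic-gradient part and at most $(1-\alpha)\rho\gamma\psi$ from the perturbation part via Cauchy--Schwarz and Assumption~\ref{assumption_4}. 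For the quadratic term I would use $\|a+b\|^{2}\le 2\|a\|^{2}+2\|b\|^{2}$ and the variance bound $\EX\|\bm{g}(\bm{w}_{i}^{t})\|^{2}\le \|\nabla F(\bm{w}_{i}^{t})\|^{2}+\delta^{2}\le \rho^{2}+\delta^{2}$, so that with $\eta\le 1/(4L)$ the $\tfrac{L}{2}\cdot 2\eta^{2}\|\bm{g}\|^{2}$ contribution is dominated by the negative $-\eta\|\nabla F\|^{2}$ term, leaving a clean descent inequality of the form $\EX F(\bm{w}_{i}^{t+1})\le \EX F(\bm{w}_{i}^{t}) - \tfrac{\eta}{2}\EX\|\nabla F(\bm{w}_{i}^{t})\|^{2} + 2L\eta^{2}\delta^{2} + 2\gamma\rho\psi(1-\alpha)$.

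Third, I would sum this inequality from $t=0$ to $T-1$, telescope on $F$, invoke $\EX F(\bm{w}_{i}^{T})\ge F(\bm{w}^{*})$ (since $\bm{w}^{*}$ is a stationary point in the non-convex setting, I would use it only as a lower bound on $F$, which is a standard assumption implicit in non-convex convergence statements), then divide by $\eta T/2$ to recover the three claimed terms: the $\mathcal{O}(1/(\eta T))$ optimization term, the $\mathcal{O}(\eta\delta^{2})$ stochastic-noise floor, and the $\mathcal{O}(\gamma\rho\psi(1-\alpha)/\eta)$ adversarial floor. The main obstacle, as in Theorem~\ref{Theorem1}, is cleanly bounding the cross term $\langle\nabla F(\bm{w}_{i}^{t}),\bm{e}_{i}^{t}\rangle$: the $\bm{w}_{j}^{t+\frac{1}{2}}$ from malicious $j\in\mathcal{S}_{i}^{t}$ may be adversarially correlated with both $\nabla F(\bm{w}_{i}^{t})$ and $\bm{g}(\bm{w}_{i}^{t})$, but the worst-case Cauchy--Schwarz bound, combined with the similarity filter of Eq.~(\ref{our_agg}) capping each accepted $\|\bm{w}_{j}^{t+\frac{1}{2}}-\bm{w}_{i}^{t+\frac{1}{2}}\|$ at $\gamma\psi$, suffices without any independence assumption. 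The choice $\gamma\le \rho/(L\psi(1-\alpha))$ is precisely what keeps the quadratic $\|\bm{e}_{i}^{t}\|^{2}$ contribution on the same order as $\gamma\rho\psi(1-\alpha)$, so that the two perturbation terms combine into the single final constant.
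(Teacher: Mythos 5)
Your proposal is correct and follows essentially the same route as the paper: the paper's Appendix B simply reuses the per-round descent inequality $\EX[F(\bm{w}_{i}^{t+1})]\le \EX[F(\bm{w}_{i}^{t})-\tfrac{\eta}{2}\|\nabla F(\bm{w}_{i}^{t})\|^{2}+2L\eta^{2}\delta^{2}+2\gamma\rho\psi(1-\alpha)]$ already established in the Theorem~\ref{Theorem1} proof (via the same perturbed-gradient decomposition, the $\gamma\psi$ bound on the aggregation offset, Cauchy--Schwarz on the cross term, and the condition $\gamma\le\rho/(L\psi(1-\alpha))$ to absorb the quadratic offset term), then rearranges, telescopes, and divides by $\eta T/2$ exactly as you describe. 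Your explicit invocation of $\EX[F(\bm{w}_{i}^{T})]\ge F(\bm{w}^{*})$ makes precise a step the paper leaves implicit, but the argument is the same.
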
	
\begin{proof}
The proof is relegated to Appendix~\ref{sec:appendix_2}.
\end{proof}

Theorem~\ref{Theorem2} shows that by 
selecting suitable parameters, the final model of each benign client could converge to the neighborhood of a stationary point. 
The sub-linear convergence rate aligns with the optimal convergence rate in a Byzantine-free non-convex optimization algorithm. 
In other words, poisoning attacks do not impact the convergence rate of our \alg in the non-convex setting.

\begin{remark} 
Assumption~\ref{assumption_3} states that the training data among clients are independent and identically distributed (IID), but this is not required for our experiments.
In Assumption~\ref{assumption_6}, it is posited that the subgraph $\mathcal{G}_B$, which is formed by benign clients, remains connected after eliminating all malicious clients and their corresponding edges. This assumption is critical, as it prevents the scenario where a benign client is exclusively surrounded by malicious neighbors.
Our \alg does not require benign clients to be aware of the architecture of communication graph $\mathcal{G}$. Additionally, in our theoretical analysis, there is no necessity for $\mathcal{G}$ to have a particular architecture, such as the requirement for it to be a complete graph, as assumed in~\cite{el2021collaborative}.
\end{remark}

\begin{remark} 
{
For a strongly-convex (or convex) objective, our Theorem~\ref{Theorem1} guarantees the convergence to a global optimal solution. For a non-convex objective, since guaranteeing
convergence to a global optimal is NP-Hard~\cite{nesterov2018lectures}, guaranteeing convergence to a stationary point (local optimal) is the best one can hope for in the non-convex case.
The convergence rates of our \alg method in both strongly convex and non-convex scenarios match the  best-known convergence rates of their Byzantine-free counterparts.
We also note that using our proposed \alg does not require knowing the precise values of certain parameters introduced in assumptions, such as \( \delta\), \( \psi\), and \( \rho\). Clients only need to check whether Eq.~(\ref{our_agg}) is met while filtering out malicious local models.
Since \(\gamma \cdot \exp(-\kappa \cdot \lambda(t))\) is always upper bounded by \(\gamma\), Theorem~\ref{Theorem1} and Theorem~\ref{Theorem2} rely on the condition that the value of \(\gamma\) is bounded.
}
\end{remark}


\section{Experiments} \label{sec:exp}

\subsection{Experimental Setup}

\subsubsection{Datasets and Poisoning Attacks}

In our experiment, we assess our method and various baselines across multiple datasets, including a synthetic dataset and four real-world datasets: MNIST~\cite{lecun2010mnist}, Fashion-MNIST~\cite{xiao2017online}, Human Activity Recognition (HAR)~\cite{anguita2013public}, and CelebA~\cite{liu2015faceattributes}. Notably, HAR, sourced from 30 smartphone users (each representing a client), exemplifies a real-world FL dataset. Details on the creation of the synthetic dataset and specifics of the other four datasets are available in Appendix~\ref{sec:datasets_app}.

We first consider seven poisoning attacks, including two data poisoning attacks (Label flipping (LF) attack~\cite{tolpegin2020data}, Feature attack) and five model poisoning attacks (Gaussian (Gauss) attack~\cite{blanchard2017machine}, Krum attack~\cite{fang2020local}, Trim attack~\cite{fang2020local}, Backdoor attack~\cite{bagdasaryan2020backdoor,gu2017badnets}, and Adaptive (Adapt) attack~\cite{shejwalkar2021manipulating}).
Note that Backdoor attack is a targeted attack model, where the attacker aims to craft the system such that the final trained model makes incorrect predictions on inputs selected by the attacker.
Adapt attack is the  most powerful attack, where the attacker has full knowledge of the system, including all benign clients' local models and the proposed aggregation rule \alg used by clients.
The attacker in Adapt attack introduces minor perturbation to the benign local models to create malicious models.
The detailed description of seven poisoning attacks is shown in Appendix~\ref{sec:attack_app}.
Additionally, we evaluate two other attacks in Section~\ref{sec:discussion_limitation}: ``a little is enough'' (LIE) attack~\cite{baruch2019little}, and the Dissensus attack~\cite{he2022byzantine}, a new form of attack specifically designed for DFL systems.

\subsubsection{Comparison DFL Methods}
We evaluate the effectiveness of our proposed \alg by comparing it with the following eight methods. 
Note that FedAvg~\cite{McMahan17}, Krum~\cite{blanchard2017machine}, Trimmed Mean (Trim-mean)~\cite{Yin18}, Median~\cite{Yin18}, and FLTrust~\cite{cao2020fltrust} were originally designed for server-assisted FL, which are adapted to the DFL setting.

\myparatight{FedAvg~\cite{McMahan17}} 
In the FedAvg method, a client collects local models from its neighbor clients, then takes the weighted average of all collected models.

\myparatight{Krum~\cite{blanchard2017machine}} 
When client $i \in \mathcal{V}$ gets $\left| \mathcal{N}_i \right|$ local models from neighbors, it chooses the model closest in Euclidean distance to its $| \mathcal{N}_i |- \lceil c_i| \mathcal{N}_i | \rceil -2$ nearest models. Here, $\mathcal{N}_i$ is client $i$'s neighbor set, with $c_i$ and $\lceil c_i| \mathcal{N}_i | \rceil$ denoting the proportion and count of malicious neighbors, respectively.

\myparatight{Trimmed Mean (Trim-mean)~\cite{Yin18}} 
Once client $i \in \mathcal{V}$ receives $\left|\mathcal{N}_i\right|$ local models from its neighbors, it first removes the largest and smallest $\lceil c_i| \mathcal{N}_i | \rceil$ elements for each dimension, then computes the average of the rest.

\myparatight{Median~\cite{Yin18}} 
In the Median rule, each client $i$ computes the coordinate-wise median of all collected $\left|\mathcal{N}_i\right|$ local models.

\myparatight{FLTrust~\cite{cao2020fltrust}}In FLTrust, client \(i\) calculates the cosine similarity between its local model \(\bm{w}_i^{t+\frac{1}{2}}\) and a neighbor's model \(\bm{w}_j^{t+\frac{1}{2}}\) upon receipt. If this similarity is positive, \(\bm{w}_j^{t+\frac{1}{2}}\) is normalized to \(\tilde{\bm{w}}_j^{t+\frac{1}{2}}\) with the same magnitude as \(\bm{w}_i^{t+\frac{1}{2}}\), followed by client \(i\) averaging all normalized models received from neighbors.

\myparatight{UBAR~\cite{guo2021byzantine}}The UBAR aggregation rule employs a two-stage process to filter out any potentially malicious information. Specifically, during training round $t$, client $i$ first identifies a subset of neighbors $\mathcal{N}_i^s$, which consists of those with the smallest sum of distance to $\bm{w}_i^{t+\frac{1}{2}}$, where $\mathcal{N}_i^s \subseteq \mathcal{N}_i$, $| \mathcal{N}_i^s   | = | \mathcal{N}_i | - \lceil c_i| \mathcal{N}_i | \rceil$.
In the second stage, client $i$ narrows down the subset even further by selecting a new subset $\mathcal{N}_i^r$ from $\mathcal{N}_i^s$, which only includes neighbors whose loss values are smaller than its own loss. Finally, client $i$ averages the local models from $\mathcal{N}_i^r$.

\myparatight{LEARN~\cite{el2021collaborative}}In LEARN, clients exchange both local model updates and local models with their neighboring clients, and utilize Trim-mean aggregation rule to combine the received local model updates and local models.
Specifically, during training round $t$, client $i$ aggregates local model updates from its neighboring clients for $\lceil \log_2 t \rceil$ times, then exchanges local models with its neighbors once.

\myparatight{Self-Centered Clipping (SCCLIP)~\cite{he2022byzantine}} 
In the SCCLIP aggregation rule, each client clips all received local models from its neighbor clients based on its own local model.

\subsubsection{Evaluation Metrics}
For the synthetic dataset, we employ maximum mean squared error (Max.MSE) as the evaluation criterion, as we train a linear regression model on this synthetic dataset. For the four real-world datasets, we use maximum testing error rate (Max.TER) and maximum attack success rate (Max.ASR) as the evaluation metrics, as these datasets are used for training classification models. For all three evaluation metrics, smaller values indicate stronger defense capabilities.

\myparatight{Maximum mean squared error (Max.MSE)} 
In the linear regression model, we first calculate the mean squared error (MSE) for each benign client's final local model. The MSE is computed as \(\text{MSE} = \frac{1}{n_{\text{test}}} \sum_{i=1}^{n_{\text{test}}}(y_i - \hat{y}_i)^2\), where \(y_i\) is the actual value, \(\hat{y}_i\) denotes the predicted value, and \(n_{\text{test}}\) is the number of testing examples. We then assess a DFL method's robustness on the synthetic dataset by selecting the maximum MSE among all benign clients.

\myparatight{Maximum testing error rate (Max.TER)~\cite{guo2021byzantine}}
Following~\cite{guo2021byzantine}, we compute the testing error rate of the final local model on each benign client, and use the maximum testing error rate among all benign clients to measure the robustness of a DFL method.

\myparatight{Maximum attack success rate (Max.ASR)}
We compute the attack success rate of the final local model on each benign client, and report the maximum attack success rate among all benign clients.
The attack success rate is the fraction of targeted testing examples classified as the attacker-chosen targeted label.

\subsubsection{Non-IID Setting} 
Training data in FL are typically Non-IID (not independently and identically distributed) across clients.
In our paper, we consider the IID setting for synthetic dataset, and Non-IID setting for four real-world datasets.
We use the way in~\cite{fang2020local} to simulate the Non-IID setting for MNIST, Fashion-MNIST datasets.
In this approach, for a dataset containing $z$ classes, clients are first divided into $z$ random groups. A training example labeled $h$ is allocated to clients in group $h$ with a specific probability $p$, and to those in different groups with a probability of $\frac{1-p}{z-1}$. Within the same group, the training examples are evenly distributed among the clients. 
An increase in $p$ results in a greater level of Non-IID. In our experiment, we set $p=0.8$, indicating a substantial imbalance in the distribution of labels among clients. For example, 80\% of the training data for a client is concentrated in a single class.
In Section~\ref{sec:discussion_limitation}, we explore a more extreme Non-IID scenario where each client's training data is limited to just a few classes (e.g., three).
The HAR dataset's training data are inherently heterogeneous, eliminating the need for Non-IID simulation. Similarly, the CelebA dataset, processed as per~\cite{caldas2018leaf}, already exhibits Non-IID characteristics, so additional Non-IID simulation is unnecessary.

\subsubsection{Parameter Setting} 
We assume that there are a total of 20 clients for synthetic, MNIST, Fashion-MNIST, and CelebA datasets. 
Note that each smartphone user can be seen as a client in the HAR dataset, thus there are 30 clients in total for that dataset.
By default, we assume that 20\% of clients are malicious.
In our experiments, we train a linear regression model on the synthetic dataset.
Note that the population risk of the linear regression model satisfies Assumption~\ref{assumption_1} and Assumption~\ref{assumption_2}.
We use a convolutional neural network (CNN) to train the MNIST, Fashion-MNIST, and CelebA datasets. The architecture of CNN is shown in Table~\ref{cnn_arch} in Appendix. 
For the HAR dataset, we train a logistic regression classifier.
We train 300, 2,000, 2,000, 1,000 and 1,500 rounds for synthetic, MNIST, Fashion-MNIST, HAR and CelebA datasets, respectively.
The learning rates are respectively set to $6\times10^{-4}$, $3\times10^{-4}$, $6\times10^{-3}$, $3\times10^{-3}$ and $5\times10^{-5}$ for five datasets.
For all datasets, we set $\alpha =0.5$, $\gamma=0.3, \kappa=1, \lambda(t) = \frac{t}{T}$.

By default, we assume that all clients use the same initial local model, parameters $\alpha$, $\gamma, \kappa, \lambda(t)$, and aggregation rule $\text{AGG}$. We will also explore the settings where clients have different initial local models, $\alpha$, and $\text{AGG}$.
Aligned with existing work~\cite{pasquini2022privacy}, we consider regular graph as the default communication graph, where each node has an equal number of neighboring nodes.
We use regular-($n$, $v$) to denote a regular graph with $n$ nodes, where each node is connected to $v$ neighbors.
By default, we use a regular-(20, 10) graph for the synthetic, MNIST, Fashion-MNIST, and CelebA datasets. 
The HAR dataset inherently consists of 30 clients, so we consider a regular-(30, 15) graph structure for HAR.
Fig.~\ref{fig_no_label_figure_regular_n20_10deg} and Fig.~\ref{fig_no_label_figure_regular_n30_15deg} in Appendix show the topologies of regular-(20, 10) and regular-(30, 15) graphs, respectively.
Note that in Fig.~\ref{fig_diff_graph}, each 
node represents a client.
The nodes highlighted in red indicate malicious clients, while the nodes highlighted in blue represent benign clients.
By default, we assume that the communication graph $\mathcal{G}$ is static, i.e., the edges between clients will not change over time.  
We will also explore the time-varying communication graph setting, where each client has certain possibility of not sharing information with its neighboring clients in each round.
We perform experiments on four NVIDIA Tesla V100 GPUs, repeating each experiment 10 times and averaging the results.
Default results are reported for the MNIST dataset using a regular-(20, 10) graph, with 4 out of 20 clients being malicious.

\begin{table}[htbp]
	\centering
\scriptsize
\setlength{\doublerulesep}{3\arrayrulewidth}
     \caption{Results of different DFL methods. The results of Backdoor are presented as ``Max.TER / Max.ASR''.}
    \label{tab_all_datasets_app}%
     \subfloat[MNIST dataset.]
    {
    	\begin{tabular}{|c|c|c|c|c|c|c|c|c|c|}
        \hline
        Method & No & LF & Feature & Gauss  & Krum & Trim & Backdoor & Adapt \\
        \hline
        \hline
        FedAvg & 0.10 & 0.10 & 0.90 & 0.90 & 0.91 & 0.91 & 0.90 / 1.00 & 0.90 \\
        Krum & 0.10 & 0.12 & 0.90 & 0.10 & 0.10 & 0.15 & 0.15 / 0.01 & 0.14  \\
        Trim-mean & 0.11 & 0.12 & 0.49 & 0.11 & 0.82 & 0.81 & 0.83 / 0.72 & 0.87  \\
        Median & 0.14 & 0.14 & 0.45 & 0.15 & 0.52 & 0.63 & 0.66 / 0.01 & 0.66 \\
        FLTrust & 0.10 & 0.11 & 0.90 & 0.13 & 0.10 & 0.88 & 0.10 / 0.73 & 0.10 \\
        UBAR & 0.14 & 0.14 & 0.90 & 0.14 & 0.14 & 0.14 & 0.15 / 0.01 & 0.14 \\
        LEARN & 0.10 & 0.10 & 0.30 & 0.12 & 0.18 & 0.57 & 0.12 / 0.03 & 0.44 \\
        SCCLIP & 0.10 & 0.10 & 0.10 & 0.11 & 0.91 & 0.91 & 0.10 / 0.01 & 0.91 \\
        \rowcolor{greyL}
        \alg & 0.10 & 0.10 & 0.11 & 0.10 & 0.10 & 0.11 & 0.11 / 0.01 & 0.11  \\
        \hline
    \end{tabular}%
     \label{tab_all_datasets_app_mnist}%
    }
     \quad
    \subfloat[Fashion-MNIST dataset.]
    {
    	\begin{tabular}{|c|c|c|c|c|c|c|c|c|c|}
        \hline
        Method & No & LF & Feature & Gauss & Krum & Trim & Backdoor & Adapt \\
        \hline
        \hline
        FedAvg & 0.16 & 0.21 & 0.90 & 0.90 & 0.90 & 0.90 & 0.90 / 1.00 & 0.90 \\
        Krum & 0.26 & 0.27 & 0.90 & 0.40 & 0.37 & 0.48 & 0.28  / 0.03 & 0.42  \\
        Trim-mean & 0.25 & 0.27 & 0.90 & 0.27 & 0.77 & 0.87 & 0.90 / 1.00 & 0.76 \\
        Median & 0.26 & 0.26 & 0.90 & 0.28 & 0.54 & 0.74 & 0.90 / 1.00 & 0.69   \\
        FLTrust & 0.19 & 0.20 & 0.19 & 0.90 & 0.25 & 0.90 & 0.19 / 0.99 & 0.90  \\
        UBAR & 0.21 & 0.23 & 0.90 & 0.21 & 0.22 & 0.22 & 0.24 / 0.03 & 0.23  \\
        LEARN & 0.23 & 0.26 & 0.47 & 0.23 & 0.34 & 0.37 & 0.23 / 0.90 & 0.51  \\
        SCCLIP & 0.20 & 0.25 & 0.90 & 0.33 & 0.89 & 0.89 & 0.90 / 1.00 & 0.52 \\
        \rowcolor{greyL}
        \alg & 0.16 & 0.17 & 0.17 & 0.16 & 0.17 & 0.17 & 0.17 / 0.02 & 0.17  \\
        \hline
    \end{tabular}%
    \label{tab_all_datasets_app_Fashion}%
    }
      \quad
		\subfloat[HAR dataset.]
	{
    	\begin{tabular}{|c|c|c|c|c|c|c|c|c|c|}
	\hline
	Method & No & LF & Feature & Gauss & Krum & Trim & Backdoor & Adapt  \\
	\hline
	\hline
        FedAvg & 0.04 & 0.04 & 0.45 & 0.98 & 0.32 & 0.38 & 0.82 / 1.00 & 0.99 \\
        Krum & 0.10 & 0.10 & 0.10 & 0.10 & 0.10 & 0.10 & 0.10 / 0.01 & 0.10 \\
        Trim-mean & 0.05 & 0.06 & 0.06 & 0.06 & 0.09 & 0.17 & 0.08 / 0.01 & 0.09 \\
        Median & 0.06 & 0.06 & 0.08 & 0.06 & 0.07 & 0.17 & 0.07 / 0.01 & 0.08 \\
        FLTrust & 0.04 & 0.04 & 0.08 & 0.07 & 0.04 & 0.31 & 0.04 / 0.47 & 0.05\\
        UBAR & 0.06 & 0.06 & 0.06 & 0.06 & 0.06 & 0.12 & 0.08 / 0.03 & 0.06  \\
        LEARN & 0.04 & 0.04 & 0.04 & 0.04 & 0.06 & 0.13 & 0.05 / 0.04 & 0.06\\
        SCCLIP & 0.05 & 0.05 & 0.13 & 0.07 & 0.27 & 0.32 & 0.06 / 0.02 & 0.12 \\
	\rowcolor{greyL}
	\alg & 0.04 & 0.05 & 0.04 & 0.05 & 0.04 & 0.05 & 0.04 / 0.01 & 0.05 \\
	\hline
	\end{tabular}%
      \label{tab_all_datasets_app_HAR}%
	}
     \quad
	\subfloat[CelebA dataset.]
	{
    	\begin{tabular}{|c|c|c|c|c|c|c|c|c|c|}
		\hline
		Method & No & LF & Feature & Gauss  & Krum & Trim & Backdoor & Adapt\\
		\hline
		\hline
		FedAvg & 0.10 & 0.16 & 0.48 & 0.48 & 0.53 & 0.53 & 0.48 / 0.01 & 0.48 \\
            Krum & 0.18 & 0.26 & 0.48 & 0.30 & 0.31 & 0.18 & 0.20 / 0.26 & 0.18\\
            Trim-mean & 0.12 & 0.24 & 0.35 & 0.15 & 0.15 & 0.26 & 0.22 / 0.09 & 0.19 \\
            Median & 0.13 & 0.17 & 0.31 & 0.15 & 0.15 & 0.26 & 0.21 / 0.15 & 0.19 \\
            FLTrust & 0.10 & 0.14 & 0.10 & 0.11 & 0.11 & 0.53 & 0.12 / 0.06 & 0.10 \\
            UBAR & 0.12 & 0.13 & 0.48 & 0.13 & 0.12 & 0.14 & 0.14 / 0.13 & 0.13 \\
            LEARN & 0.31 & 0.41 & 0.37 & 0.35 & 0.51 & 0.53 & 0.32 / 0.09 & 0.53  \\
            SCCLIP & 0.10 & 0.17 & 0.14 & 0.12 & 0.43 & 0.53 & 0.48 / 0.01 & 0.53 \\
		\rowcolor{greyL}
		\alg & 0.10 & 0.12 & 0.11 & 0.11 & 0.11 & 0.11 & 0.12 / 0.02 & 0.13  \\
		\hline
	\end{tabular}%
       \label{tab_all_datasets_app_CelebA}%
	}
	\vspace{-.15in}
\end{table}%

\begin{figure*}[!t]
	\centering
	\includegraphics[scale = 0.4]{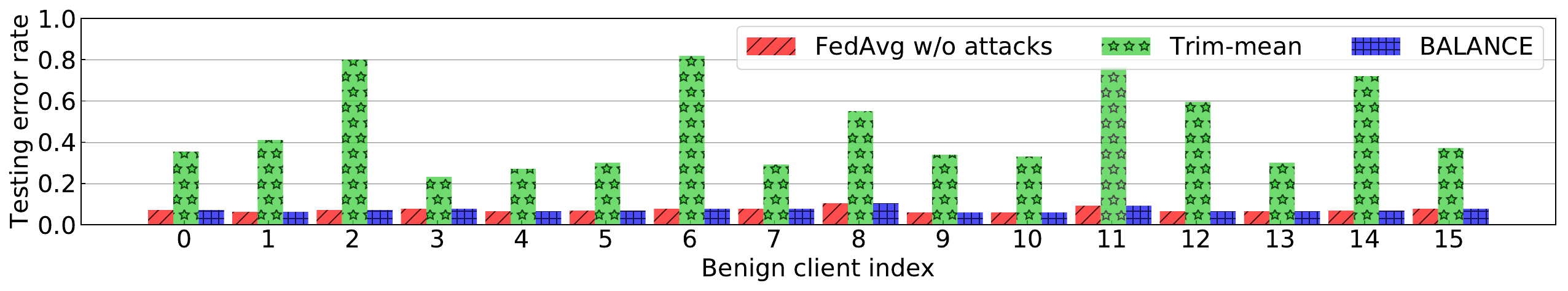}
	\caption{Testing error of each benign client of FedAvg without any attacks, Trim-mean aggregation rule and our proposed method under Trim attack.}
	\label{each_client_error_rate_trim}
		\vspace{-.1in}
\end{figure*}

\begin{figure}[!t]
	\centering
\subfloat[Communication cost.]{\includegraphics[width=0.24 \textwidth]{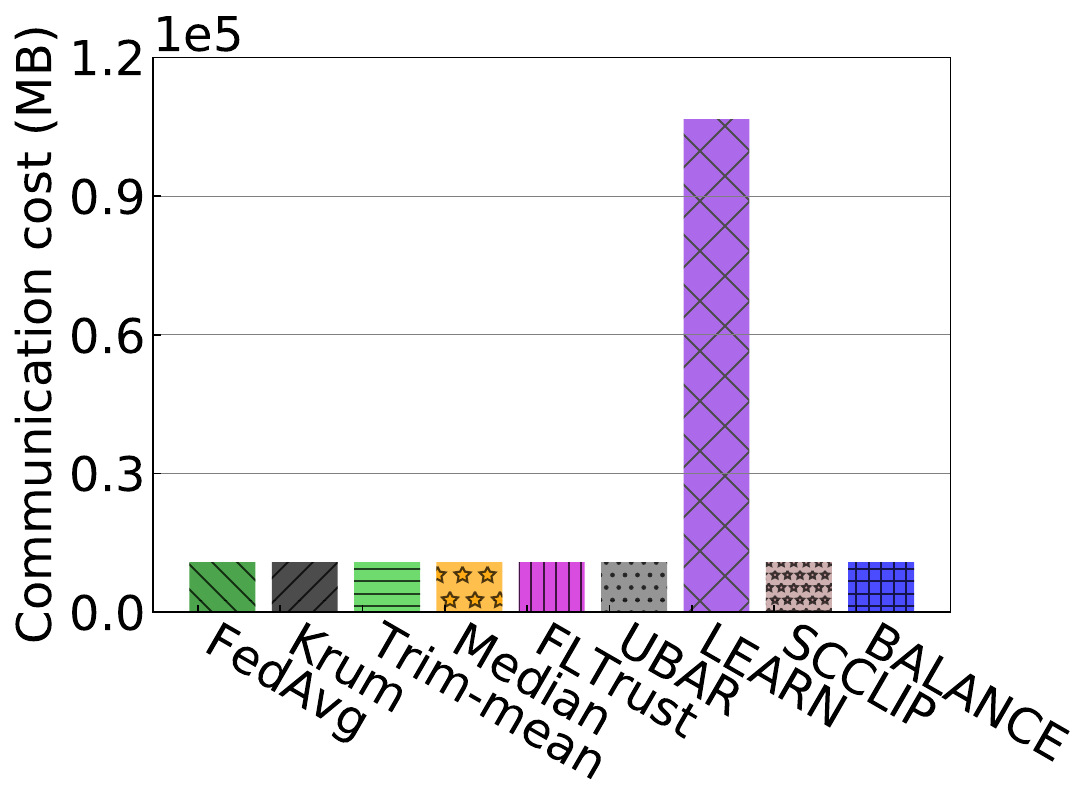}}
	\subfloat[Computation cost.]{\includegraphics[width=0.24 \textwidth]{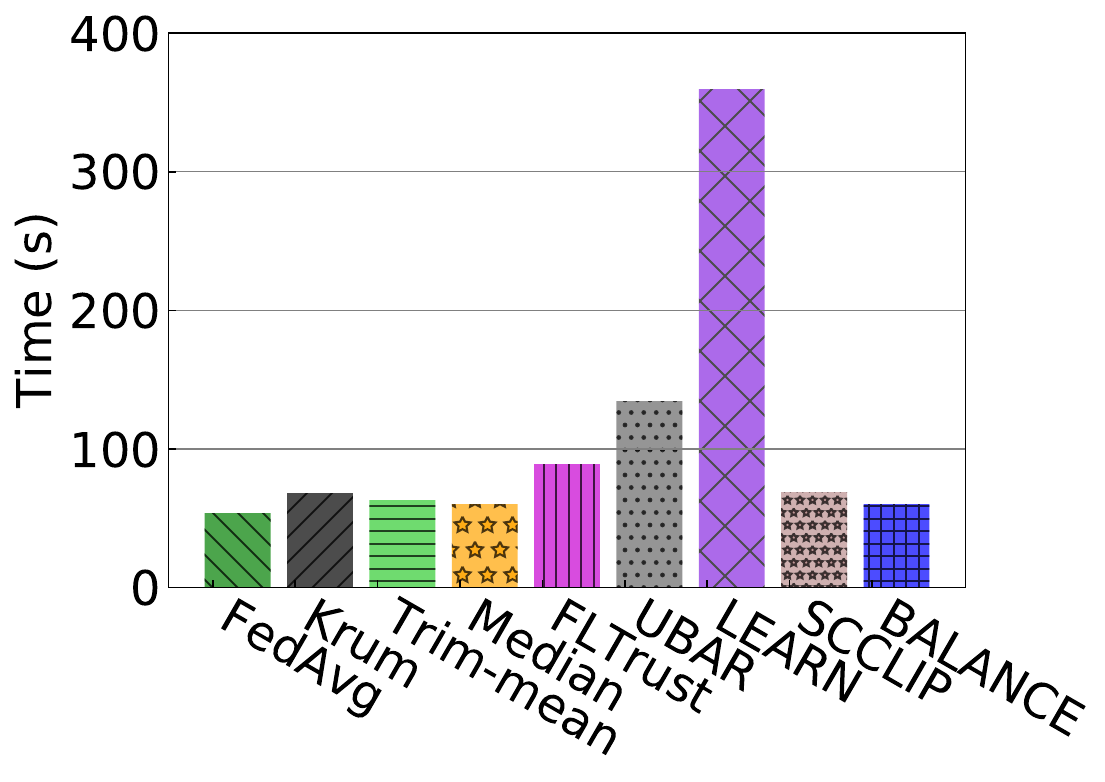}} 
 \caption{Communication and computation costs of different methods.
}
 \label{cost_fig}
\vspace{-.1in}
\end{figure}

\begin{figure}[!t]
	\centering
	\includegraphics[width=0.48\textwidth]{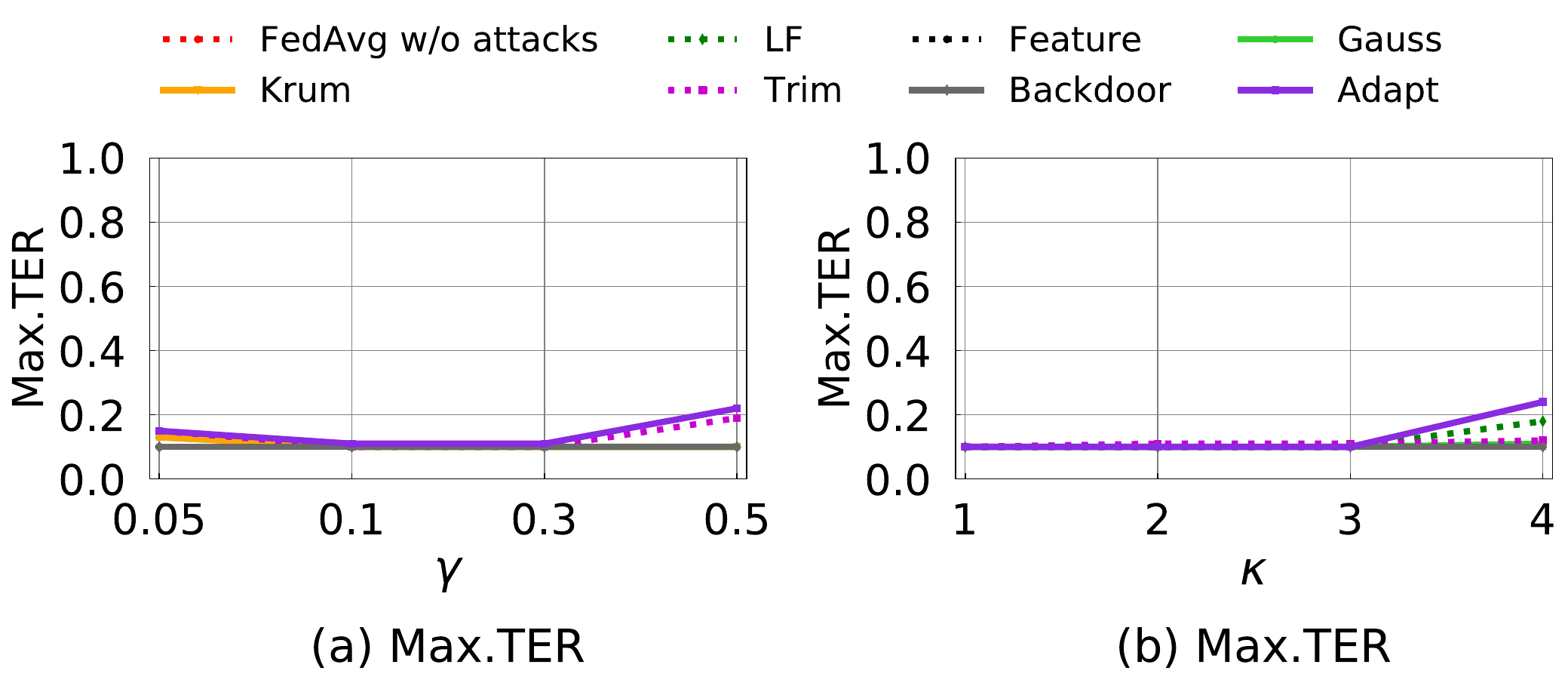}
	\caption{Impact of $\gamma$ and $\kappa$.}
	\label{diff_gamma}
		\vspace{-.1in}
\end{figure}

\subsection{Experimental Results}

\begin{figure*}[!t]
	\centering
	\includegraphics[scale = 0.46]{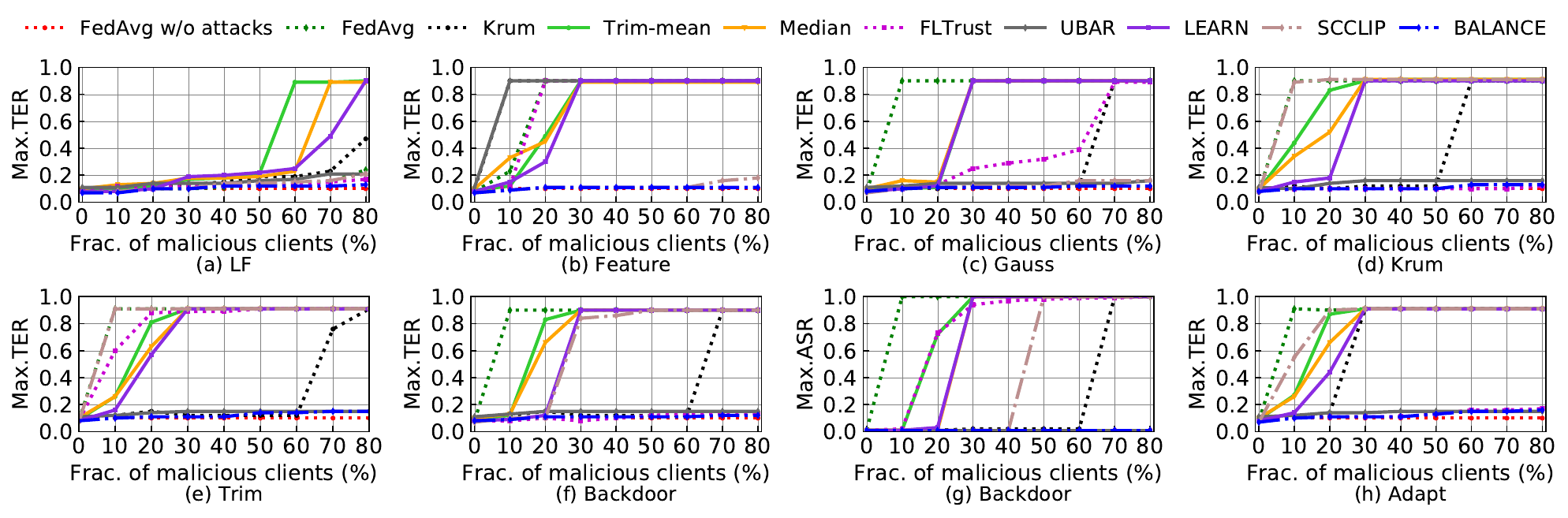}
	\caption{Impact of fraction of malicious clients. 
}
	\label{fig_attack_size_max_error}
		\vspace{-.1in}
\end{figure*}

\begin{figure*}[!t]
	\centering
	\includegraphics[scale = 0.46]{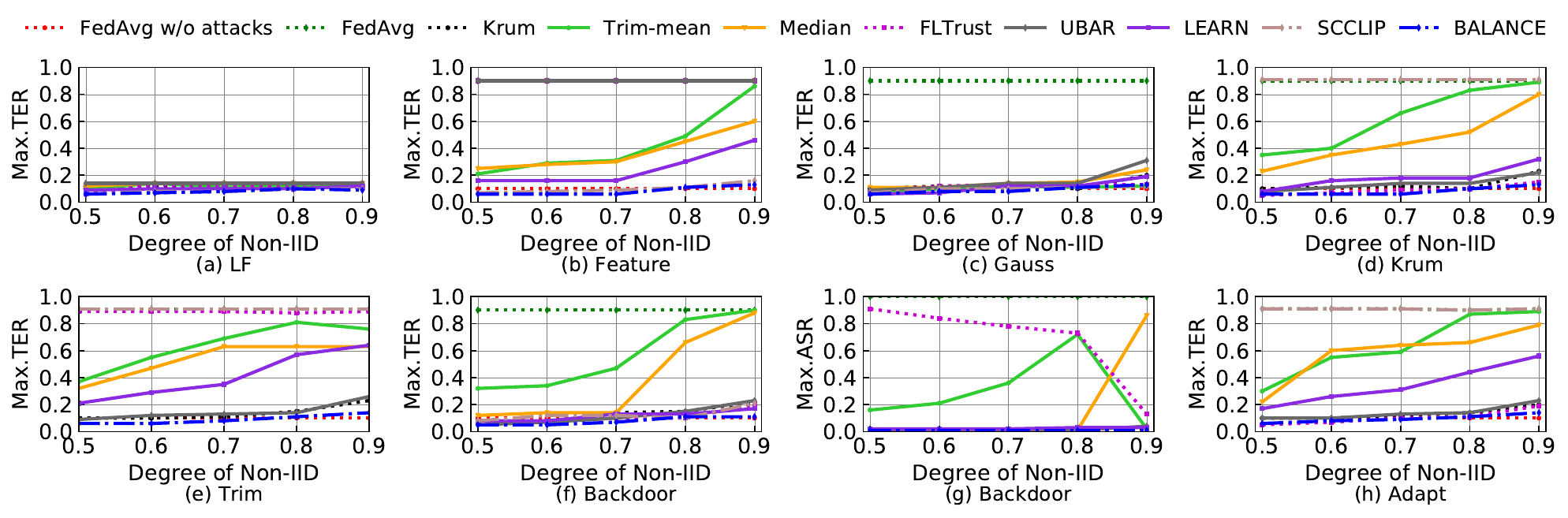}
	\caption{Impact of degree of Non-IID. 
}
	\label{non_iid_max_error}
		\vspace{-.1in}
\end{figure*}

\myparatight{Our proposed \alg is effective}
We first demonstrate the effectiveness of our proposed \alg on synthetic dataset, where  the population risk is both $\mu$-strongly and $L$-smooth, i.e., satisfying Assumption~\ref{assumption_1} and Assumption~\ref{assumption_2}.
Table~\ref{tab_all_datasets_app_Synthetic} in Appendix shows the results of different methods under different attacks on synthetic dataset. 
Each row corresponds to a different DFL method.
``No'' means all clients are benign.
We exclude Backdoor attacks for the synthetic dataset as there are no specific Backdoor attacks for regression models.
From Table~\ref{tab_all_datasets_app_Synthetic}, we observe that our proposed method outperforms baselines in both non-adversarial and adversarial scenarios, the Max.MSEs of our method are comparable to those of FedAvg without attacks.

Next, we show the performance of our method on four real-world datasets, where the trained models are highly non-convex, results are shown in Table~\ref{tab_all_datasets_app}.
The results for the Backdoor attack are given as``Max.TER / Max.ASR''.
We note that DFL method achieves comparable performance to its server-assisted counterpart.
For instance, when FedAvg aggregation rule is used and all clients are benign, the test error of the final global model is 0.09 in server-assisted FL.
We also remark that in DFL, clients could not obtain accurate final models when they independently train their models without exchanging information with other clients.
On MNIST dataset, the Max.TER is 0.29 when clients solely train models locally.

First, we observe that when there is no attack, i.e., all clients are benign, our proposed \alg achieves similar Max.TER as that of FedAvg under no attack. This means that our method achieves the ``competitive learning performance'' goal mentioned in Section~\ref{problem_statement}. For instance, on the CelebA dataset, both our proposed method and FedAvg under no attack exhibit a Max.TER of 0.10. However, the Max.TER
 of LEARN is 0.31, see Table~\ref{tab_all_datasets_app_CelebA}.
Next, we find that our proposed \alg is resilient to different types of poisoning attacks, including data poisoning and model poisoning attacks, and performs better than existing methods. 
For instance, on the MNIST dataset, Trim-mean's Max.TER increases from 0.11 to 0.81 under the Trim attack. In contrast, our method maintains a small  corresponding Max.TER of 0.11.
We observe similar results on the other three real-world datasets, indicating that our \alg achieves the ``Byzantine robustness'' goal.
We remark that \alg either matches or outperforms all existing methods known to converge to (global) optimal points. This shows \alg does not get stuck at non- or local-optimal points.
We also note that the Adapt attack demonstrates the most effective attack performance when targeting our proposed method, whereas it may perform worse when attacking other methods. The reason is that the Adapt attack is specifically designed for our method.

Fig.~\ref{each_client_error_rate_trim} shows the testing error rate of each benign client, when clients utilize FedAvg without any attacks, and when they use Trim-mean aggregation rule and our proposed \alg under Trim attack.
We observe that under Trim attack, the testing error rate of each benign client's final learned model escalates substantially when the clients adopt the Trim-mean aggregation rule to merge the local models from neighboring clients. However, our proposed method guarantees that each benign client will obtain a final model that is almost as accurate as FedAvg without any attacks.

Fig.~\ref{cost_fig} shows the communication and computation costs of various methods when we train the FL system for 2,000 rounds on the MNIST dataset, where regular-(20, 10) communication graph is used. More specifically, for a given FL method, the communication cost refers to the size of data (local model or local model update) that each client sends to its neighboring clients over 2,000 rounds, while the computation cost indicates the time that each client requires to aggregate the received local models (updates) over 2,000 rounds.
As seen in Fig.~\ref{cost_fig}, our \alg demonstrates both communication and computation efficiency. 
Conversely, other methods lead to high communication and computation costs. For example, in each training round of the LEARN method, each client must first exchange local model updates with its neighboring clients for $\lceil \log_2 t \rceil$ rounds, and then exchanges local model once. 
This information exchange process incurs significant communication and computation costs.

From Table~\ref{tab_all_datasets_app} and Table~\ref{tab_all_datasets_app_Synthetic}, we also observe that the application of server-assisted FL methods to DFL results in suboptimal performance. Specifically, FLTrust is particularly prone to Trim attack on the MNIST dataset. This vulnerability arises from FLTrust's underlying assumption that the server's root dataset mirrors the distribution of the clients' overall training data, an assumption that often does not hold in practical FL scenarios. Moreover, the training data of clients in DFL are highly heterogeneous. As a result, when a client employs FLTrust for aggregation, it tends to incorrectly classify many benign neighboring clients as malicious.

Table~\ref{tab_avg_error} in Appendix presents consensus errors~\cite{he2022byzantine,lian2017can,kong2021consensus} for various methods. Consensus error measures the average squared difference between each benign client's final model and all benign clients' average model. 
The details of this metric is shown in Section~\ref{sec:more_metrics} in Appendix. 
Our method shows low values in this  metric. We also remark that consensus error alone cannot determine whether the final learned model is accurate or not. 
A small consensus error may also imply that all benign clients have reached a poor consensus, meaning that all benign clients have learned similar but inaccurate models.
Thus we omit the consensus error metric in subsequent experiments.
Note that we also do not consider the average testing error rate (Avg.TER) of benign clients, since low Avg.TER can sometimes mask high errors in individual clients.

\myparatight{Impact of $\gamma$ and $\kappa$}
Fig.~\ref{diff_gamma} shows the results of our proposed \alg under various poisoning attacks with different values of $\gamma$ and $\kappa$.
We observe that the Max.TER of \alg is large when $\gamma$ and $\kappa$ are too large.
The reason is that for our method, a client would falsely reject local models shared by benign neighboring clients when $\kappa$ is too large, as a large $\kappa$ leads to a rapid decrease in $\gamma \exp(-\kappa \cdot  \lambda(t))$. The local models from malicious neighboring clients may get accepted if $\gamma$ is too large.

\begin{figure*}[!t]
	\centering
	\includegraphics[scale = 0.46]{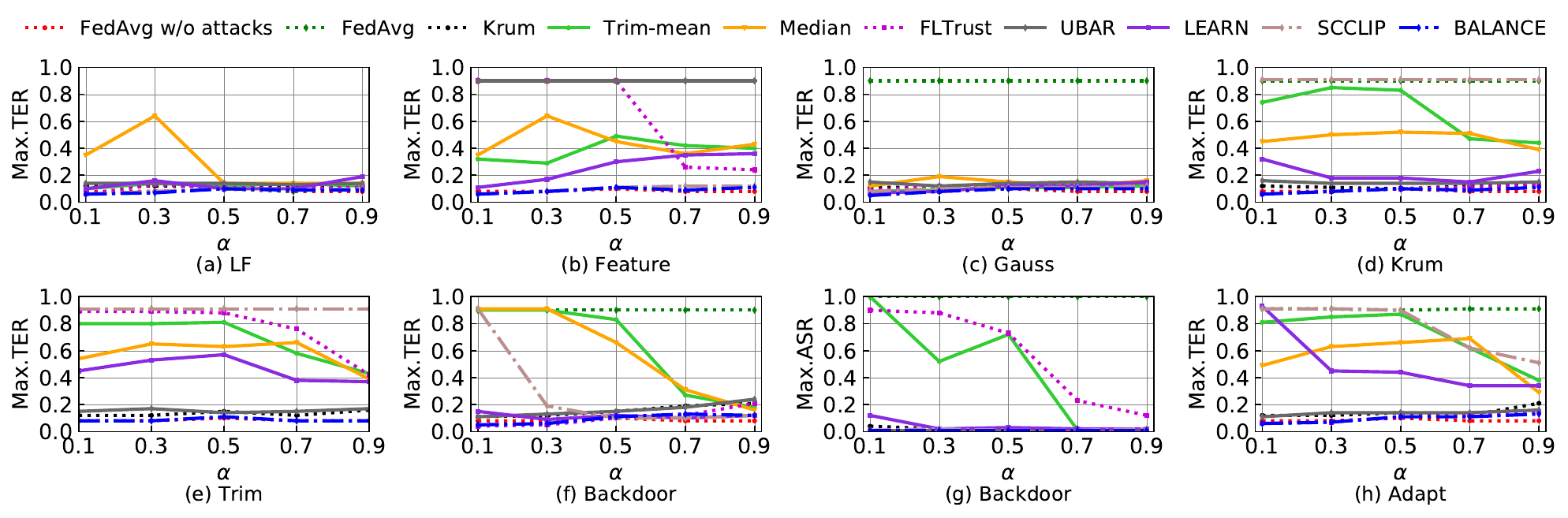}
	\caption{Impact of $\alpha$. 
}
	\label{alpha_max_error}
		\vspace{-.2in}
\end{figure*}

\myparatight{Impact of fraction of malicious clients}
Fig.~\ref{fig_attack_size_max_error} shows the results of different methods under different attacks on MNIST dataset and regular-(20, 10) communication graph, when the fraction of malicious clients varies from 0\% to 80\%, and the total number clients is set to 20.
We observe that our proposed DFL approach is the only method that can withstand 50\% of malicious clients, while existing Byzantine-robust methods lead to significant Max.TERs even when only a small proportion of clients are malicious. For example, UBAR aggregation rule is susceptible to poisoning attacks when only 10\% of clients are malicious, as seen in the case of the Feature attack strategy, where the maximum testing error rate rises to 0.90. Furthermore, our proposed approach can withstand even the most powerful Adapt attack when 80\% of clients are malicious.

\myparatight{Impact of degree of Non-IID} Fig.~\ref{non_iid_max_error} displays the results of different methods under poisoning attacks with varying degrees of Non-IID.
We observe that our proposed method outperforms existing DFL methods with all considered Non-IID scenarios. 
For example, when the degree of Non-IID is relatively low, such as 0.5, the Trim attack on the Median aggregation rule leads to a Max.TER of 0.32. However, for our proposed method, the Max.TERs of all benign clients are not significantly high, even with highly heterogeneous training data across all clients.

\myparatight{Impact of $\alpha$}
According to Eq.~(\ref{local_model_agg}), in DFL, each client utilizes a parameter called $\alpha$ to balance the combination of its local model and those of its neighboring clients. A higher value of $\alpha$ indicates greater trust in the client's own local model, while a smaller value of $\alpha$ results in more weight placed on the aggregated neighboring models.
Note that by default, we assume that all clients use the same $\alpha$.
Fig.~\ref{alpha_max_error} shows the results of different methods under various poisoning attacks, when we vary the value of $\alpha$, where other parameters are set to their default settings.
We can observe that in a particular DFL method, when the poisoning attack is weak, clients can achieve greater model accuracy by setting a smaller value of the trade-off parameter $\alpha$. This means that clients can benefit more by collaborating with others and giving more weight to the models received from their neighboring clients. For example, for our proposed \alg under Gauss attack, the Max.TERs are 0.05 and 0.10 when $\alpha$ is set to 0.1 and 0.9, respectively.
However, when the attack is strong, a smaller value of $\alpha$ may result in a larger Max.TER. This is because benign clients may receive malicious models from their neighboring clients, and if the DFL method is not robust, giving more weight to the neighboring models through a smaller $\alpha$ may lead to a larger Max.TER. For example, under Trim attack, when $\alpha=0.1$, the Max.TER of the Trim-mean method is 0.80, while when $\alpha$ is set to 0.9, the Max.TER is 0.43. 
Note that in the extreme case where $\alpha=1$, which means each client trains its own local model independently without sharing information with others. In our experiment, we find that the Max.TER is 0.29 when each client only uses its own local training data to train the model. 

The paper~\cite{pasquini2022privacy} employs a model aggregation technique where each client aggregates its model as $\bm{w}_i^{t+1} = \text{AGG} \{     \bm{w}_j^{t+\frac{1}{2}}, j \in \widehat{\mathcal{N}_i} \}$, where $\widehat{\mathcal{N}_i} = \mathcal{N}_i \cup \{ i\}$, $\mathcal{N}_i$ is the set of neighbors of client $i$ (not including client $i$ itself). Our proposed method is compared to existing DFL methods using this aggregation setting, and the results are shown in Table~\ref{tab_aggre_all} in Appendix. 
We observe that our proposed method is also robust against various poisoning attacks and outperforms baseline methods 
under this  setting.
We also observe that if clients aggregate their models using the setting suggested in~\cite{pasquini2022privacy}, the Max.TERs of existing defenses without attacks are very large.
For instance, the Max.TERs of Krum and UBAR are respectively 0.18 and 0.25 when all clients are benign, however, the corresponding Max.TERs are 0.10 and 0.14 (see the results in Table~\ref{tab_all_datasets_app_mnist}), respectively when clients perform aggregation based on Eq.~(\ref{local_model_agg}).

\myparatight{Clients use different $\alpha$ or different aggregation rules}
By default, in our experiments, clients use the same $\alpha$ and follow the same aggregation rule to combine their local models with their neighboring clients' models. 
In this section, we first investigate the scenario where different clients use different $\alpha$ values. 
In our experiments, each client randomly samples its $\alpha$ value from the interval [0, 1]. We consider two cases: Case I and Case II. In Case I, each client randomly samples its $\alpha$ value from the interval [0, 1] before the training process begins, and then $\alpha$ remains fixed for that client throughout the training process. 
In Case II, each client randomly samples its $\alpha$ value from the interval [0, 1] in each training round, i.e., $\alpha$ changes during training for each client.
Note that in Case I and Case II, all clients still use the same aggregation rule.
The results for Case I are shown in Table~\ref{tab_diff_alpha} in Appendix, the results for 
Case II are shown in Table~\ref{tab_diff_alpha_app} in Appendix.
Comparing Table~\ref{tab_all_datasets_app_mnist} and Table~\ref{tab_diff_alpha}, we observe that 
having different clients use different values of $\alpha$ cannot reduce the impact of poisoning attacks.
Moreover, when all clients are benign, the Max.TERs of DFL methods including our \alg 
 are large compared to the scenario when clients using the same $\alpha$.
Our proposed \alg achieves similar Max.TERs under different attacks compared to the FedAvg method without any attacks.
However, existing defenses are still vulnerable to poisoning attacks.

We then study the scenario where different clients use different aggregation rules to combine the received neighboring clients' models. We also investigate two cases, namely Case III and Case IV. In both cases, we randomly assign one existing Byzantine-robust aggregation rule from set $\mathcal{H}=$\{Krum, Trim-mean, Median, FLTrust, UBAR, SCCLIP\}
to each client. Note that the set $\mathcal{H}$ excludes FedAvg, LEARN, and our proposed method. This is because FedAvg is not robust; LEARN method requires exchanging both local model updates and local models between clients, while other methods only need to exchange local models; and our proposed method is already robust and does not require being used with other aggregation rules.
Specifically, in Case III, each client $i \in \mathcal{V}$ randomly selects one aggregation rule from the set $\mathcal{H}$ before the training process.
In Case IV, in each training round, client $i$ randomly selects one robust Byzantine-robust aggregation rule from the set $\mathcal{H}$.
The results are shown in Table~\ref{tab_diff_agg} in Appendix.
Compare Table~\ref{tab_all_datasets_app_mnist} and Table~\ref{tab_diff_agg}, we observe that when different aggregation rules are used by clients, they are more susceptible to poisoning attacks compared to the scenario where all clients use the same aggregation rule.

\begin{table}[t]
    \centering
    \scriptsize
    \setlength{\doublerulesep}{3\arrayrulewidth}
    \caption{Results of different DFL methods with time-varying communication graph.}
    \label{tab_diff_drop_prob}%
\begin{tabular}{|c|c|c|c|c|c|c|c|c|}
        \hline
        Method & No & LF & Feature & Gauss & Krum & Trim & Backdoor & Adapt \\
        \hline
        \hline
        FedAvg & 0.10 & 0.15 & 0.90 & 0.90 & 0.91 & 0.90 & 0.90 / 1.00 & 0.90 \\
        Krum & 0.13 & 0.15 & 0.90 & 0.90 & 0.15 & 0.14 & 0.90 / 1.00 & 0.90  \\
        Trim-mean & 0.27 & 0.27 & 0.90 & 0.90 & 0.91 & 0.91 & 0.90 / 1.00 & 0.90  \\
        Median & 0.24 & 0.28 & 0.90 & 0.90 & 0.91 & 0.91 & 0.90 / 1.00 & 0.90  \\
        FLTrust & 0.11 & 0.16 & 0.90 & 0.89 & 0.89 & 0.89 & 0.89 / 0.09 & 0.89  \\
        UBAR & 0.15 & 0.17 & 0.90 & 0.90 & 0.15 & 0.19 & 0.15 / 0.01 & 0.90 \\
        LEARN & 0.19 & 0.19 & 0.38 & 0.90 & 0.91 & 0.91 & 0.90 / 1.00 & 0.90  \\
        SCCLIP & 0.11 & 0.16 & 0.34 & 0.15 & 0.91 & 0.91 & 0.23 / 0.02 & 0.64   \\
        \rowcolor{greyL}
        \alg & 0.11 & 0.12 & 0.12 & 0.12 & 0.11 & 0.12 & 0.11 / 0.01 & 0.12 \\
        \hline
    \end{tabular}%
\end{table}%

\myparatight{Time-varying communication graph}
By default, we consider a static communication graph $\mathcal{G}$, i.e., once established, $\mathcal{G}$ is fixed. 
Here we consider a practical setting, where each client has a chance of disconnecting from the protocol, such as Internet issues.
When a client disconnects from the protocol during a specific round, it is unable to exchange information with its neighboring clients.
However, a disconnected client can continue training its model locally, and it may reconnect to the protocol in the subsequent round.
We consider the default parameter settings, where there are 4 out of 20 clients are malicious, MNIST dataset and regular-(20, 10) graph are used (client may disconnect from the protocol based on the regular-(20, 10) graph).
However, each client has 20\% possibility of disconnecting from the protocol.
The results are shown in Table~\ref{tab_diff_drop_prob}.
We observe that existing defenses are more vulnerable to poisoning attacks, while \alg could still defend against various attacks.

\myparatight{Impact of the total number of clients}
Fig.~\ref{number_of_clients_max_error} in Appendix shows the results of different methods on MNIST dataset when the total number of clients is changed, while keeping the fraction of malicious clients fixed at 20\%.
Note that we use regular-(10, 5), regular-(20, 10), regular-(30, 15), regular-(40, 20) and regular-(50, 25) communication graphs when we have 10, 20, 30, 40 and 50 clients in total, respectively.
The topologies of regular-(20, 10) and regular-(30, 15) graphs are shown in Figs.~\ref{fig_no_label_figure_regular_n20_10deg}-\ref{fig_no_label_figure_regular_n30_15deg}.
The topologies of regular-(10, 5), regular-(40, 20) and regular-(50, 25) graphs are illustrated in Figs.~\ref{fig_no_label_figure_regular_n10_5deg}-\ref{fig_no_label_figure_regular_n50_25deg}.
We observe from Fig.~\ref{number_of_clients_max_error} that our proposed DFL approach is resilient to poisoning attacks for all the considered total number of clients ranging from 10 to 50.

\myparatight{Different initial models or communication graphs}
We assume that all clients use the same initial local model by default. We also explore the setting where different clients use different initial local models, other parameters are set to their default settings.
The results are shown in Table~\ref{tab_diff_init} in Appendix.
We observe that our proposed method is robust against poisoning attacks and outperforms baselines, even when clients use different initial models.

We also investigate other types of communication graphs, including complete graph, Erdős–Rényi graph, small-world graph and ring graph, 
 the topologies of these four graphs are illustrated in Figs.~\ref{fig_no_label_figure_adjacency_full_20}-\ref{fig_ring_graph} in Appendix. 
Note that for a complete graph, each client is connected to the remaining clients; for a ring graph, clients form a ring, and each client only has two neighbors.
In all four communication graphs, there are 20 clients in total, and 4 clients are malicious.
We maintain default settings for other parameters. 
The results for four graphs are shown in Table~\ref{tab_diff_graph_app} in Appendix.
We observe that our \alg can also achieve Byzantine-robustness when other types of communication graphs are used.

\myparatight{Impact of fraction of edges between malicious and benign clients}
Malicious clients in a DFL system attempt to manipulate the system by sharing harmful information, such as carefully crafted local models, with their neighboring clients. 
The attack performance is generally influenced by the \textit{Fraction of Edges between Malicious and Benign clients (FEMB)}.
In our paper, FEMB is calculated as the ratio of the number of edges between malicious and benign clients to the total number of edges in the communication graph.
In our experiments, we generate three random graphs to study the impact of FEMB on the attack performance.
For all three random graphs, there are 20 nodes in total, each node represents one client, and 4 out of 20 clients are malicious.
The FEMBs for the three graphs are 0.16, 0.22, and 0.32, respectively.
The topologies of three graphs are shown in Figs.~\ref{fig_FEMB_016}-\ref{fig_FEMB_032}.
All other parameters are kept at their default values.
The experimental results are shown in Table~\ref{tab_diff_FEMB_results_app} in Appendix.
We observe that the attack performance generally increases with an increase in FEMB.


\section{Discussion and Limitations} 
\label{sec:discussion_limitation}

\begin{table}[t]
	\centering
	\scriptsize
	\setlength{\doublerulesep}{3\arrayrulewidth}
	\caption{Results of different DFL methods, where each client only has three classes of training data.}
	\label{tab_noniid_three}
	\begin{tabular}{|c|c|c|c|c|c|c|c|c|}
		\hline
		Method & No & LF & Feature & Gauss & Krum & Trim & Backdoor & Adapt \\
		\hline
		\hline
		FedAvg & 0.10 & 0.18 & 0.90 & 0.91 & 0.91 & 0.90 & 0.90 / 1.00 & 0.91 \\
		Krum & 0.46 & 0.49 & 0.91 & 0.49 & 0.57 & 0.52 & 0.47 / 0.16 & 0.46\\
		Trim-mean & 0.17 & 0.62 & 0.27 & 0.24 & 0.58 & 0.83 & 0.79 / 0.23 & 0.81 \\
		Median & 0.56 & 0.89 & 0.56 & 0.88 & 0.64 & 0.70 & 0.56 / 0.04 & 0.58 \\
		FLTrust & 0.10 & 0.18 & 0.91 & 0.22 & 0.18 & 0.90 & 0.20 / 0.27 & 0.14 \\
		UBAR & 0.42 & 0.42 & 0.91 & 0.52 & 0.52 & 0.57 & 0.62 / 0.32 & 0.54  \\
		LEARN & 0.10 & 0.15 & 0.10 & 0.15 & 0.20 & 0.61 & 0.16 / 0.09 & 0.58  \\
		SCCLIP & 0.10 & 0.18 & 0.23 & 0.21 & 0.90 & 0.91 & 0.33 / 0.08 & 0.89 \\
		\rowcolor{greyL}
		\alg & 0.10 & 0.14 & 0.13 & 0.14 & 0.14 & 0.14 & 0.11 / 0.02 & 0.14 \\
		\hline
	\end{tabular}%
    \vspace{-.10in}
\end{table}%

\myparatight{More extreme Non-IID distribution}
In our default Non-IID setting, a client's primary training data come from just one class, with few examples from other classes. This section explores a more extreme Non-IID scenario as described in~\cite{McMahan17}. Here, the distribution of training data among clients is based solely on labels, and each client having training data from only three classes. For instance, Client 1 possesses training data only for labels 0-2, while Client 2 exclusively holds data for labels 3-5.

The results of various DFL methods under various attacks with this more extreme Non-IID scenario are shown in Table~\ref{tab_noniid_three}. When comparing Table~\ref{tab_all_datasets_app_mnist} with Table~\ref{tab_noniid_three}, it becomes apparent that in this Non-IID context, the Max.TER of current DFL methods is significantly high, even in non-adversarial setting. For instance, with the UBAR aggregation rule and all clients being benign, the Max.TER reaches 0.42. In adversarial setting, our suggested \alg continues to effectively counter all the poisoning attacks considered, achieving the largest Max.TER of just 0.14. Nonetheless, existing DFL approaches show increased susceptibility to poisoning attacks. To illustrate, the Max.TER of Median is 0.89 under LF attack.

\begin{table}[t]
	\centering
	\scriptsize
	\caption{Results of different DFL methods under LIE and Dissensus attacks.}
	\label{tab_lie_discent_attack}%
	\begin{tabular}{|c|c|c|c|}
		\hline
		Method & No & LIE & Dissensus \\
		\hline
		\hline
		FedAvg & 0.10 & 0.13 & 0.91 \\
		Krum & 0.10 & 0.10 & 0.15  \\
		Trim-mean & 0.11 & 0.16 & 0.86\\
		Median & 0.14 & 0.18 & 0.87 \\
		FLTrust & 0.10 & 0.10 & 0.12 \\
		UBAR & 0.14 & 0.14 & 0.23  \\
		LEARN & 0.10 & 0.10 & 0.69\\
		SCCLIP & 0.10 & 0.10 & 0.91  \\
		\rowcolor{greyL}
		\alg    & 0.10 & 0.10 & 0.10 \\
		\hline
	\end{tabular}%
    \vspace{-.10in}
\end{table}%

\myparatight{More adaptive and decentralized attacks}In our prior experiments, we show that our proposed \alg is robust against data poisoning attacks (such as LF and Feature attacks) as well as sophisticated adaptive attacks. For these data poisoning attacks, attackers corrupt the local training data on malicious clients. While the models trained on this poisoned data might appear benign, our experiments reveal that such attacks are ineffectual against our \alg due to their limited impact. In the case of the adaptive attacks we considered, the attacker, aware of our aggregation rule, introduces subtle, strategic perturbation to the benign local models in order to circumvent our defenses. Nonetheless, our experimental results indicate that as attackers adapt their strategies to evade our defense mechanism, the efficacy of their attacks diminishes.

In this section, we first explore a different type of adaptive attack known as the ``a little is enough'' (LIE) attack~\cite{baruch2019little}. The LIE attack represents a general attack model where the attacker is not required to be aware of the aggregation rule employed by other clients. In executing a LIE attack, the attacker calculates the variance among benign models, and then introduces minimal changes to these models, intended to circumvent the aggregation rule. Our current experiments demonstrate that extending server-based attacks to the DFL setting can be effective against prevailing DFL methods. Additionally, we examine the Dissensus attack~\cite{he2022byzantine}, a newly devised attack model specifically tailored for DFL systems. In the Dissensus attack, the attacker designs malicious local models with the intention of disrupting consensus among benign clients.

Table~\ref{tab_lie_discent_attack} presents the results of various DFL methods under LIE and Dissensus attacks. It is evident from the results that the LIE attack does not significantly compromise existing DFL methods. This could be attributed to the fact that the LIE attack is a more generic attack model and is not specifically tailored for a fully decentralized environment. In contrast, the Dissensus attack shows a considerable ability to disrupt DFL methods, particularly in the case of Trim-mean and Median. For instance, under the Dissensus attack, the Max.TER for the Median method escalates to 0.87.

\begin{table}[t]
    \centering
    \scriptsize
      \setlength{\doublerulesep}{3\arrayrulewidth}
     \caption{Results of different variants of BALANCE.}
    \label{tab_variant}
\begin{tabular}{|c|c|c|c|c|c|c|c|c|}
        \hline
        Variant & No & LF & Feature & Gauss & Krum & Trim & Backdoor & Adapt \\
        \hline
        \hline
        Variant I   & 0.10 & 0.11 & 0.12 & 0.11 & 0.11 & 0.11 &0.11 / 0.01 &0.14 \\
        Variant II   & 0.10 &0.10  &  0.13& 0.16 &  0.11& 0.18  & 0.11 / 0.01 & 0.16  \\
        \rowcolor{greyL}
        \alg & 0.10 & 0.10 & 0.11 & 0.10 & 0.10 & 0.11 & 0.11 / 0.01 & 0.11 \\
        \hline
    \end{tabular}%
	\vspace{-.1in}
\end{table}%

\myparatight{Different variants of \algns}
In this part, we consider two variants of our proposed \algns.

\begin{list}{\labelitemi}{\leftmargin=1em \itemindent=-0.5em \itemsep=.2em}
	\item \textbf{Variant I:} In this variant, client \( i \) accepts \( \bm{w}_j^{t+\frac{1}{2}} \) when the condition \( \| \bm{w}_i^{t+\frac{1}{2}} - \bm{w}_j^{t+\frac{1}{2}} \| \le \gamma \| \bm{w}_i^{t+\frac{1}{2}} \| \) holds true.

	\smallskip
	\item \textbf{Variant II:} Client $i$ calculates \( q_j = \frac{\| \bm{w}_i^{t+\frac{1}{2}} - \bm{w}_j^{t+\frac{1}{2}} \|}{\| \bm{w}_i^{t+\frac{1}{2}} \|} \) for each neighbor \( j \) in its neighbor set \( \mathcal{N}_i \). After computing the median of these values from the \( |\mathcal{N}_i| \) neighbors, denoted as \( q_\text{med} \) where \( q_\text{med}=\text{med}\{q_1,...,q_{|\mathcal{N}_i|} \} \), client \( i \) will accept \( \bm{w}_j^{t+\frac{1}{2}} \) if \( q_j \le \min\{q_\text{med}, \gamma\} \). The \( \gamma \) is set to safeguard against the possibility of most neighbors being malicious. The key idea of Variant II is that client $i$ will accept client $j$'s local model if it is close to its own, based on a comparison with the median of deviations from all neighbors.
\end{list}

Table~\ref{tab_variant} compares our \alg with two variants. Variant II underperforms due to its tendency to incorrectly reject many benign local models. In contrast, Variant I shows performance comparable to our \algns. 
Note that the distance between \( \bm{w}_i^{t+\frac{1}{2}} \) and \( \bm{w}_j^{t+\frac{1}{2}} \) will become smaller, as the DFL system approaches convergence, leading Variant I to inadvertently accept some malicious models in later training stages. Nevertheless, attacking the DFL system becomes challenging as the model nears convergence. 
Additional experiments support this claim, showing that the Max.TER of Median under Trim attack is 0.55 when attacks occur only in the first half of training rounds (1-1,000 rounds out of 2,000 total), while it is 0.38 when attacks only occur in the second half. When attacks happen in all rounds, the Max.TER is 0.63.
In our \algns, training the model for a sufficient number of rounds (a large $T$) can significantly reduce the value of \(\gamma \cdot \exp(-\kappa \cdot  \lambda(t))\). By selecting a smaller \(\kappa\), we can decelerate the decline of the exponential function.


\section{Conclusion and Future Work} 
\label{sec:conclusion}

In this work, we proposed a novel method called \alg to defend against poisoning attacks in DFL.
In our proposed method, each client uses its local model as a reference point to check whether the received neighboring client's local model is malicious or benign.
We established the convergence performance of our method under poisoning attacks in both strongly convex and non-convex settings, and the convergence rate of our \alg matches those of the state-of-the-art counterparts in Byzantine-free settings.
Extensive experiments across various settings demonstrated the efficacy of our proposed method.
Our future work includes designing an optimized strategy to dynamically select aggregation rules and parameter $\alpha$ for different clients to enhance the robustness of existing DFL methods.

\begin{acks}
We thank the anonymous reviewers for their comments. 
This work was supported by NSF grants CAREER CNS-2110259, CNS-2112471, CNS-2312138, SaTC-2350075, No. 2131859, 2125977, 2112562, 1937786, 1937787, and ARO grant No. W911NF2110182.
\end{acks}

\balance
\bibliographystyle{ACM-Reference-Format}
\bibliography{refs}


\begin{thebibliography}{60}


\ifx \showCODEN    \undefined \def \showCODEN     #1{\unskip}     \fi
\ifx \showDOI      \undefined \def \showDOI       #1{#1}\fi
\ifx \showISBNx    \undefined \def \showISBNx     #1{\unskip}     \fi
\ifx \showISBNxiii \undefined \def \showISBNxiii  #1{\unskip}     \fi
\ifx \showISSN     \undefined \def \showISSN      #1{\unskip}     \fi
\ifx \showLCCN     \undefined \def \showLCCN      #1{\unskip}     \fi
\ifx \shownote     \undefined \def \shownote      #1{#1}          \fi
\ifx \showarticletitle \undefined \def \showarticletitle #1{#1}   \fi
\ifx \showURL      \undefined \def \showURL       {\relax}        \fi
\providecommand\bibfield[2]{#2}
\providecommand\bibinfo[2]{#2}
\providecommand\natexlab[1]{#1}
\providecommand\showeprint[2][]{arXiv:#2}

\bibitem[\protect\citeauthoryear{Anguita, Ghio, Oneto, Parra~Perez, and
  Reyes~Ortiz}{Anguita et~al\mbox{.}}{2013}]%
        {anguita2013public}
\bibfield{author}{\bibinfo{person}{Davide Anguita}, \bibinfo{person}{Alessandro
  Ghio}, \bibinfo{person}{Luca Oneto}, \bibinfo{person}{Xavier Parra~Perez},
  {and} \bibinfo{person}{Jorge~Luis Reyes~Ortiz}.}
  \bibinfo{year}{2013}\natexlab{}.
\newblock \showarticletitle{A public domain dataset for human activity
  recognition using smartphones}. In \bibinfo{booktitle}{\emph{ESANN}}.
\newblock


\bibitem[\protect\citeauthoryear{Bagdasaryan, Veit, Hua, Estrin, and
  Shmatikov}{Bagdasaryan et~al\mbox{.}}{2020}]%
        {bagdasaryan2020backdoor}
\bibfield{author}{\bibinfo{person}{Eugene Bagdasaryan},
  \bibinfo{person}{Andreas Veit}, \bibinfo{person}{Yiqing Hua},
  \bibinfo{person}{Deborah Estrin}, {and} \bibinfo{person}{Vitaly Shmatikov}.}
  \bibinfo{year}{2020}\natexlab{}.
\newblock \showarticletitle{How to backdoor federated learning}. In
  \bibinfo{booktitle}{\emph{AISTATS}}.
\newblock


\bibitem[\protect\citeauthoryear{Baruch, Baruch, and Goldberg}{Baruch
  et~al\mbox{.}}{2019}]%
        {baruch2019little}
\bibfield{author}{\bibinfo{person}{Gilad Baruch}, \bibinfo{person}{Moran
  Baruch}, {and} \bibinfo{person}{Yoav Goldberg}.}
  \bibinfo{year}{2019}\natexlab{}.
\newblock \showarticletitle{A little is enough: Circumventing defenses for
  distributed learning}. In \bibinfo{booktitle}{\emph{NeurIPS}}.
\newblock


\bibitem[\protect\citeauthoryear{Beltr{\'a}n, P{\'e}rez, S{\'a}nchez, Bernal,
  Bovet, P{\'e}rez, P{\'e}rez, and Celdr{\'a}n}{Beltr{\'a}n
  et~al\mbox{.}}{2022}]%
        {beltran2022decentralized}
\bibfield{author}{\bibinfo{person}{Enrique Tom{\'a}s~Mart{\'\i}nez
  Beltr{\'a}n}, \bibinfo{person}{Mario~Quiles P{\'e}rez},
  \bibinfo{person}{Pedro Miguel~S{\'a}nchez S{\'a}nchez},
  \bibinfo{person}{Sergio~L{\'o}pez Bernal}, \bibinfo{person}{G{\'e}r{\^o}me
  Bovet}, \bibinfo{person}{Manuel~Gil P{\'e}rez},
  \bibinfo{person}{Gregorio~Mart{\'\i}nez P{\'e}rez}, {and}
  \bibinfo{person}{Alberto~Huertas Celdr{\'a}n}.}
  \bibinfo{year}{2022}\natexlab{}.
\newblock \showarticletitle{Decentralized Federated Learning: Fundamentals,
  State-of-the-art, Frameworks, Trends, and Challenges}. In
  \bibinfo{booktitle}{\emph{arXiv preprint arXiv:2211.08413}}.
\newblock


\bibitem[\protect\citeauthoryear{Biggio, Nelson, and Laskov}{Biggio
  et~al\mbox{.}}{2012}]%
        {biggio2012poisoning}
\bibfield{author}{\bibinfo{person}{Battista Biggio}, \bibinfo{person}{Blaine
  Nelson}, {and} \bibinfo{person}{Pavel Laskov}.}
  \bibinfo{year}{2012}\natexlab{}.
\newblock \showarticletitle{Poisoning attacks against support vector machines}.
  In \bibinfo{booktitle}{\emph{ICML}}.
\newblock


\bibitem[\protect\citeauthoryear{Blanchard, El~Mhamdi, Guerraoui, and
  Stainer}{Blanchard et~al\mbox{.}}{2017}]%
        {blanchard2017machine}
\bibfield{author}{\bibinfo{person}{Peva Blanchard}, \bibinfo{person}{El~Mahdi
  El~Mhamdi}, \bibinfo{person}{Rachid Guerraoui}, {and} \bibinfo{person}{Julien
  Stainer}.} \bibinfo{year}{2017}\natexlab{}.
\newblock \showarticletitle{Machine learning with adversaries: Byzantine
  tolerant gradient descent}. In \bibinfo{booktitle}{\emph{NeurIPS}}.
\newblock


\bibitem[\protect\citeauthoryear{Caldas, Duddu, Wu, Li, Kone{\v{c}}n{\`y},
  McMahan, Smith, and Talwalkar}{Caldas et~al\mbox{.}}{2018}]%
        {caldas2018leaf}
\bibfield{author}{\bibinfo{person}{Sebastian Caldas}, \bibinfo{person}{Sai
  Meher~Karthik Duddu}, \bibinfo{person}{Peter Wu}, \bibinfo{person}{Tian Li},
  \bibinfo{person}{Jakub Kone{\v{c}}n{\`y}}, \bibinfo{person}{H~Brendan
  McMahan}, \bibinfo{person}{Virginia Smith}, {and} \bibinfo{person}{Ameet
  Talwalkar}.} \bibinfo{year}{2018}\natexlab{}.
\newblock \showarticletitle{Leaf: A benchmark for federated settings}.
\newblock \bibinfo{journal}{\emph{arXiv preprint arXiv:1812.01097}}
  (\bibinfo{year}{2018}).
\newblock


\bibitem[\protect\citeauthoryear{Cao, Fang, Liu, and Gong}{Cao
  et~al\mbox{.}}{2021}]%
        {cao2020fltrust}
\bibfield{author}{\bibinfo{person}{Xiaoyu Cao}, \bibinfo{person}{Minghong
  Fang}, \bibinfo{person}{Jia Liu}, {and} \bibinfo{person}{Neil~Zhenqiang
  Gong}.} \bibinfo{year}{2021}\natexlab{}.
\newblock \showarticletitle{Fltrust: Byzantine-robust federated learning via
  trust bootstrapping}. In \bibinfo{booktitle}{\emph{NDSS}}.
\newblock


\bibitem[\protect\citeauthoryear{Cao and Gong}{Cao and Gong}{2022}]%
        {cao2022mpaf}
\bibfield{author}{\bibinfo{person}{Xiaoyu Cao} {and}
  \bibinfo{person}{Neil~Zhenqiang Gong}.} \bibinfo{year}{2022}\natexlab{}.
\newblock \showarticletitle{Mpaf: Model poisoning attacks to federated learning
  based on fake clients}. In \bibinfo{booktitle}{\emph{CVPR Workshops}}.
\newblock


\bibitem[\protect\citeauthoryear{Cao, Jia, Zhang, and Gong}{Cao
  et~al\mbox{.}}{2023}]%
        {cao2023fedrecover}
\bibfield{author}{\bibinfo{person}{Xiaoyu Cao}, \bibinfo{person}{Jinyuan Jia},
  \bibinfo{person}{Zaixi Zhang}, {and} \bibinfo{person}{Neil~Zhenqiang Gong}.}
  \bibinfo{year}{2023}\natexlab{}.
\newblock \showarticletitle{Fedrecover: Recovering from poisoning attacks in
  federated learning using historical information}. In
  \bibinfo{booktitle}{\emph{IEEE Symposium on Security and Privacy}}.
\newblock


\bibitem[\protect\citeauthoryear{Cao, Zhang, Jia, and Gong}{Cao
  et~al\mbox{.}}{2022}]%
        {cao2022flcert}
\bibfield{author}{\bibinfo{person}{Xiaoyu Cao}, \bibinfo{person}{Zaixi Zhang},
  \bibinfo{person}{Jinyuan Jia}, {and} \bibinfo{person}{Neil~Zhenqiang Gong}.}
  \bibinfo{year}{2022}\natexlab{}.
\newblock \showarticletitle{Flcert: Provably secure federated learning against
  poisoning attacks}.
\newblock \bibinfo{journal}{\emph{IEEE Transactions on Information Forensics
  and Security}}.
\newblock


\bibitem[\protect\citeauthoryear{Chu, Garcia-Recuero, Iordanou, Smaragdakis,
  and Laoutaris}{Chu et~al\mbox{.}}{2023}]%
        {chu2022securing}
\bibfield{author}{\bibinfo{person}{Tianyue Chu}, \bibinfo{person}{Alvaro
  Garcia-Recuero}, \bibinfo{person}{Costas Iordanou}, \bibinfo{person}{Georgios
  Smaragdakis}, {and} \bibinfo{person}{Nikolaos Laoutaris}.}
  \bibinfo{year}{2023}\natexlab{}.
\newblock \showarticletitle{Securing Federated Sensitive Topic Classification
  against Poisoning Attacks}. In \bibinfo{booktitle}{\emph{NDSS}}.
\newblock


\bibitem[\protect\citeauthoryear{Dai, Shen, He, Tian, and Tao}{Dai
  et~al\mbox{.}}{2022}]%
        {dai2022dispfl}
\bibfield{author}{\bibinfo{person}{Rong Dai}, \bibinfo{person}{Li Shen},
  \bibinfo{person}{Fengxiang He}, \bibinfo{person}{Xinmei Tian}, {and}
  \bibinfo{person}{Dacheng Tao}.} \bibinfo{year}{2022}\natexlab{}.
\newblock \showarticletitle{Dispfl: Towards communication-efficient
  personalized federated learning via decentralized sparse training}. In
  \bibinfo{booktitle}{\emph{ICML}}.
\newblock


\bibitem[\protect\citeauthoryear{El-Mhamdi, Farhadkhani, Guerraoui, Guirguis,
  Hoang, and Rouault}{El-Mhamdi et~al\mbox{.}}{2021}]%
        {el2021collaborative}
\bibfield{author}{\bibinfo{person}{El~Mahdi El-Mhamdi}, \bibinfo{person}{Sadegh
  Farhadkhani}, \bibinfo{person}{Rachid Guerraoui}, \bibinfo{person}{Arsany
  Guirguis}, \bibinfo{person}{L{\^e}-Nguy{\^e}n Hoang}, {and}
  \bibinfo{person}{S{\'e}bastien Rouault}.} \bibinfo{year}{2021}\natexlab{}.
\newblock \showarticletitle{Collaborative learning in the jungle
  (decentralized, byzantine, heterogeneous, asynchronous and nonconvex
  learning)}. In \bibinfo{booktitle}{\emph{NeurIPS}}.
\newblock


\bibitem[\protect\citeauthoryear{Fang, Cao, Jia, and Gong}{Fang
  et~al\mbox{.}}{2020}]%
        {fang2020local}
\bibfield{author}{\bibinfo{person}{Minghong Fang}, \bibinfo{person}{Xiaoyu
  Cao}, \bibinfo{person}{Jinyuan Jia}, {and} \bibinfo{person}{Neil Gong}.}
  \bibinfo{year}{2020}\natexlab{}.
\newblock \showarticletitle{Local model poisoning attacks to Byzantine-robust
  federated learning}. In \bibinfo{booktitle}{\emph{USENIX Security
  Symposium}}.
\newblock


\bibitem[\protect\citeauthoryear{Fang, Liu, Gong, and Bentley}{Fang
  et~al\mbox{.}}{2022}]%
        {fang2022aflguard}
\bibfield{author}{\bibinfo{person}{Minghong Fang}, \bibinfo{person}{Jia Liu},
  \bibinfo{person}{Neil~Zhenqiang Gong}, {and} \bibinfo{person}{Elizabeth~S
  Bentley}.} \bibinfo{year}{2022}\natexlab{}.
\newblock \showarticletitle{Aflguard: Byzantine-robust asynchronous federated
  learning}. In \bibinfo{booktitle}{\emph{ACSAC}}.
\newblock


\bibitem[\protect\citeauthoryear{Fowl, Geiping, Czaja, Goldblum, and
  Goldstein}{Fowl et~al\mbox{.}}{2022}]%
        {fowl2021robbing}
\bibfield{author}{\bibinfo{person}{Liam Fowl}, \bibinfo{person}{Jonas Geiping},
  \bibinfo{person}{Wojtek Czaja}, \bibinfo{person}{Micah Goldblum}, {and}
  \bibinfo{person}{Tom Goldstein}.} \bibinfo{year}{2022}\natexlab{}.
\newblock \showarticletitle{Robbing the fed: Directly obtaining private data in
  federated learning with modified models}. In
  \bibinfo{booktitle}{\emph{ICLR}}.
\newblock


\bibitem[\protect\citeauthoryear{Fu, Xie, Li, and Chen}{Fu
  et~al\mbox{.}}{2019}]%
        {fu2019attack}
\bibfield{author}{\bibinfo{person}{Shuhao Fu}, \bibinfo{person}{Chulin Xie},
  \bibinfo{person}{Bo Li}, {and} \bibinfo{person}{Qifeng Chen}.}
  \bibinfo{year}{2019}\natexlab{}.
\newblock \showarticletitle{Attack-resistant federated learning with
  residual-based reweighting}.
\newblock \bibinfo{journal}{\emph{arXiv preprint arXiv:1912.11464}}
  (\bibinfo{year}{2019}).
\newblock


\bibitem[\protect\citeauthoryear{Fung, Yoon, and Beschastnikh}{Fung
  et~al\mbox{.}}{2020}]%
        {fung2020limitations}
\bibfield{author}{\bibinfo{person}{Clement Fung}, \bibinfo{person}{Chris~JM
  Yoon}, {and} \bibinfo{person}{Ivan Beschastnikh}.}
  \bibinfo{year}{2020}\natexlab{}.
\newblock \showarticletitle{The Limitations of Federated Learning in Sybil
  Settings}. In \bibinfo{booktitle}{\emph{RAID}}.
\newblock


\bibitem[\protect\citeauthoryear{Gu, Dolan-Gavitt, and Garg}{Gu
  et~al\mbox{.}}{2017}]%
        {gu2017badnets}
\bibfield{author}{\bibinfo{person}{Tianyu Gu}, \bibinfo{person}{Brendan
  Dolan-Gavitt}, {and} \bibinfo{person}{Siddharth Garg}.}
  \bibinfo{year}{2017}\natexlab{}.
\newblock \showarticletitle{Badnets: Identifying vulnerabilities in the machine
  learning model supply chain}.
\newblock \bibinfo{journal}{\emph{arXiv preprint arXiv:1708.06733}}
  (\bibinfo{year}{2017}).
\newblock


\bibitem[\protect\citeauthoryear{Guo, Zhang, Yu, Xie, Ma, Xiang, and Liu}{Guo
  et~al\mbox{.}}{2021}]%
        {guo2021byzantine}
\bibfield{author}{\bibinfo{person}{Shangwei Guo}, \bibinfo{person}{Tianwei
  Zhang}, \bibinfo{person}{Han Yu}, \bibinfo{person}{Xiaofei Xie},
  \bibinfo{person}{Lei Ma}, \bibinfo{person}{Tao Xiang}, {and}
  \bibinfo{person}{Yang Liu}.} \bibinfo{year}{2021}\natexlab{}.
\newblock \showarticletitle{Byzantine-resilient decentralized stochastic
  gradient descent}. In \bibinfo{booktitle}{\emph{IEEE Transactions on Circuits
  and Systems for Video Technology}}.
\newblock


\bibitem[\protect\citeauthoryear{He, Karimireddy, and Jaggi}{He
  et~al\mbox{.}}{2023}]%
        {he2022byzantine}
\bibfield{author}{\bibinfo{person}{Lie He}, \bibinfo{person}{Sai~Praneeth
  Karimireddy}, {and} \bibinfo{person}{Martin Jaggi}.}
  \bibinfo{year}{2023}\natexlab{}.
\newblock \showarticletitle{Byzantine-robust decentralized learning via
  self-centered clipping}.
\newblock \bibinfo{journal}{\emph{arXiv preprint arXiv:2202.01545}}
  (\bibinfo{year}{2023}).
\newblock


\bibitem[\protect\citeauthoryear{Kalra, Wen, Cresswell, Volkovs, and
  Tizhoosh}{Kalra et~al\mbox{.}}{2023}]%
        {kalra2023decentralized}
\bibfield{author}{\bibinfo{person}{Shivam Kalra}, \bibinfo{person}{Junfeng
  Wen}, \bibinfo{person}{Jesse~C Cresswell}, \bibinfo{person}{Maksims Volkovs},
  {and} \bibinfo{person}{HR Tizhoosh}.} \bibinfo{year}{2023}\natexlab{}.
\newblock \showarticletitle{Decentralized federated learning through proxy
  model sharing}. In \bibinfo{booktitle}{\emph{Nature Communications}}.
\newblock


\bibitem[\protect\citeauthoryear{Karimireddy, He, and Jaggi}{Karimireddy
  et~al\mbox{.}}{2022}]%
        {karimireddy2020byzantine}
\bibfield{author}{\bibinfo{person}{Sai~Praneeth Karimireddy},
  \bibinfo{person}{Lie He}, {and} \bibinfo{person}{Martin Jaggi}.}
  \bibinfo{year}{2022}\natexlab{}.
\newblock \showarticletitle{Byzantine-robust learning on heterogeneous datasets
  via bucketing}. In \bibinfo{booktitle}{\emph{ICLR}}.
\newblock


\bibitem[\protect\citeauthoryear{Kone{\v c}n{\'y}, McMahan, Yu, Richt{\'a}rik,
  Suresh, and Bacon}{Kone{\v c}n{\'y} et~al\mbox{.}}{2016}]%
        {Konen16}
\bibfield{author}{\bibinfo{person}{Jakub Kone{\v c}n{\'y}},
  \bibinfo{person}{H.~Brendan McMahan}, \bibinfo{person}{Felix~X. Yu},
  \bibinfo{person}{Peter Richt{\'a}rik}, \bibinfo{person}{Ananda~Theertha
  Suresh}, {and} \bibinfo{person}{Dave Bacon}.}
  \bibinfo{year}{2016}\natexlab{}.
\newblock \showarticletitle{Federated Learning: Strategies for Improving
  Communication Efficiency}. In \bibinfo{booktitle}{\emph{NeurIPS Workshop on
  Private Multi-Party Machine Learning}}.
\newblock


\bibitem[\protect\citeauthoryear{Kong, Lin, Koloskova, Jaggi, and Stich}{Kong
  et~al\mbox{.}}{2021}]%
        {kong2021consensus}
\bibfield{author}{\bibinfo{person}{Lingjing Kong}, \bibinfo{person}{Tao Lin},
  \bibinfo{person}{Anastasia Koloskova}, \bibinfo{person}{Martin Jaggi}, {and}
  \bibinfo{person}{Sebastian Stich}.} \bibinfo{year}{2021}\natexlab{}.
\newblock \showarticletitle{Consensus control for decentralized deep learning}.
  In \bibinfo{booktitle}{\emph{ICML}}.
\newblock


\bibitem[\protect\citeauthoryear{Kumari, Rieger, Fereidooni, Jadliwala, and
  Sadeghi}{Kumari et~al\mbox{.}}{2023}]%
        {kumari2023baybfed}
\bibfield{author}{\bibinfo{person}{Kavita Kumari}, \bibinfo{person}{Phillip
  Rieger}, \bibinfo{person}{Hossein Fereidooni}, \bibinfo{person}{Murtuza
  Jadliwala}, {and} \bibinfo{person}{Ahmad-Reza Sadeghi}.}
  \bibinfo{year}{2023}\natexlab{}.
\newblock \showarticletitle{BayBFed: Bayesian Backdoor Defense for Federated
  Learning}. In \bibinfo{booktitle}{\emph{IEEE Symposium on Security and
  Privacy}}.
\newblock


\bibitem[\protect\citeauthoryear{Lalitha, Shekhar, Javidi, and
  Koushanfar}{Lalitha et~al\mbox{.}}{2018}]%
        {lalitha2018fully}
\bibfield{author}{\bibinfo{person}{Anusha Lalitha}, \bibinfo{person}{Shubhanshu
  Shekhar}, \bibinfo{person}{Tara Javidi}, {and} \bibinfo{person}{Farinaz
  Koushanfar}.} \bibinfo{year}{2018}\natexlab{}.
\newblock \showarticletitle{Fully decentralized federated learning}. In
  \bibinfo{booktitle}{\emph{Third workshop on bayesian deep learning
  (NeurIPS)}}.
\newblock


\bibitem[\protect\citeauthoryear{LeCun, Cortes, and Burges}{LeCun
  et~al\mbox{.}}{1998}]%
        {lecun2010mnist}
\bibfield{author}{\bibinfo{person}{Yann LeCun}, \bibinfo{person}{Corinna
  Cortes}, {and} \bibinfo{person}{CJ Burges}.} \bibinfo{year}{1998}\natexlab{}.
\newblock \showarticletitle{MNIST handwritten digit database}.
\newblock \bibinfo{journal}{\emph{Available: http://yann. lecun.
  com/exdb/mnist}} (\bibinfo{year}{1998}).
\newblock


\bibitem[\protect\citeauthoryear{Li, Sun, and Zheng}{Li et~al\mbox{.}}{2022}]%
        {li2022learning}
\bibfield{author}{\bibinfo{person}{Henger Li}, \bibinfo{person}{Xiaolin Sun},
  {and} \bibinfo{person}{Zizhan Zheng}.} \bibinfo{year}{2022}\natexlab{}.
\newblock \showarticletitle{Learning to attack federated learning: A
  model-based reinforcement learning attack framework}. In
  \bibinfo{booktitle}{\emph{NeurIPS}}.
\newblock


\bibitem[\protect\citeauthoryear{Lian, Zhang, Zhang, Hsieh, Zhang, and
  Liu}{Lian et~al\mbox{.}}{2017}]%
        {lian2017can}
\bibfield{author}{\bibinfo{person}{Xiangru Lian}, \bibinfo{person}{Ce Zhang},
  \bibinfo{person}{Huan Zhang}, \bibinfo{person}{Cho-Jui Hsieh},
  \bibinfo{person}{Wei Zhang}, {and} \bibinfo{person}{Ji Liu}.}
  \bibinfo{year}{2017}\natexlab{}.
\newblock \showarticletitle{Can decentralized algorithms outperform centralized
  algorithms? a case study for decentralized parallel stochastic gradient
  descent}. In \bibinfo{booktitle}{\emph{NeurIPS}}.
\newblock


\bibitem[\protect\citeauthoryear{Liu, Luo, Wang, and Tang}{Liu
  et~al\mbox{.}}{2015}]%
        {liu2015faceattributes}
\bibfield{author}{\bibinfo{person}{Ziwei Liu}, \bibinfo{person}{Ping Luo},
  \bibinfo{person}{Xiaogang Wang}, {and} \bibinfo{person}{Xiaoou Tang}.}
  \bibinfo{year}{2015}\natexlab{}.
\newblock \showarticletitle{Deep Learning Face Attributes in the Wild}. In
  \bibinfo{booktitle}{\emph{ICCV}}.
\newblock


\bibitem[\protect\citeauthoryear{Liu, Zhang, Khanduri, Lu, and Liu}{Liu
  et~al\mbox{.}}{2023}]%
        {liu2023prometheus}
\bibfield{author}{\bibinfo{person}{Zhuqing Liu}, \bibinfo{person}{Xin Zhang},
  \bibinfo{person}{Prashant Khanduri}, \bibinfo{person}{Songtao Lu}, {and}
  \bibinfo{person}{Jia Liu}.} \bibinfo{year}{2023}\natexlab{}.
\newblock \showarticletitle{Prometheus: taming sample and communication
  complexities in constrained decentralized stochastic bilevel learning}. In
  \bibinfo{booktitle}{\emph{ICML}}.
\newblock


\bibitem[\protect\citeauthoryear{Lu, Zhang, and Wang}{Lu et~al\mbox{.}}{2020}]%
        {ludefl20}
\bibfield{author}{\bibinfo{person}{Songtao Lu}, \bibinfo{person}{Yawen Zhang},
  {and} \bibinfo{person}{Yunlong Wang}.} \bibinfo{year}{2020}\natexlab{}.
\newblock \showarticletitle{Decentralized Federated Learning for Electronic
  Health Records}. In \bibinfo{booktitle}{\emph{CISS}}.
\newblock


\bibitem[\protect\citeauthoryear{McMahan, Moore, Ramage, Hampson, and
  y~Arcas}{McMahan et~al\mbox{.}}{2017}]%
        {McMahan17}
\bibfield{author}{\bibinfo{person}{H.~Brendan McMahan}, \bibinfo{person}{Eider
  Moore}, \bibinfo{person}{Daniel Ramage}, \bibinfo{person}{Seth Hampson},
  {and} \bibinfo{person}{Blaise~Ag{\"u}era y Arcas}.}
  \bibinfo{year}{2017}\natexlab{}.
\newblock \showarticletitle{Communication-Efficient Learning of Deep Networks
  from Decentralized Data}. In \bibinfo{booktitle}{\emph{AISTATS}}.
\newblock


\bibitem[\protect\citeauthoryear{Mhamdi, Guerraoui, and Rouault}{Mhamdi
  et~al\mbox{.}}{2018}]%
        {Mhamdi18}
\bibfield{author}{\bibinfo{person}{El~Mahdi~El Mhamdi}, \bibinfo{person}{Rachid
  Guerraoui}, {and} \bibinfo{person}{S{\'e}bastien Rouault}.}
  \bibinfo{year}{2018}\natexlab{}.
\newblock \showarticletitle{The Hidden Vulnerability of Distributed Learning in
  Byzantium}. In \bibinfo{booktitle}{\emph{ICML}}.
\newblock


\bibitem[\protect\citeauthoryear{Mozaffari, Shejwalkar, and
  Houmansadr}{Mozaffari et~al\mbox{.}}{2023}]%
        {mozaffari2023every}
\bibfield{author}{\bibinfo{person}{Hamid Mozaffari}, \bibinfo{person}{Virat
  Shejwalkar}, {and} \bibinfo{person}{Amir Houmansadr}.}
  \bibinfo{year}{2023}\natexlab{}.
\newblock \showarticletitle{Every Vote Counts: Ranking-Based Training of
  Federated Learning to Resist Poisoning Attacks}. In
  \bibinfo{booktitle}{\emph{USENIX Security Symposium}}.
\newblock


\bibitem[\protect\citeauthoryear{Mu{\~n}oz-Gonz{\'a}lez, Biggio, Demontis,
  Paudice, Wongrassamee, Lupu, and Roli}{Mu{\~n}oz-Gonz{\'a}lez
  et~al\mbox{.}}{2017}]%
        {munoz2017towards}
\bibfield{author}{\bibinfo{person}{Luis Mu{\~n}oz-Gonz{\'a}lez},
  \bibinfo{person}{Battista Biggio}, \bibinfo{person}{Ambra Demontis},
  \bibinfo{person}{Andrea Paudice}, \bibinfo{person}{Vasin Wongrassamee},
  \bibinfo{person}{Emil~C Lupu}, {and} \bibinfo{person}{Fabio Roli}.}
  \bibinfo{year}{2017}\natexlab{}.
\newblock \showarticletitle{Towards poisoning of deep learning algorithms with
  back-gradient optimization}. In \bibinfo{booktitle}{\emph{AISec}}.
\newblock


\bibitem[\protect\citeauthoryear{Mu{\~n}oz-Gonz{\'a}lez, Co, and
  Lupu}{Mu{\~n}oz-Gonz{\'a}lez et~al\mbox{.}}{2019}]%
        {munoz2019byzantine}
\bibfield{author}{\bibinfo{person}{Luis Mu{\~n}oz-Gonz{\'a}lez},
  \bibinfo{person}{Kenneth~T Co}, {and} \bibinfo{person}{Emil~C Lupu}.}
  \bibinfo{year}{2019}\natexlab{}.
\newblock \showarticletitle{Byzantine-robust federated machine learning through
  adaptive model averaging}.
\newblock \bibinfo{journal}{\emph{arXiv preprint arXiv:1909.05125}}
  (\bibinfo{year}{2019}).
\newblock


\bibitem[\protect\citeauthoryear{Nesterov et~al\mbox{.}}{Nesterov
  et~al\mbox{.}}{2018}]%
        {nesterov2018lectures}
\bibfield{author}{\bibinfo{person}{Yurii Nesterov} {et~al\mbox{.}}}
  \bibinfo{year}{2018}\natexlab{}.
\newblock \bibinfo{booktitle}{\emph{Lectures on convex optimization}}.
  Vol.~\bibinfo{volume}{137}.
\newblock \bibinfo{publisher}{Springer}.
\newblock


\bibitem[\protect\citeauthoryear{Nguyen, Do, Tran, Nguyen, Duong, Phan,
  Tjiputra, and Tran}{Nguyen et~al\mbox{.}}{2022}]%
        {nguyen2022deep}
\bibfield{author}{\bibinfo{person}{Anh Nguyen}, \bibinfo{person}{Tuong Do},
  \bibinfo{person}{Minh Tran}, \bibinfo{person}{Binh~X Nguyen},
  \bibinfo{person}{Chien Duong}, \bibinfo{person}{Tu Phan},
  \bibinfo{person}{Erman Tjiputra}, {and} \bibinfo{person}{Quang~D Tran}.}
  \bibinfo{year}{2022}\natexlab{}.
\newblock \showarticletitle{Deep federated learning for autonomous driving}. In
  \bibinfo{booktitle}{\emph{IEEE Intelligent Vehicles Symposium}}.
\newblock


\bibitem[\protect\citeauthoryear{Nguyen, Dakka, Diakiw, VerMilyea, Perugini,
  Hall, and Perugini}{Nguyen et~al\mbox{.}}{[n.d.]}]%
        {nguyen2022novel}
\bibfield{author}{\bibinfo{person}{TV Nguyen}, \bibinfo{person}{MA Dakka},
  \bibinfo{person}{SM Diakiw}, \bibinfo{person}{MD VerMilyea},
  \bibinfo{person}{M Perugini}, \bibinfo{person}{JMM Hall}, {and}
  \bibinfo{person}{D Perugini}.} \bibinfo{year}{[n.d.]}\natexlab{}.
\newblock \showarticletitle{A novel decentralized federated learning approach
  to train on globally distributed, poor quality, and protected private medical
  data}. In \bibinfo{booktitle}{\emph{Scientific Reports}}.
\newblock


\bibitem[\protect\citeauthoryear{Pan, Zhang, Wu, Xiao, Ji, and Yang}{Pan
  et~al\mbox{.}}{2020}]%
        {pan2020justinian}
\bibfield{author}{\bibinfo{person}{Xudong Pan}, \bibinfo{person}{Mi Zhang},
  \bibinfo{person}{Duocai Wu}, \bibinfo{person}{Qifan Xiao},
  \bibinfo{person}{Shouling Ji}, {and} \bibinfo{person}{Min Yang}.}
  \bibinfo{year}{2020}\natexlab{}.
\newblock \showarticletitle{Justinian's gaavernor: Robust distributed learning
  with gradient aggregation agent}. In \bibinfo{booktitle}{\emph{USENIX
  Security Symposium}}.
\newblock


\bibitem[\protect\citeauthoryear{Pasquini, Francati, and Ateniese}{Pasquini
  et~al\mbox{.}}{2022}]%
        {pasquini2022eluding}
\bibfield{author}{\bibinfo{person}{Dario Pasquini}, \bibinfo{person}{Danilo
  Francati}, {and} \bibinfo{person}{Giuseppe Ateniese}.}
  \bibinfo{year}{2022}\natexlab{}.
\newblock \showarticletitle{Eluding secure aggregation in federated learning
  via model inconsistency}. In \bibinfo{booktitle}{\emph{CCS}}.
\newblock


\bibitem[\protect\citeauthoryear{Pasquini, Raynal, and Troncoso}{Pasquini
  et~al\mbox{.}}{2023}]%
        {pasquini2022privacy}
\bibfield{author}{\bibinfo{person}{Dario Pasquini}, \bibinfo{person}{Mathilde
  Raynal}, {and} \bibinfo{person}{Carmela Troncoso}.}
  \bibinfo{year}{2023}\natexlab{}.
\newblock \showarticletitle{On the (In)security of Peer-to-Peer Decentralized
  Machine Learning}. In \bibinfo{booktitle}{\emph{IEEE Symposium on Security
  and Privacy}}.
\newblock


\bibitem[\protect\citeauthoryear{Polyak}{Polyak}{1963}]%
        {polyak1963gradient}
\bibfield{author}{\bibinfo{person}{Boris~Teodorovich Polyak}.}
  \bibinfo{year}{1963}\natexlab{}.
\newblock \showarticletitle{Gradient methods for the minimisation of
  functionals}. In \bibinfo{booktitle}{\emph{USSR Computational Mathematics and
  Mathematical Physics}}.
\newblock


\bibitem[\protect\citeauthoryear{Rieke, Hancox, Li, Milletari, Roth,
  Albarqouni, Bakas, Galtier, Landman, Maier-Hein, et~al\mbox{.}}{Rieke
  et~al\mbox{.}}{2020}]%
        {rieke2020future}
\bibfield{author}{\bibinfo{person}{Nicola Rieke}, \bibinfo{person}{Jonny
  Hancox}, \bibinfo{person}{Wenqi Li}, \bibinfo{person}{Fausto Milletari},
  \bibinfo{person}{Holger~R Roth}, \bibinfo{person}{Shadi Albarqouni},
  \bibinfo{person}{Spyridon Bakas}, \bibinfo{person}{Mathieu~N Galtier},
  \bibinfo{person}{Bennett~A Landman}, \bibinfo{person}{Klaus Maier-Hein},
  {et~al\mbox{.}}} \bibinfo{year}{2020}\natexlab{}.
\newblock \showarticletitle{The future of digital health with federated
  learning}. In \bibinfo{booktitle}{\emph{NPJ digital medicine}}.
\newblock


\bibitem[\protect\citeauthoryear{Shejwalkar and Houmansadr}{Shejwalkar and
  Houmansadr}{2021}]%
        {shejwalkar2021manipulating}
\bibfield{author}{\bibinfo{person}{Virat Shejwalkar} {and}
  \bibinfo{person}{Amir Houmansadr}.} \bibinfo{year}{2021}\natexlab{}.
\newblock \showarticletitle{Manipulating the Byzantine: Optimizing Model
  Poisoning Attacks and Defenses for Federated Learning}. In
  \bibinfo{booktitle}{\emph{NDSS}}.
\newblock


\bibitem[\protect\citeauthoryear{Shejwalkar, Houmansadr, Kairouz, and
  Ramage}{Shejwalkar et~al\mbox{.}}{2022}]%
        {shejwalkar2022back}
\bibfield{author}{\bibinfo{person}{Virat Shejwalkar}, \bibinfo{person}{Amir
  Houmansadr}, \bibinfo{person}{Peter Kairouz}, {and} \bibinfo{person}{Daniel
  Ramage}.} \bibinfo{year}{2022}\natexlab{}.
\newblock \showarticletitle{Back to the drawing board: A critical evaluation of
  poisoning attacks on production federated learning}. In
  \bibinfo{booktitle}{\emph{IEEE Symposium on Security and Privacy}}.
\newblock


\bibitem[\protect\citeauthoryear{Tolpegin, Truex, Gursoy, and Liu}{Tolpegin
  et~al\mbox{.}}{2020}]%
        {tolpegin2020data}
\bibfield{author}{\bibinfo{person}{Vale Tolpegin}, \bibinfo{person}{Stacey
  Truex}, \bibinfo{person}{Mehmet~Emre Gursoy}, {and} \bibinfo{person}{Ling
  Liu}.} \bibinfo{year}{2020}\natexlab{}.
\newblock \showarticletitle{Data poisoning attacks against federated learning
  systems}. In \bibinfo{booktitle}{\emph{ESORICS}}.
\newblock


\bibitem[\protect\citeauthoryear{Xiao, Rasul, and Vollgraf}{Xiao
  et~al\mbox{.}}{2017}]%
        {xiao2017online}
\bibfield{author}{\bibinfo{person}{Han Xiao}, \bibinfo{person}{Kashif Rasul},
  {and} \bibinfo{person}{Roland Vollgraf}.} \bibinfo{year}{2017}\natexlab{}.
\newblock \bibinfo{title}{Fashion-MNIST: a Novel Image Dataset for Benchmarking
  Machine Learning Algorithms}.
\newblock
\newblock
\showeprint[arXiv]{cs.LG/cs.LG/1708.07747}


\bibitem[\protect\citeauthoryear{Xie, Chen, Chen, and Li}{Xie
  et~al\mbox{.}}{2021}]%
        {xie2021crfl}
\bibfield{author}{\bibinfo{person}{Chulin Xie}, \bibinfo{person}{Minghao Chen},
  \bibinfo{person}{Pin-Yu Chen}, {and} \bibinfo{person}{Bo Li}.}
  \bibinfo{year}{2021}\natexlab{}.
\newblock \showarticletitle{Crfl: Certifiably robust federated learning against
  backdoor attacks}. In \bibinfo{booktitle}{\emph{ICML}}.
\newblock


\bibitem[\protect\citeauthoryear{Xie, Huang, Chen, and Li}{Xie
  et~al\mbox{.}}{2020}]%
        {xie2020dba}
\bibfield{author}{\bibinfo{person}{Chulin Xie}, \bibinfo{person}{Keli Huang},
  \bibinfo{person}{Pin-Yu Chen}, {and} \bibinfo{person}{Bo Li}.}
  \bibinfo{year}{2020}\natexlab{}.
\newblock \showarticletitle{Dba: Distributed backdoor attacks against federated
  learning}. In \bibinfo{booktitle}{\emph{ICLR}}.
\newblock


\bibitem[\protect\citeauthoryear{Xu, Yin, Fang, and Gong}{Xu
  et~al\mbox{.}}{2024}]%
        {xu2024robust}
\bibfield{author}{\bibinfo{person}{Yichang Xu}, \bibinfo{person}{Ming Yin},
  \bibinfo{person}{Minghong Fang}, {and} \bibinfo{person}{Neil~Zhenqiang
  Gong}.} \bibinfo{year}{2024}\natexlab{}.
\newblock \showarticletitle{Robust Federated Learning Mitigates Client-side
  Training Data Distribution Inference Attacks}. In
  \bibinfo{booktitle}{\emph{The Web Conference}}.
\newblock


\bibitem[\protect\citeauthoryear{Yar, Nita-Rotaru, and Oprea}{Yar
  et~al\mbox{.}}{2023}]%
        {yar2023backdoor}
\bibfield{author}{\bibinfo{person}{Gokberk Yar}, \bibinfo{person}{Cristina
  Nita-Rotaru}, {and} \bibinfo{person}{Alina Oprea}.}
  \bibinfo{year}{2023}\natexlab{}.
\newblock \showarticletitle{Backdoor Attacks in Peer-to-Peer Federated
  Learning}.
\newblock \bibinfo{journal}{\emph{arXiv preprint arXiv:2301.09732}}
  (\bibinfo{year}{2023}).
\newblock


\bibitem[\protect\citeauthoryear{Yin, Chen, Ramchandran, and Bartlett}{Yin
  et~al\mbox{.}}{2018}]%
        {Yin18}
\bibfield{author}{\bibinfo{person}{Dong Yin}, \bibinfo{person}{Yudong Chen},
  \bibinfo{person}{Kannan Ramchandran}, {and} \bibinfo{person}{Peter
  Bartlett}.} \bibinfo{year}{2018}\natexlab{}.
\newblock \showarticletitle{Byzantine-Robust Distributed Learning: Towards
  Optimal Statistical Rates}. In \bibinfo{booktitle}{\emph{ICML}}.
\newblock


\bibitem[\protect\citeauthoryear{Yin, Xu, Fang, and Gong}{Yin
  et~al\mbox{.}}{2024}]%
        {yin2024poisoning}
\bibfield{author}{\bibinfo{person}{Ming Yin}, \bibinfo{person}{Yichang Xu},
  \bibinfo{person}{Minghong Fang}, {and} \bibinfo{person}{Neil~Zhenqiang
  Gong}.} \bibinfo{year}{2024}\natexlab{}.
\newblock \showarticletitle{Poisoning Federated Recommender Systems with Fake
  Users}. In \bibinfo{booktitle}{\emph{The Web Conference}}.
\newblock


\bibitem[\protect\citeauthoryear{Zhang, Fang, Liu, Yang, Liu, and Zhu}{Zhang
  et~al\mbox{.}}{2022b}]%
        {zhang2022net}
\bibfield{author}{\bibinfo{person}{Xin Zhang}, \bibinfo{person}{Minghong Fang},
  \bibinfo{person}{Zhuqing Liu}, \bibinfo{person}{Haibo Yang},
  \bibinfo{person}{Jia Liu}, {and} \bibinfo{person}{Zhengyuan Zhu}.}
  \bibinfo{year}{2022}\natexlab{b}.
\newblock \showarticletitle{Net-fleet: Achieving linear convergence speedup for
  fully decentralized federated learning with heterogeneous data}. In
  \bibinfo{booktitle}{\emph{MobiHoc}}.
\newblock


\bibitem[\protect\citeauthoryear{Zhang, Cao, Jia, and Gong}{Zhang
  et~al\mbox{.}}{2022a}]%
        {zhang2022fldetector}
\bibfield{author}{\bibinfo{person}{Zaixi Zhang}, \bibinfo{person}{Xiaoyu Cao},
  \bibinfo{person}{Jinyuan Jia}, {and} \bibinfo{person}{Neil~Zhenqiang Gong}.}
  \bibinfo{year}{2022}\natexlab{a}.
\newblock \showarticletitle{FLDetector: Defending federated learning against
  model poisoning attacks via detecting malicious clients}. In
  \bibinfo{booktitle}{\emph{KDD}}.
\newblock


\bibitem[\protect\citeauthoryear{Zhang, Fang, Huang, and Liu}{Zhang
  et~al\mbox{.}}{2024}]%
        {zhang2024poisoning}
\bibfield{author}{\bibinfo{person}{Zifan Zhang}, \bibinfo{person}{Minghong
  Fang}, \bibinfo{person}{Jiayuan Huang}, {and} \bibinfo{person}{Yuchen Liu}.}
  \bibinfo{year}{2024}\natexlab{}.
\newblock \showarticletitle{Poisoning Attacks on Federated Learning-based
  Wireless Traffic Prediction}. In \bibinfo{booktitle}{\emph{IFIP/IEEE
  Networking}}.
\newblock


\end{thebibliography}


\appendix



\section{Proof of Theorem~\ref{Theorem1}} \label{sec:appendix_1}

We denote by $\bm{g}(\bm{w}_{i}^t)$ the stochastic gradient that client $i$ computed based on the model $\bm{w}_{i}^t$, then we have that:
\begin{align}
\label{client_i_sgd}
\bm{w}_{i}^{t  +\frac{1}{2}} = \bm{w}_{i}^{t } - \eta\bm{g}(\bm{w}_{i}^{t}),
\end{align}
where $\eta>0$ is the learning rate.

In the following proof, we ignore the superscript $t$ in $ \mathcal{S}_i^{t}$ for simplicity.
Thus we have that:
\begin{align}
\label{client_i_agg_minus_w_i_t}
& \bm{w}_{i}^{t+1} - \bm{w}_{i}^{t} \nonumber \\
&\stackrel{(a)} =\alpha \bm{w}_{i}^{t+\frac{1}{2}} + (1-\alpha)  \text{AGG} \{     \bm{w}_j^{t+\frac{1}{2}}, j \in \mathcal{N}_i \} - \bm{w}_{i}^{t}  \nonumber \\
&\stackrel{(b)} = \alpha \bm{w}_{i}^{t+\frac{1}{2}}  +  (1-\alpha)  \frac{1}{| \mathcal{S}_i|} \sum\limits_{j \in \mathcal{S}_i}  (   \bm{w}_j^{t+\frac{1}{2}} -  \bm{w}_{i}^{t+\frac{1}{2}} +  \bm{w}_{i}^{t+\frac{1}{2}} ) - \bm{w}_{i}^{t} \nonumber \\
&= [\alpha + (1-\alpha)] \bm{w}_{i}^{t+\frac{1}{2}} + \frac{1-\alpha}{| \mathcal{S}_i|} \sum\limits_{j \in \mathcal{S}_i}  (   \bm{w}_j^{t+\frac{1}{2}} -  \bm{w}_{i}^{t+\frac{1}{2}} )  - \bm{w}_{i}^{t}   \nonumber \\
&= \bm{w}_{i}^{t+\frac{1}{2}} + \frac{1-\alpha}{| \mathcal{S}_i|} \sum\limits_{j \in \mathcal{S}_i}  (   \bm{w}_j^{t+\frac{1}{2}} -  \bm{w}_{i}^{t+\frac{1}{2}} )  - \bm{w}_{i}^{t}   \nonumber \\
&\stackrel{(c)} =  [\bm{w}_{i}^t -  \eta \bm{g}(\bm{w}_{i}^{t}) - \bm{w}_{i}^{t}] + \frac{1-\alpha}{| \mathcal{S}_i|} \sum\limits_{j \in \mathcal{S}_i}  (   \bm{w}_j^{t+\frac{1}{2}} -  \bm{w}_{i}^{t+\frac{1}{2}} )  \nonumber \\
&= - \eta \bm{g}(\bm{w}_{i}^{t}) + \frac{1-\alpha}{| \mathcal{S}_i|} \sum\limits_{j \in \mathcal{S}_i}  (   \bm{w}_j^{t+\frac{1}{2}} -  \bm{w}_{i}^{t+\frac{1}{2}} ),
\end{align}
where $(a)$ is because of Eq.~(\ref{local_model_agg}); $(b)$ is due to Eq.~(\ref{our_agg_combine}); $(c)$ is due to Eq.~(\ref{client_i_sgd}).

According to Assumption~\ref{assumption_2}, when the loss function is $L$-smooth, one has that:
\begin{align}
\label{smooth_xy}
F(\bm{w}_{i}^{t+1}) \le F(\bm{w}_{i}^t) + \left\langle  \nabla F(\bm{w}_{i}^t), \bm{w}_{i}^{t+1} - \bm{w}_{i}^t \right\rangle  \nonumber \\
+ \frac{L}{2} \left\| \bm{w}_{i}^{t+1} - \bm{w}_{i}^t  \right\|^2.
\end{align}

Combining Eq.~(\ref{client_i_agg_minus_w_i_t}) and Eq.~(\ref{smooth_xy}), we obtain:
\begin{align}
\label{smooth_xy_first_equ}
& F(\bm{w}_{i}^{t+1})  \nonumber \\
& \le  F(\bm{w}_{i}^t) + \langle  \nabla F(\bm{w}_{i}^t), - \eta \bm{g}(\bm{w}_{i}^{t}) + \frac{1-\alpha}{| \mathcal{S}_i|} \sum\limits_{j \in \mathcal{S}_i}  (   \bm{w}_j^{t+\frac{1}{2}} -  \bm{w}_{i}^{t+\frac{1}{2}} ) \rangle \nonumber \\
& \quad + \frac{L}{2} \| - \eta \bm{g}(\bm{w}_{i}^{t}) + \frac{1-\alpha}{| \mathcal{S}_i|} \sum\limits_{j \in \mathcal{S}_i}  (   \bm{w}_j^{t+\frac{1}{2}} -  \bm{w}_{i}^{t+\frac{1}{2}} )  \|^2 \nonumber \\ 
& \stackrel{(a)} \le  F(\bm{w}_{i}^t) - \eta \langle \nabla F(\bm{w}_{i}^t), \bm{g}(\bm{w}_{i}^{t}) \rangle   \nonumber \\
& \quad + \langle  \nabla F(\bm{w}_{i}^t),  \frac{1-\alpha}{| \mathcal{S}_i|} \sum\limits_{j \in \mathcal{S}_i}  (   \bm{w}_j^{t+\frac{1}{2}} -  \bm{w}_{i}^{t+\frac{1}{2}} )  \rangle   +  L \eta^2  \| \bm{g}(\bm{w}_{i}^{t})  \|^2 \nonumber \\
& \quad + L \| \frac{1-\alpha}{| \mathcal{S}_i|} \sum\limits_{j \in \mathcal{S}_i}  (   \bm{w}_j^{t+\frac{1}{2}} -  \bm{w}_{i}^{t+\frac{1}{2}} )   \|^2,
\end{align}
where $(a)$ is because of Lemma~\ref{Lem_Sum_Norm} in Appendix~\ref{sec:appendix_3}.

Taking expectation on both sides of Eq.~(\ref{smooth_xy_first_equ}), one obtains:
\begin{align}
& \EX [ F(\bm{w}_{i}^{t+1})] \nonumber \\
& \stackrel{(a)} \le \EX [  F(\bm{w}_{i}^{t}) - \eta  \| \nabla F(\bm{w}_{i}^t)  \|^2 \nonumber \\
& \quad + (1-\alpha)\langle \nabla F(\bm{w}_{i}^t),  \frac{1}{| \mathcal{S}_i|} \sum\limits_{j \in \mathcal{S}_i}  (   \bm{w}_j^{t+\frac{1}{2}} -  \bm{w}_{i}^{t+\frac{1}{2}} )  \rangle    
\nonumber \\
&  \quad +  L \eta^2  \| \bm{g}(\bm{w}_{i}^{t}) - \nabla F(\bm{w}_i^t) + \nabla F(\bm{w}_i^t)\|^2 \nonumber\\
&  \quad + L (1-\alpha)^2 \| \frac{1}{| \mathcal{S}_i|} \sum\limits_{j \in \mathcal{S}_i}  (   \bm{w}_j^{t+\frac{1}{2}} -  \bm{w}_{i}^{t+\frac{1}{2}} )   \|^2 ]  \nonumber \\
& \overset{(b)}{\leq}  \EX [  F(\bm{w}_{i}^{t}) - \eta  \| \nabla F(\bm{w}_{i}^t)  \|^2 + 2 L \eta^2 \| \nabla F(\bm{w}_{i}^t)  \|^2   \nonumber \\
& \quad + (1-\alpha)\langle \nabla F(\bm{w}_{i}^t),  \frac{1}{| \mathcal{S}_i|} \sum\limits_{j \in \mathcal{S}_i}  (   \bm{w}_j^{t+\frac{1}{2}} -  \bm{w}_{i}^{t+\frac{1}{2}} )  \rangle    
\nonumber \\
& \quad +  2 L \eta^2  \delta ^2 + L (1-\alpha)^2 \| \frac{1}{| \mathcal{S}_i|} \sum\limits_{j \in \mathcal{S}_i}  (   \bm{w}_j^{t+\frac{1}{2}} -  \bm{w}_{i}^{t+\frac{1}{2}} )   \|^2 ] \nonumber \\
& = \EX [   F(\bm{w}_{i}^{t}) -(\eta- 2L \eta^2)\| \nabla F(\bm{w}_{i}^t)  \|^2  +2 L \eta^2 \delta ^2  \nonumber \\
&\quad + (1-\alpha)\langle \nabla F(\bm{w}_{i}^t),  \frac{1}{| \mathcal{S}_i|} \sum\limits_{j \in \mathcal{S}_i}  (   \bm{w}_j^{t+\frac{1}{2}} -  \bm{w}_{i}^{t+\frac{1}{2}} )  \rangle  \nonumber \\
& \quad  +  L (1-\alpha)^2 \| \frac{1}{| \mathcal{S}_i|} \sum\limits_{j \in \mathcal{S}_i}  (   \bm{w}_j^{t+\frac{1}{2}} -  \bm{w}_{i}^{t+\frac{1}{2}} )   \|^2 ] \nonumber \\
&  \stackrel{(c)} \le \EX [  F(\bm{w}_{i}^t)  -  \eta(1 - 2L \eta)  \| \nabla F(\bm{w}_{i}^t) \|^2 + 2 L \eta_t^2 \delta^2 \nonumber \\
& \quad +  (1-\alpha) \|  \nabla F(\bm{w}_{i}^t) \| \cdot \| \frac{1}{\left| \mathcal{S}_i \right|} \sum\limits_{j \in \mathcal{S}_i} (\bm{w}_j^{t+\frac{1}{2}} - \bm{w}_{i}^{t+\frac{1}{2}} )\| \nonumber \\
& \quad + L (1-\alpha)^2\|  \frac{1}{| \mathcal{S}_i |} \sum\limits_{j \in \mathcal{S}_i} (\bm{w}_j^{t+\frac{1}{2}} - \bm{w}_{i}^{t+\frac{1}{2}} ) \|^2  ], 
\end{align}
where $(a)$ is because of Assumption~\ref{assumption_3} that $\EX [\bm{g}(\bm{w}_{i}^{t}) ] = \nabla F(\bm{w}_{i}^t)$; $(b)$ is due to Lemma~\ref{Lem_Sum_Norm} in Appendix~\ref{sec:appendix_3}, and Assumption~\ref{assumption_3} that $\EX [ \|  \bm{g} (\bm{w}_{i}^t) -  \nabla F(\bm{w}_{i}^t) \|]^2 \le  \delta ^2 $; $(c)$ is because of Cauchy-Schwarz inequality.

Next, we bound the term $\| \frac{1}{| \mathcal{S}_i |}  \sum\limits_{j \in \mathcal{S}_i} (\bm{w}_j^{t+\frac{1}{2}} - \bm{w}_{i}^{t+\frac{1}{2}} )  \|$:
\begin{align}
&\| \frac{1}{| \mathcal{S}_i |}  \sum\limits_{j \in \mathcal{S}_i} (\bm{w}_j^{t+\frac{1}{2}} - \bm{w}_{i}^{t+\frac{1}{2}} ) \|
=  \|  \frac{1}{| \mathcal{S}_i |}  \sum\limits_{j \in \mathcal{S}_i} (\bm{w}_j^{t+\frac{1}{2}} - \bm{w}_{i}^{t+\frac{1}{2}} )  \| \nonumber \\
& \le  \frac{1}{| \mathcal{S}_i |}  \sum\limits_{j \in \mathcal{S}_i} \| \bm{w}_j^{t+\frac{1}{2}} - \bm{w}_{i}^{t+\frac{1}{2}}    \| 
 \stackrel{(a)} \le \frac{1}{| \mathcal{S}_i |}  \sum\limits_{j \in \mathcal{S}_i} \gamma  \|\bm{w}_{i}^{t+\frac{1}{2}}  \| 
 \stackrel{(b)} \le \gamma  \psi, \nonumber
\end{align}
where $(a)$ is due to Eq.~(\ref{our_agg}) and \(\gamma \cdot \exp(-\kappa \cdot \lambda(t)) \le \gamma\); $(b)$ is because of Assumption~\ref{assumption_4} that $\| \bm{w}_{i}^t \|  \le \psi$.

Due to Assumption~\ref{assumption_4}, then we have:
\begin{align}
\|  \nabla F(\bm{w}_{i}^t) \| \cdot \| \frac{1}{\left| \mathcal{S}_i \right|} \sum\limits_{j \in \mathcal{S}_i} (\bm{w}_j^{t+\frac{1}{2}} - \bm{w}_{i}^{t+\frac{1}{2}} ) \| \le \gamma \psi \rho .
\end{align}

If $\gamma$ satisfies $\gamma \le \frac{\rho }{L\psi(1-\alpha)}$, then we have $ L \gamma ^2 \psi^2 (1-\alpha)^2 \le \gamma \rho\psi (1-\alpha)$, and we further have:
\begin{align}
\EX [ F(\bm{w}_{i}^{t+1})   ] 
& \le  \EX [  F(\bm{w}_{i}^t)  -  \eta (1 - 2 L \eta)  \| \nabla F(\bm{w}_{i}^t) \|^2  \nonumber \\
& \quad + 2 L \eta^2 \delta^2 + 2\gamma \rho\psi  (1-\alpha) ].
\end{align}

If the learning rate $\eta$ satisfies $\eta \le \frac{1}{4L}$, one has that:
\begin{align}
\label{non_convec_part1}
\EX [ F(\bm{w}_{i}^{t+1})   ] 
& \le  \EX [  F(\bm{w}_{i}^t)  - \frac{\eta_t}{2}  \| \nabla F(\bm{w}_{i}^t) \|^2 + 2 L \eta^2 \delta^2  \nonumber \\
& \quad + 2\gamma \rho\psi  (1-\alpha)].
\end{align}

By Assumption~\ref{assumption_1}, since $F(\cdot)$ is $\mu$-strongly convex, then one has the following Polyak-Łojasiewicz (PL) inequality~\cite{polyak1963gradient}
$
\|  \nabla F(\bm{w}_{i}^t)   \|^2 \ge 2\mu (  F(\bm{w}_{i}^t)  - F(\bm{w}^*) ).
$
Therefore, we further have that:
\begin{align}
\EX [ F(\bm{w}_{i}^{t+1})   ] 
& \le  \EX [  F(\bm{w}_{i}^t)  - \mu \eta  (  F(\bm{w}_{i}^t)  - F(\bm{w}^* ))  \nonumber \\
& \quad + 2 L \eta^2 \delta^2 +2\gamma \rho\psi  (1-\alpha)  ].
\end{align}

Subtracting $F(\bm{w}^*)$ on both sides, we get:

\begin{align}
\EX [ F(\bm{w}_{i}^{t+1})  - F(\bm{w}^*) ] 
& \le  (1-\mu \eta) \EX [   F(\bm{w}_{i}^t)  - F(\bm{w}^*) ]   \nonumber \\
& \quad + 2 L \eta^2 \delta^2 +2\gamma \rho\psi  (1-\alpha) ].
\end{align}

By choosing $\eta \le \frac{1}{\mu}$, then we can guarantee that $1-\mu \eta\ge 0$. Telescoping over $t=0, 1,...,T-1$, one has that:
\begin{align}
&  \EX [ F(\bm{w}_{i}^{t+1})  -  F(\bm{w}^*) ]   \nonumber \\
&\le  (1-\mu \eta )^T  [   F(\bm{w}_{i}^0)  -  F(\bm{w}^*) ] + \sum\limits_{t=0}^{T-1} (1-\mu \eta)^T  2 L \eta^2 \delta^2  \nonumber \\
& \quad+ \sum\limits_{t=0}^{T-1} (1-\mu \eta)^T 2\gamma \rho\psi  (1-\alpha) \nonumber \\
&= (1-\mu \eta)^T  [   F(\bm{w}_{i}^0)  - F(\bm{w}^*) ] + \frac{ 2 L \eta \delta^2}{\mu}
 + \frac{2 \gamma   \rho \psi (1-\alpha)}{\mu \eta}. \nonumber
\end{align}

\section{Proof of Theorem~\ref{Theorem2}} \label{sec:appendix_2}

According to Eq.~(\ref{non_convec_part1}), we have the following:
\begin{align}
\EX [ F(\bm{w}_{i}^{t+1})   ] 
& \le  \EX [  F(\bm{w}_{i}^t)  - \frac{\eta_t}{2}  \| \nabla F(\bm{w}_{i}^t) \|^2 + 2 L \eta^2 \delta^2  \nonumber \\
& \quad + 2\gamma \rho\psi  (1-\alpha)].
\end{align}

Rearranging the term, one obtains that:
\begin{align}
\frac{\eta_t}{2} \EX [  \| \nabla F(\bm{w}_{i}^t) \|^2 ] 
& \le  \EX [ F(\bm{w}_{i}^{t}) -   F(\bm{w}_{i}^{t+1}) ] + 2 L \eta^2 \delta^2  \nonumber \\
& \quad + 2\gamma \rho\psi  (1-\alpha) ].
\end{align}

By telescoping over $t=0, 1,...,T-1$ and taking an average over $T$, then we get:
\begin{align}
\frac{1}{T}  \sum\limits_{t=0}^{T-1}   \EX [   \| \nabla F(\bm{w}_{i}^t) \|^2 ] 
& \le \frac{2  [   F(\bm{w}_{i}^0)  -  F(\bm{w}^*) ]}{\eta T}   + 4 L \eta \delta^2  \nonumber \\
&\quad + \frac{4 \gamma   \rho\psi (1-\alpha)}{\eta}. 
\end{align}

\section{Useful Technical Lemma} \label{sec:appendix_3}

\begin{lem}
\label{Lem_Sum_Norm}
Given a set of vectors $\bm{x}_0, \bm{x}_1, \ldots, \bm{x}_{n-1}$ with $\bm{x}_i \in \mathbb{R}^d$ for all $i \in \{0,1,\cdots,n-1\}$, we have the following:
\begin{align*}
   \| \sum_{i = 0}^{n-1} \bm{x}_i \|^2 \leq n \sum_{i = 0}^{n-1} \| \bm{x}_i \|^2.
\end{align*}
\end{lem}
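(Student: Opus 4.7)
The plan is to prove this elementary inequality by a direct application of the Cauchy--Schwarz inequality (equivalently, Jensen's inequality applied to the convex function $\|\cdot\|^2$). The key idea is to rewrite $\sum_{i=0}^{n-1} \bm{x}_i$ as the inner product between the all-ones vector $\bm{1}_n \in \mathbb{R}^n$ and the ``stacked'' scalar coefficients, exploiting the identity $\sum_i \bm{x}_i = \sum_i 1 \cdot \bm{x}_i$.

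First, I would treat the squared norm coordinatewise. Writing $\bm{x}_i = (x_{i,1}, \ldots, x_{i,d})$, one has $\| \sum_i \bm{x}_i \|^2 = \sum_{k=1}^{d} \bigl( \sum_{i=0}^{n-1} x_{i,k} \bigr)^2$. For each fixed coordinate $k$, the Cauchy--Schwarz inequality applied to the vectors $(1,1,\ldots,1) \in \mathbb{R}^n$ and $(x_{0,k}, \ldots, x_{n-1,k}) \in \mathbb{R}^n$ yields $\bigl( \sum_{i=0}^{n-1} x_{i,k} \bigr)^2 \le n \sum_{i=0}^{n-1} x_{i,k}^2$. Summing over $k = 1, \ldots, d$ and swapping the order of summation gives exactly the claim.

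Alternatively, and perhaps more cleanly, I would argue directly at the vector level: by Cauchy--Schwarz in $\mathbb{R}^d$ applied to inner products, or by noting that for any vectors $\bm{u}_i$ in a Hilbert space, $\langle \sum_i \bm{u}_i, \sum_i \bm{u}_i \rangle = \sum_{i,j} \langle \bm{u}_i, \bm{u}_j \rangle \le \sum_{i,j} \tfrac{1}{2}(\|\bm{u}_i\|^2 + \|\bm{u}_j\|^2) = n \sum_i \|\bm{u}_i\|^2$, where the inequality uses $2 \langle \bm{u}_i, \bm{u}_j \rangle \le \|\bm{u}_i\|^2 + \|\bm{u}_j\|^2$ (Young's inequality with parameter $1$).

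There is no substantive obstacle here — this is a textbook inequality. The only thing to be careful about is to choose the cleanest presentation; I would favor the Cauchy--Schwarz route since it makes the factor $n$ explicit as the squared norm $\|\bm{1}_n\|^2 = n$ of the all-ones vector. Equality holds iff all $\bm{x}_i$ are equal, which is consistent with the tight case of Cauchy--Schwarz.
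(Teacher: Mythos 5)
Your proof is correct. Note that the paper itself states Lemma~\ref{Lem_Sum_Norm} in its appendix without any proof, treating it as a standard fact, so there is no authorial argument to compare against; either of your two routes (coordinatewise Cauchy--Schwarz against the all-ones vector, or expanding $\| \sum_i \bm{x}_i \|^2$ into pairwise inner products and bounding each via $2\langle \bm{u}_i, \bm{u}_j\rangle \le \|\bm{u}_i\|^2 + \|\bm{u}_j\|^2$) is a complete and valid justification, and your equality characterization is also correct.
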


\begin{table}[t]
	\caption{The CNN architecture.}
	\centering
	\setlength{\doublerulesep}{3\arrayrulewidth}
	\vspace{1mm}
	\footnotesize 
	\begin{tabular}{|c|c|} \hline 
		{Layer} & {Size} \\ \hline \hline
		{Input} & { $28\times28\times1$}\\ \hline
		{Convolution + ReLU} & { $3\times3\times30$}\\ \hline
		{Max Pooling} & { $2\times2$}\\ \hline
		{Convolution + ReLU} & { $3\times3\times50$}\\ \hline
		{Max Pooling} & { $2\times2$}\\ \hline
		{Fully Connected + ReLU} & {100}\\ \hline
		{Softmax} & {10}\\ \hline
	\end{tabular}
	\label{cnn_arch}
\end{table}

\begin{figure*}[!t]
	\centering
	\captionsetup[subfloat]{textfont=scriptsize}
	\subfloat[Regular-(20, 10) graph.]{\includegraphics[width=0.152 \textwidth]{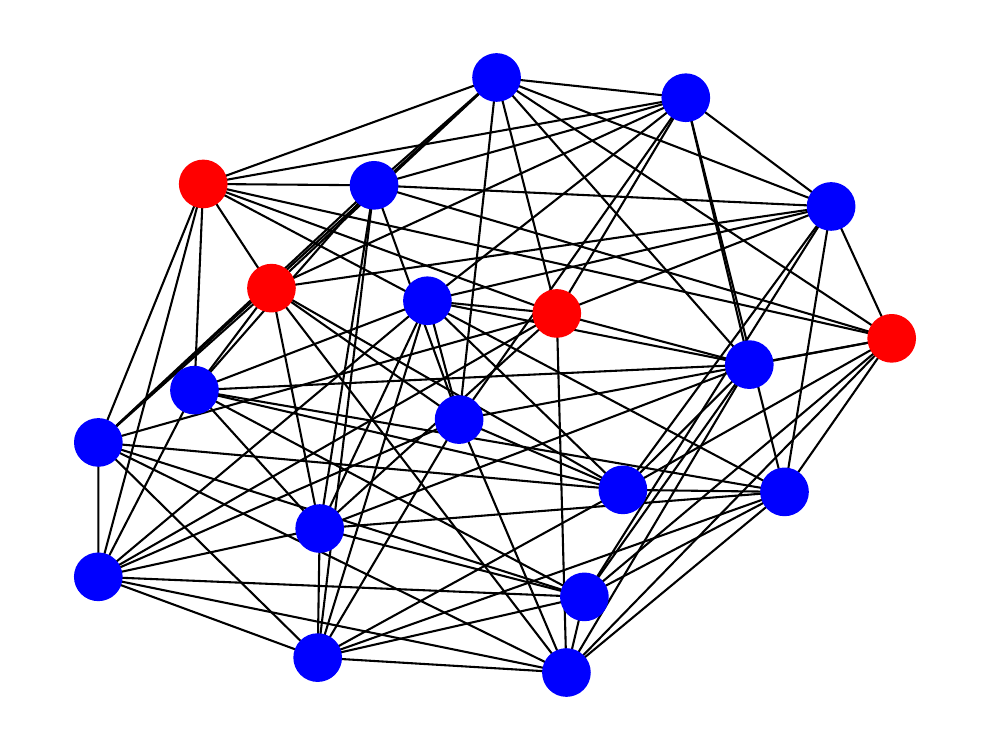}\label{fig_no_label_figure_regular_n20_10deg}}
	\subfloat[Regular-(30, 15) graph.]{\includegraphics[width=0.152 \textwidth]{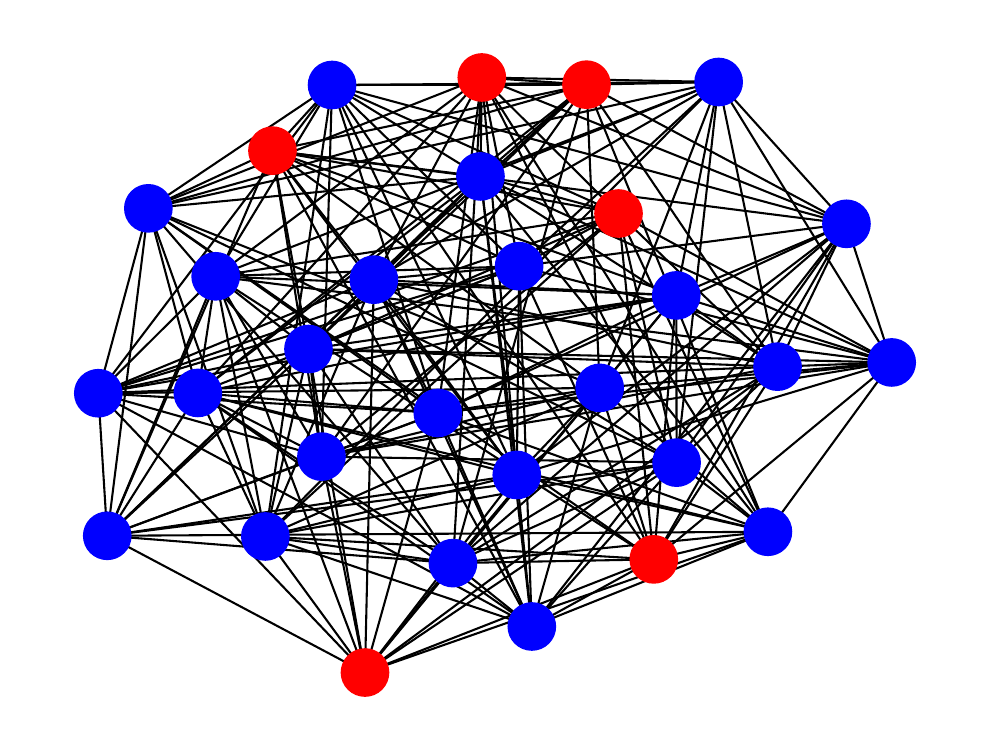}\label{fig_no_label_figure_regular_n30_15deg}} 
	\subfloat[Regular-(10, 5) graph.]{\includegraphics[width=0.152 \textwidth]{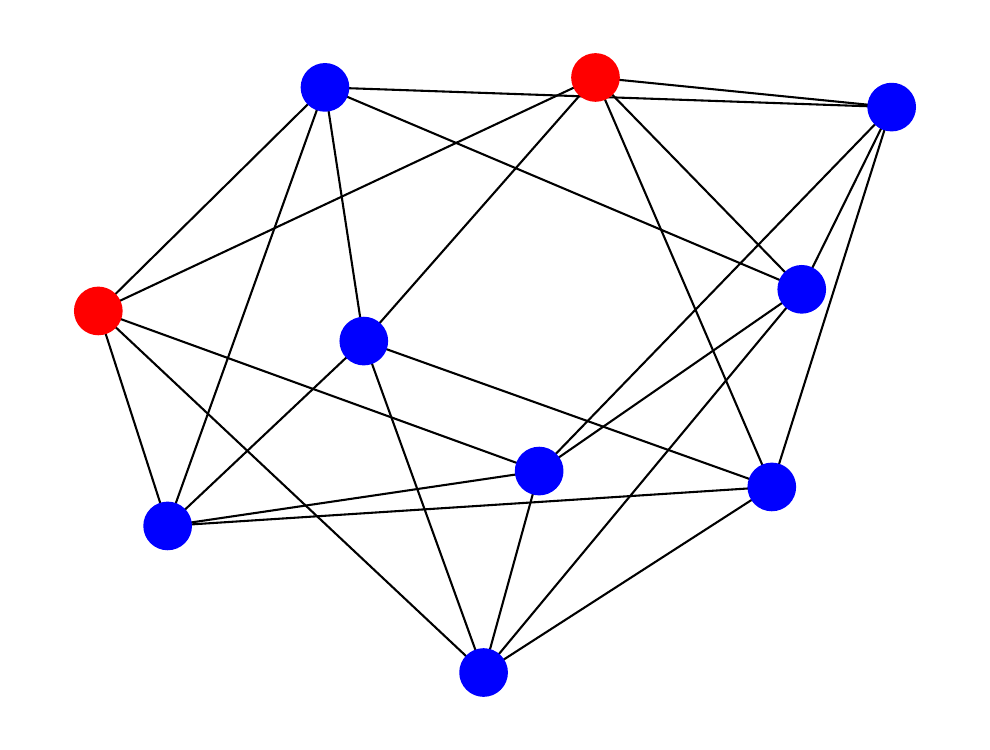}\label{fig_no_label_figure_regular_n10_5deg}}
	\subfloat[Regular-(40, 20) graph.]{\includegraphics[width=0.152 \textwidth]{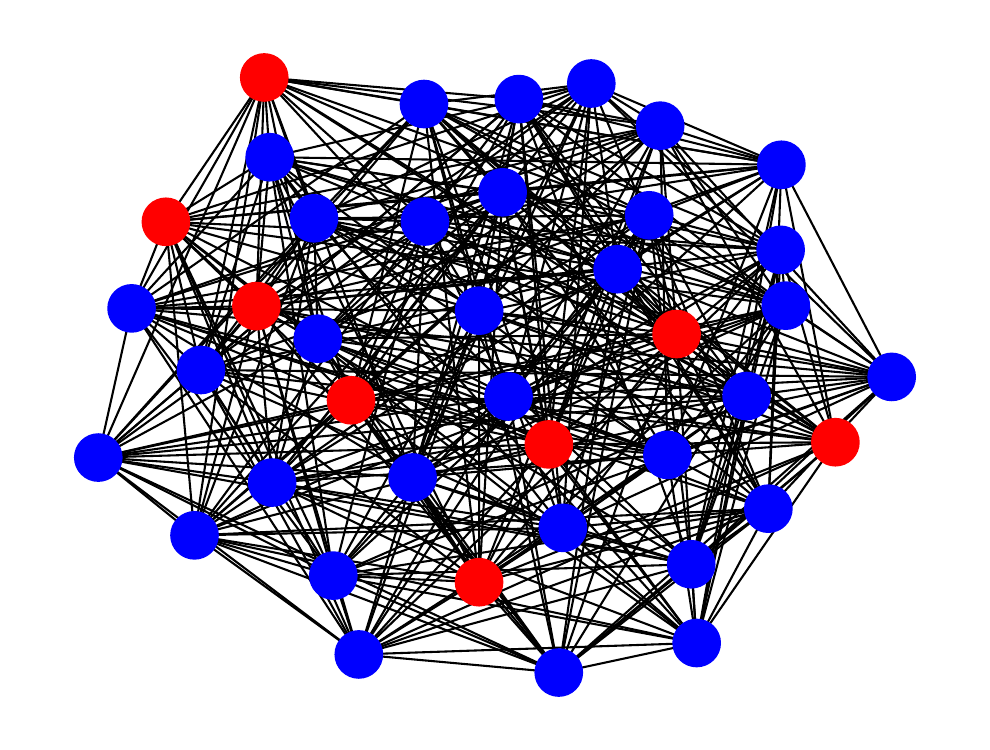}\label{fig_no_label_figure_regular_n40_20deg}} 
	\subfloat[Regular-(50, 25) graph.]{\includegraphics[width=0.152 \textwidth]{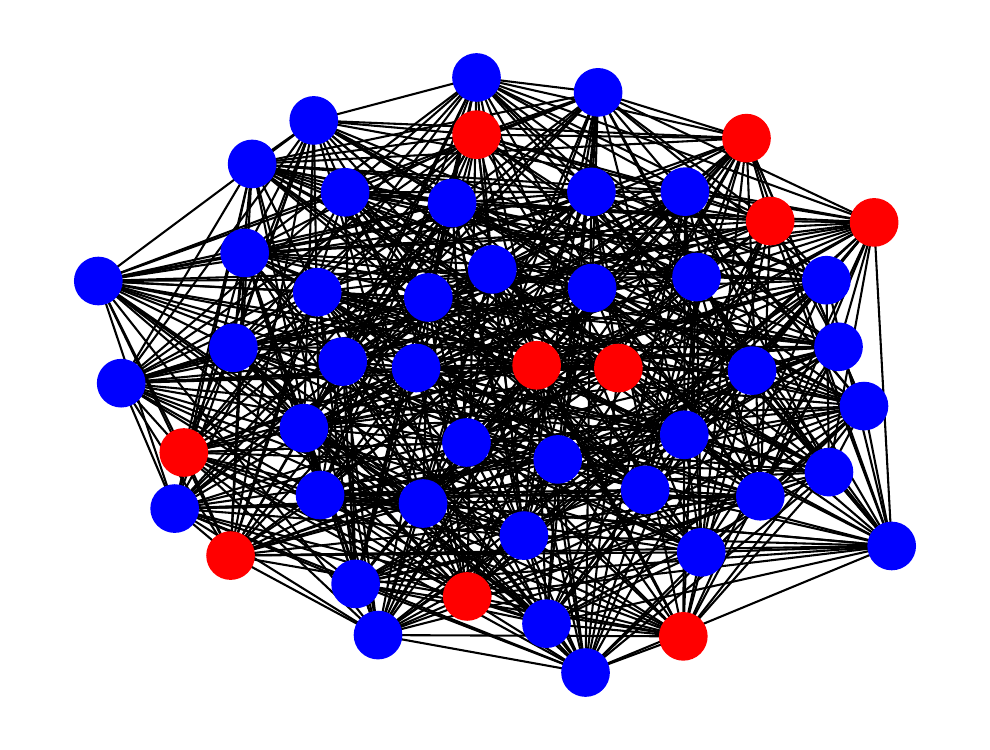}\label{fig_no_label_figure_regular_n50_25deg}} 
	\subfloat[Complete graph.]{\includegraphics[width=0.152 \textwidth]{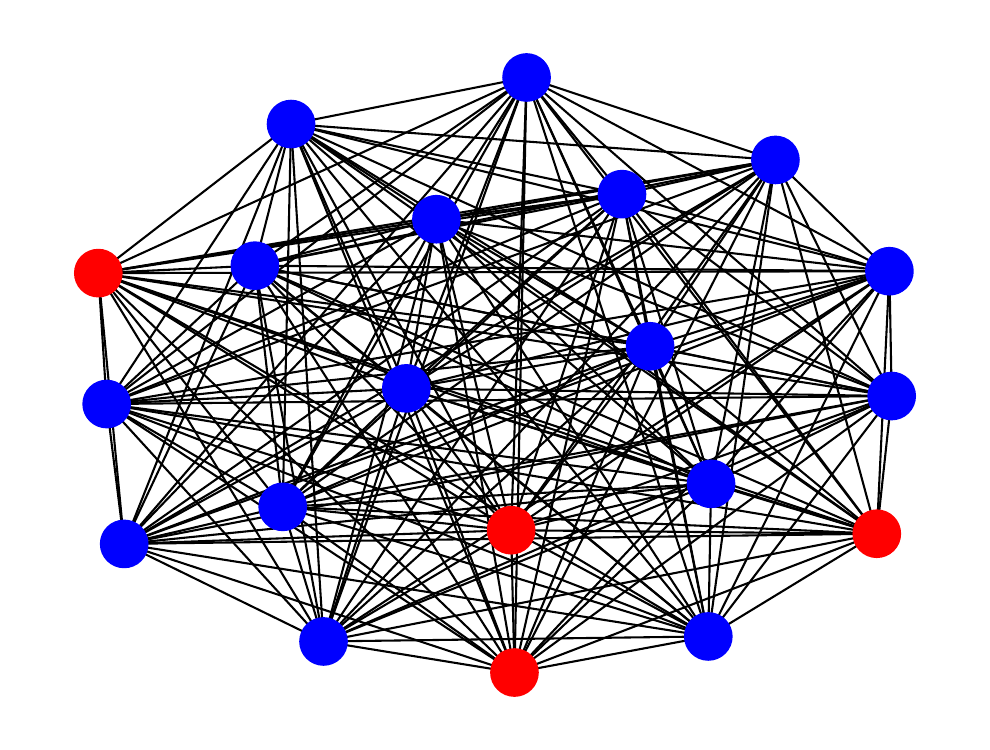}\label{fig_no_label_figure_adjacency_full_20}}
	\\
	\subfloat[Erdős–Rényi graph.]{\includegraphics[width=0.152 \textwidth]{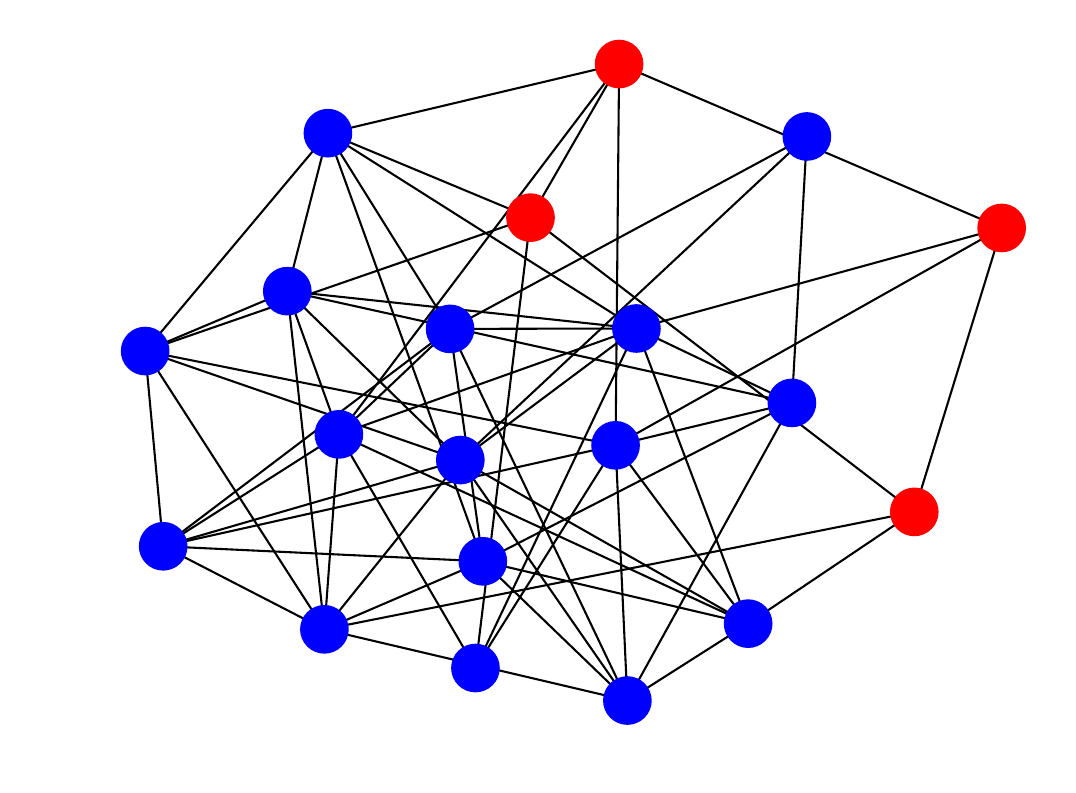}\label{fig_ER_graph}} 
	\subfloat[Small-world graph.]{\includegraphics[width=0.152 \textwidth]{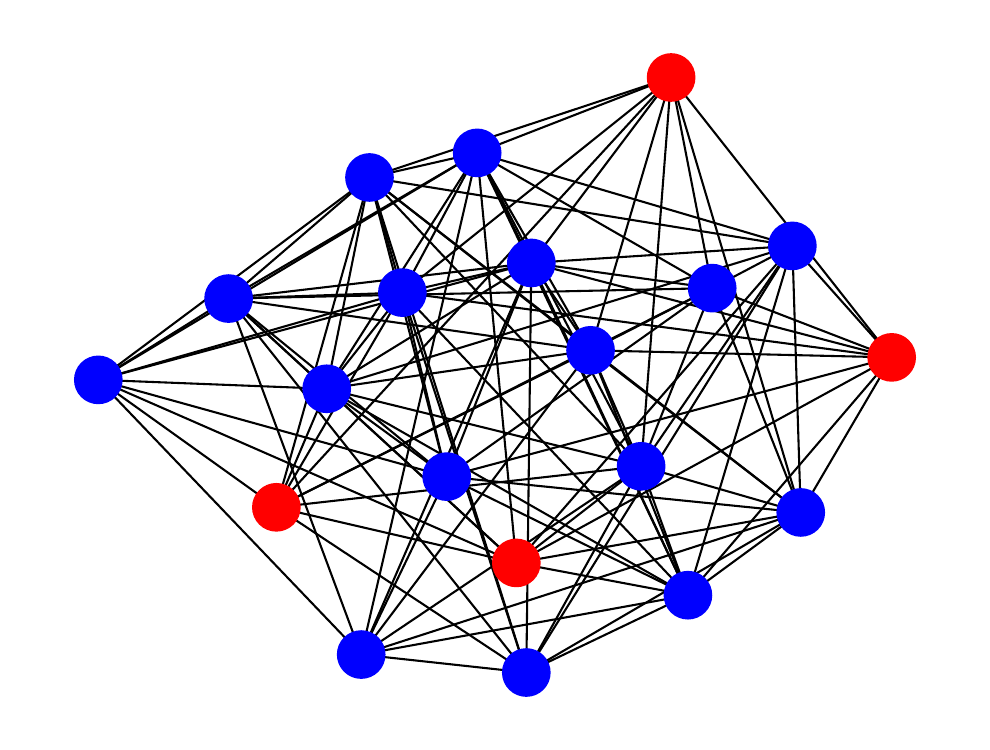}\label{fig_small_graph}} 
	\subfloat[Ring graph.]{\includegraphics[width=0.152 \textwidth]{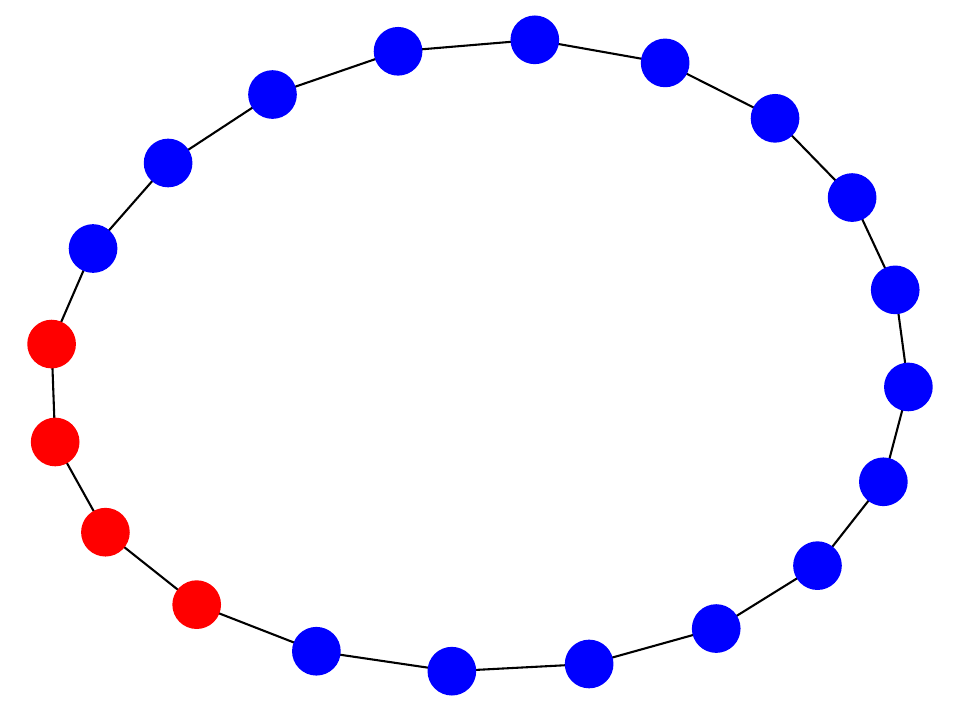}\label{fig_ring_graph}} 
	\subfloat[Graph with FEMB 0.16.]{\includegraphics[width=0.152 \textwidth]{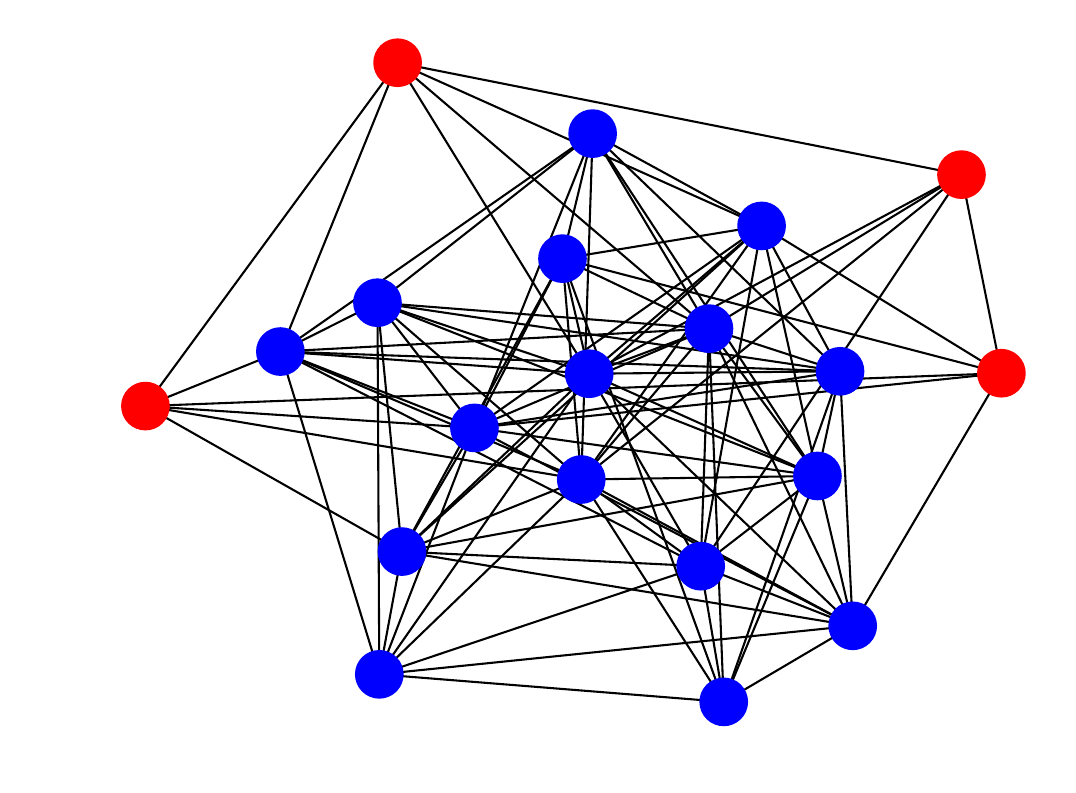}\label{fig_FEMB_016}} 
	\subfloat[Graph with FEMB 0.22.]{\includegraphics[width=0.152 \textwidth]{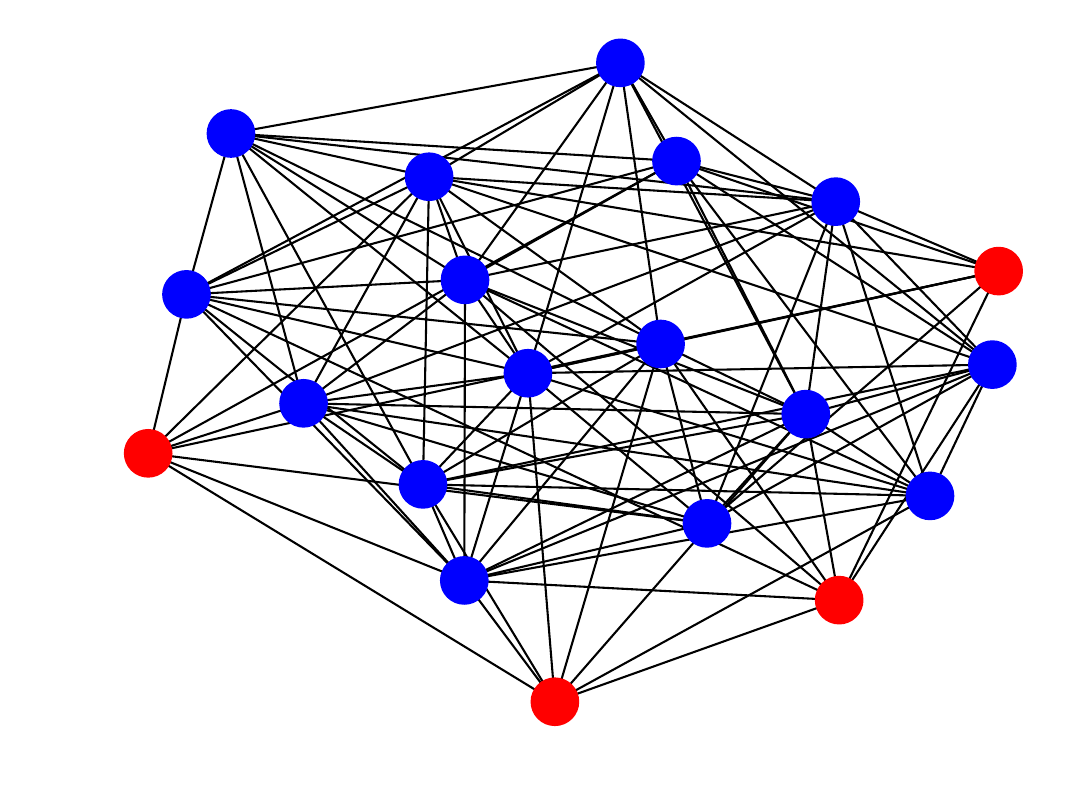}\label{fig_FEMB_022}}
	\subfloat[Graph with FEMB 0.32.]{\includegraphics[width=0.152 \textwidth]{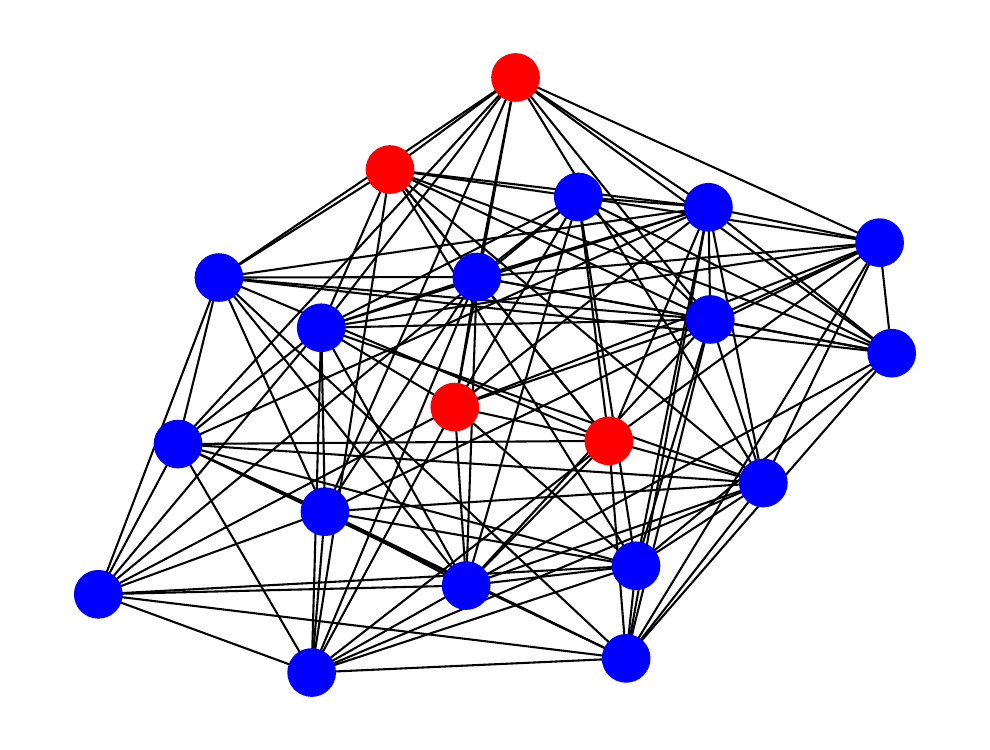}\label{fig_FEMB_032}} 
	\caption{Different communication graphs.}
	\label{fig_diff_graph}
\end{figure*}

\begin{figure*}[!t]
	\centering
	\includegraphics[scale = 0.46]{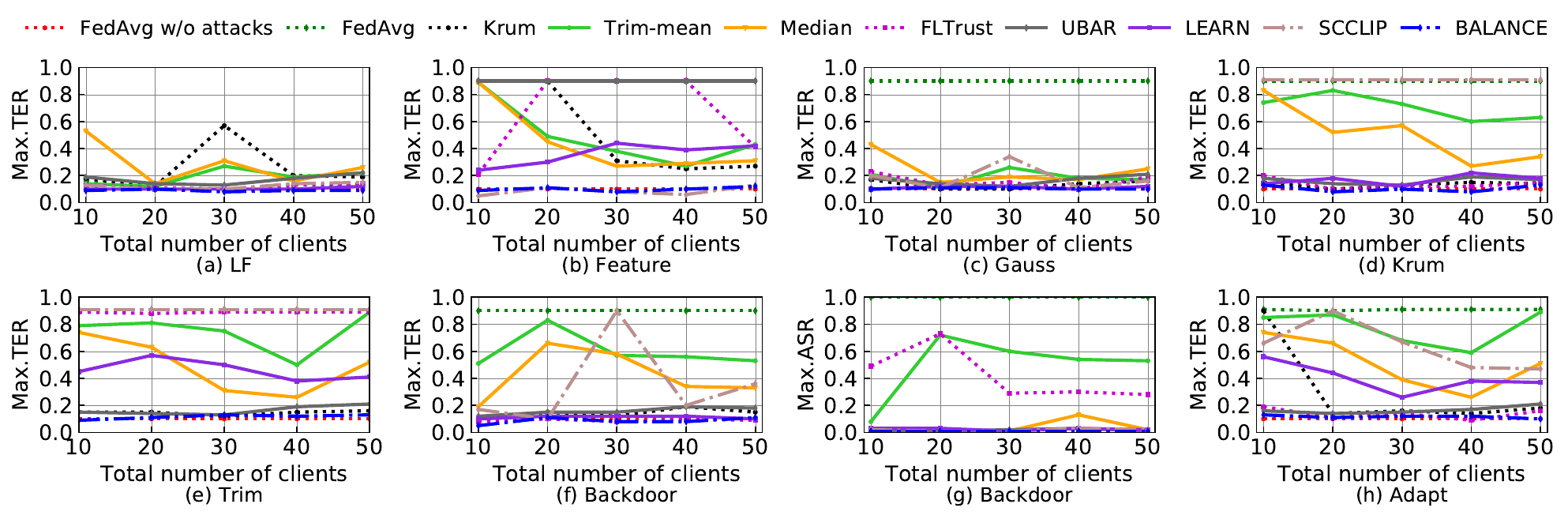}
	\caption{Impact of total number of clients.
	}
	\label{number_of_clients_max_error}
\end{figure*}

\section{Datasets and Poisoning Attacks} \label{sec:setting_app}

\subsection{Datasets}
\label{sec:datasets_app}

\myparatight{Synthetic}
To create synthetic data, we model the dependent variable as \(y = \langle \bm{x}, \bm{w}^* \rangle + \epsilon\), with \(\bm{x}\) as a feature vector, \(\bm{w}^*\) as the true parameter (100-dimensional), and \(\epsilon\) as noise. We generate \(\bm{x}\) and \(\epsilon\) from a standard normal distribution \(N(0, 1)\), and \(\bm{w}^*\) from \(N(0, 25)\). Our dataset comprises 10,000 instances, split into 8,000 for training and 2,000 for testing.

\myparatight{MNIST~\cite{lecun2010mnist}}
This dataset has 10 classes, which contains 60,000 images for training and 10,000 images for testing.

\myparatight{Fashion-MNIST~\cite{xiao2017online}}
Each image in Fashion-MNIST belongs to one of the 10 categories. The training data contains 60,000 images, and the testing set contains 10,000 images.

\myparatight{Human Activity Recognition (HAR)~\cite{anguita2013public}} 
The HAR dataset aims to classify six human activities, collected from 30 smartphone users, totaling 10,299 instances with 561 features each. Follow~\cite{cao2020fltrust}, we randomly use 75\% of each user's data for training and 25\% for testing.

\myparatight{Large-scale CelebFaces Attributes (CelebA)~\cite{liu2015faceattributes}} 
 CelebA is a large-scale image dataset that identifies celebrity face attributes. This data contains 200,288 images, which includes 177,480 images for training and 22,808 images for testing. Each image has 40 annotations of binary attributes. Following~\cite{caldas2018leaf}, we consider the binary classification task for the CelebA dataset, which aims to predict whether the person in the image is smiling or not.

\subsection{Poisoning Attack Schemes}
\label{sec:attack_app}

\myparatight{Label flipping (LF) attack~\cite{tolpegin2020data}} 
In the synthetic data, the attacker modifies malicious clients' local training data by adding a bias of 5 to the dependent variable \(y\). For MNIST, Fashion-MNIST, and HAR datasets, training labels on malicious clients are changed from class 3 to class 5, following~\cite{tolpegin2020data}. In CelebA, labels are reversed on malicious clients, switching 0 to 1 and vice versa.

\myparatight{Feature attack}
The attacker modifies the features of local training examples on malicious clients. Each feature of such examples is replaced with a value drawn from a Gaussian distribution with a mean of 0 and a variance of 1,000.

\myparatight{Gaussian (Gauss) attack~\cite{blanchard2017machine}} 
Malicious clients send Gaussian vectors, randomly drawn from a normal distribution with a mean of 0 and a variance of 200, to their neighbors.

\myparatight{Krum attack~\cite{fang2020local}} 
The attacker carefully crafts the local models on malicious clients in a way that causes the Krum rule to output the model chosen by the attacker.

\myparatight{Trim attack~\cite{fang2020local}} 
The attacker in Trim attack manipulates the local models on malicious clients such that the aggregated local model after attack deviates significantly from the before-attack aggregated one.

\myparatight{Backdoor attack~\cite{bagdasaryan2020backdoor,gu2017badnets}} 
Malicious clients replicate their training data, adding a backdoor trigger to each copy and assigning them a target label chosen for the attack. They train their local models using this augmented data and share the scaled models with neighbors. The scaling factor equals the total number of clients.
We use the triggers suggested in~\cite{cao2020fltrust} for the MNIST, Fashion-MNIST, and HAR datasets.
For the CelebA dataset, we set the first binary feature to 1.

\myparatight{Adaptive (Adapt) attack~\cite{shejwalkar2021manipulating}} 
We consider the adaptive attack proposed in~\cite{shejwalkar2021manipulating}.
We consider the worst-case attack setting, where the attacker is aware of the aggregation rule used by each client, i.e., \alg in our paper, and the local models of benign clients.

\subsection{Consensus Error}
\label{sec:more_metrics}

To assess disagreement among benign clients during poisoning attacks, we use the consensus error metric~\cite{he2022byzantine,lian2017can,kong2021consensus}, which is computed as \(\frac{1}{|\mathcal{B}|} \sum_{i \in \mathcal{B}} \| \bm{w}_i^{T} - \bar{\bm{w}}^T \|^2\). Here, $\mathcal{B}$ is the set of benign clients, \(\bar{\bm{w}}^T\) is the average of their final local models, and \(\bm{w}_i^{T}\) is client \(i\)'s final model after \(T\) training rounds in the DFL system.

\begin{table}[H]
	\centering
	\scriptsize
	\setlength{\doublerulesep}{3\arrayrulewidth}
		\caption{Maximum mean squared errors (Max.MSEs) of different DFL methods on synthetic dataset.}
	\label{tab_all_datasets_app_Synthetic}%
	\begin{tabular}{|c|c|c|c|c|c|c|c|}
		\hline
		Method & No & LF & Feature & Gauss & Krum & Trim  & Adapt \\
		\hline
		\hline
		FedAvg  & 0.36  & 0.39    & >100    & >100    &  72.18    &  58.50    &  90.29  \\
		Krum    & 1.11    & 1.23     & >100   & 1.19&   1.18  &   1.18   &  1.19    \\
		Trim-mean   & 0.38    &  0.39   & 3.45   &  0.40 &   4.17  &   5.41     &  5.41  \\
		Median   &  0.39   &  0.40     &  2.37  & 0.42   & 1.22   &  3.93   & 3.93  \\
		FLTrust  &0.41     &   0.46    &  >100    & 22.87     &  10.12     &  8.00     &  0.42  \\
		UBAR   & 0.40  &   0.40     & >100    & 0.40     &   0.40     & 0.40    &  0.40   \\
		LEARN  &   0.42  &  0.42    &  0.42  & 0.64    &  5.36   & 17.78     & 1.60  \\
		SCCLIP    &  0.36  & 0.39    & >100     & 0.42  & 5.63  &   5.12  &  4.83  \\
		\rowcolor{greyL}
		\alg     &  0.36   &  0.36     & 0.36  & 0.36  &   0.36     &   0.36    &   0.36  \\
		\hline
	\end{tabular}%
	\label{tab_all_datasets_app_Synthetic}%
\end{table}%

\begin{table}[H]
	\centering
	\scriptsize
	\setlength{\doublerulesep}{3\arrayrulewidth}
	\caption{Consensus error of different DFL methods.}
	\label{tab_avg_error}%
	{
		\begin{tabular}{|c|c|c|c|c|c|c|c|c|}
			\hline
			Method & No & LF & Feature & Gauss & Krum & Trim & Backdoor & Adapt  \\
			\hline
			\hline
			FedAvg & 0.01 & 0.01 & >100 & >100 & >100 & 0.01 & >100 & >100  \\
			Krum & 0.01 & 0.01 & >100 & 0.01 & 0.01 & 0.01 & 0.01 & 0.01  \\
			Trim-mean & 0.01 & 0.01 & 0.01 & 0.01 & 0.01 & 0.01 & 0.02 & 0.01 \\
			Median & 0.01 & 0.01 & 0.01 & 0.01 & 0.01 & 0.01 & 0.56 & 0.01 \\
			FLTrust & 0.01 & 0.01 & 45.32 & 2.13 & 0.01 & 1.71 & 0.01 & 0.01\\
			UBAR & 0.01 & 0.01 & >100 & 0.01 & 0.01 & 0.01 & 0.01 & 0.01\\
			LEARN & 0.01 & 0.01 & 0.01 & 0.01 & 0.01 & 0.01 & 0.01 & 0.01  \\
			SCCLIP & 0.01 & 0.01 & 0.02 & 0.01 & 0.01 & 0.01 & >100 & 0.01 \\
			\rowcolor{greyL}
			\alg & 0.01 & 0.01 & 0.01 & 0.01 & 0.01 & 0.01 & 0.01 & 0.01  \\
			\hline
		\end{tabular}%
	}
	\vspace{-.15in}
\end{table}%

\begin{table}[H]
	\centering
	\scriptsize
	\setlength{\doublerulesep}{3\arrayrulewidth}
	\centering
	\caption{Results of different DFL methods, where each client aggregates its model as $\bm{w}_i^{t+1} = \text{AGG} \{ \bm{w}_j^{t+\frac{1}{2}}, j \in \widehat{\mathcal{N}_i} \}$, $\widehat{\mathcal{N}_i} = \mathcal{N}_i \cup \{ i\}$, $\mathcal{N}_i$ is the set of neighbors of client $i$ (not including client $i$ itself).}
	\label{tab_aggre_all}
	\begin{tabular}{|c|c|c|c|c|c|c|c|c|}
		\hline
		Method & No & LF & Feature & Gauss & Krum & Trim & Backdoor & Adapt\\
		\hline
		\hline
		FedAvg & 0.09 & 0.09 & 0.90 & 0.90 & 0.91 & 0.91 & 0.90 / 1.00 & 0.90 \\
		Krum & 0.18 & 0.18 & 0.13 & 0.18 & 0.22 & 0.18 & 0.18 / 0.01 & 0.18 \\
		Trim-mean & 0.17 & 0.49 & 0.17 & 0.17 & 0.89 & 0.89 & 0.90 / 1.00 & 0.89 \\
		Median & 0.19 & 0.38 & 0.23 & 0.31 & 0.83 & 0.87 & 0.78 / 0.01 & 0.88 \\
		FLTrust & 0.11 & 0.11 & 0.11 & 0.11 & 0.11 & 0.90 & 0.11 / 0.90 & 0.12\\
		UBAR & 0.25 & 0.25 & 0.26 & 0.25 & 0.27 & 0.27 & 0.25 / 0.01 & 0.23  \\
		LEARN & 0.10 & 0.15 & 0.11 & 0.10 & 0.35 & 0.55 & 0.10 / 0.19 & 0.81 \\
		SCCLIP & 0.10 & 0.10 & 0.10 & 0.10 & 0.90 & 0.91 & 0.46 / 0.04 & 0.91\\
		\rowcolor{greyL}
		\alg & 0.09 & 0.09 & 0.09 & 0.10 & 0.09 & 0.09 & 0.09 / 0.01 & 0.10 \\
		\hline
	\end{tabular}%
\end{table}%

\begin{table}[H]
    \centering
    \scriptsize
    \setlength{\doublerulesep}{3\arrayrulewidth}
    \caption{Results of different DFL methods in Case I.}
    \label{tab_diff_alpha}%
    {
\begin{tabular}{|c|c|c|c|c|c|c|c|c|}
    \hline
    Method & No & LF & Feature & Gauss & Krum & Trim & Backdoor & Adapt \\
    \hline
    \hline
    FedAvg & 0.16 & 0.17 & 0.90 & 0.90 & 0.91 & 0.91 & 0.90 / 1.00 & 0.90 \\
    Krum & 0.23 & 0.24 & 0.90 & 0.26 & 0.26 & 0.28 & 0.26 / 0.01 & 0.26\\
    Trim-mean & 0.20 & 0.21 & 0.15 & 0.34 & 0.44 & 0.74 & 0.23 / 0.03 & 0.48 \\
    Median & 0.22 & 0.23 & 0.23 & 0.22 & 0.56 & 0.60 & 0.24 / 0.02 & 0.33 \\
    FLTrust & 0.16 & 0.17 & 0.90 & 0.16 & 0.17 & 0.89 & 0.22 / 0.16 & 0.91 \\
    UBAR & 0.20 & 0.20 & 0.90 & 0.20 & 0.21 & 0.22 & 0.23 / 0.03 & 0.22\\
    LEARN & 0.16 & 0.21 & 0.25 & 0.17 & 0.30 & 0.64 & 0.16 / 0.05 & 0.38  \\
    SCCLIP & 0.16 & 0.17 & 0.17 & 0.17 & 0.87 & 0.90 & 0.16 / 0.01 & 0.90 \\
    \rowcolor{greyL}
    \alg & 0.16 & 0.16 & 0.17 & 0.17 & 0.17 & 0.17 & 0.16 / 0.01 & 0.18 \\
    \hline
    \end{tabular}%
    }
	\vspace{-.15in}
\end{table}%

\begin{table}[H]
	\centering
    \scriptsize
    \setlength{\doublerulesep}{3\arrayrulewidth}
        \caption{Results of different DFL methods in Case II.}
    \label{tab_diff_alpha_app}
\begin{tabular}{|c|c|c|c|c|c|c|c|c|}
        \hline
        Method & No & LF & Feature & Gauss & Krum & Trim & Backdoor & Adapt \\
        \hline
        \hline
        FedAvg & 0.19 & 0.17 & 0.90 & 0.90 & 0.90 & 0.90 & 0.90 / 1.00 & 0.90 \\
        Krum & 0.21 & 0.22 & 0.90 & 0.27 & 0.25 & 0.26 & 0.26 / 0.02 & 0.27 \\
        Trim-mean & 0.43 & 0.43 & 0.57 & 0.43 & 0.89 & 0.89 & 0.90 / 1.00 & 0.89 \\
        Median & 0.39 & 0.40 & 0.46 & 0.44 & 0.79 & 0.77 & 0.53 / 0.05 & 0.77 \\
        FLTrust & 0.19 & 0.19 & 0.90 & 0.24 & 0.21 & 0.91 & 0.19 / 0.86 & 0.90  \\
        UBAR & 0.24 & 0.24 & 0.90 & 0.31 & 0.30 & 0.24 & 0.46 / 0.03 & 0.29\\
        LEARN & 0.32 & 0.35 & 0.32 & 0.32 & 0.75 & 0.89 & 0.33 / 0.04 & 0.87 \\
        SCCLIP & 0.19 & 0.19 & 0.20 & 0.30 & 0.90 & 0.90 & 0.39 / 0.02 & 0.90  \\
        \rowcolor{greyL}
        \alg & 0.19 & 0.19 & 0.20 & 0.20 & 0.21 & 0.22 & 0.20 / 0.01 & 0.22\\
        \hline
    \end{tabular}%
\end{table}%

\begin{table}[H]
 	\scriptsize
    \setlength{\doublerulesep}{3\arrayrulewidth}
	\centering
        \caption{Results of different DFL methods in Case III and Case IV.}
        \label{tab_diff_agg}%
	\begin{tabular}{|c|c|c|c|c|c|c|c|c|}
		\hline
		Method & No & LF & Feature & Gauss & Krum & Trim & Backdoor & Adapt \\
		\hline
      \hline
		Case III  &   0.14      & 0.14      &  0.90     & 0.24     &  0.91      & 0.89     &          0.16 / 0.01   & 0.90 \\
		Case IV  &   0.17   &  0.19  &  0.90    & 0.09 &  0.75     &  0.90     &  0.23 / 0.05   &   0.80   \\
		\hline
	\end{tabular}%
\end{table}%

\begin{table}[H]
    \centering
    \scriptsize
    \setlength{\doublerulesep}{3\arrayrulewidth}
        \caption{Results of different DFL methods, where clients use different initial local models.}
    \label{tab_diff_init}
    \begin{tabular}{|c|c|c|c|c|c|c|c|c|}
        \hline
        Method & No & LF & Feature & Gauss & Krum & Trim & Backdoor & Adapt  \\
        \hline
        \hline
        FedAvg & 0.10 & 0.11 & 0.90 & 0.90 & 0.90 & 0.90 & 0.90 / 1.00 & 0.90  \\
        Krum & 0.10 & 0.12 & 0.90 & 0.10 & 0.12 & 0.12 & 0.15 / 0.01 & 0.12  \\
        Trim-mean & 0.11 & 0.14 & 0.44 & 0.11 & 0.84 & 0.70 & 0.91 / 0.01 & 0.59 \\
        Median & 0.13 & 0.17 & 0.56 & 0.16 & 0.58 & 0.64 & 0.19 / 0.01 & 0.87  \\
        FLTrust & 0.11 & 0.11 & 0.11 & 0.13 & 0.11 & 0.89 & 0.10 / 0.47 & 0.11 \\
        UBAR & 0.14 & 0.15 & 0.91 & 0.14 & 0.14 & 0.14 & 0.14 / 0.01 & 0.16\\
        LEARN & 0.13 & 0.14 & 0.14 & 0.14 & 0.25 & 0.36 & 0.15 / 0.06 & 0.31 \\
        SCCLIP & 0.10 & 0.10 & 0.10 & 0.11 & 0.91 & 0.91 & 0.13 / 0.01 & 0.91 \\
        \rowcolor{greyL}
        \alg & 0.10 & 0.10 & 0.10 & 0.10 & 0.10 & 0.10 & 0.10 / 0.01 & 0.11 \\
        \hline
    \end{tabular}%
\end{table}%

\begin{table}[H]
	\centering
	\scriptsize
	\setlength{\doublerulesep}{3\arrayrulewidth}
        \caption{Results of different DFL methods on different communication graphs.}
	\label{tab_diff_graph_app}
     \subfloat[Complete graph.]
     {
\begin{tabular}{|c|c|c|c|c|c|c|c|c|}
        \hline
        Method & No & LF & Feature & Gauss & Krum & Trim & Backdoor & Adapt \\
        \hline
        \hline
        FedAvg & 0.10 & 0.10 & 0.90 & 0.90 & 0.91 & 0.91 & 0.90 / 1.00 & 0.90 \\
        Krum & 0.14 & 0.14 & 0.90 & 0.15 & 0.15 & 0.15 & 0.14 / 0.01 & 0.15 \\
        Trim-mean & 0.13 & 0.13 & 0.27 & 0.13 & 0.88 & 0.88 & 0.56 / 0.58 & 0.88 \\
        Median & 0.13 & 0.16 & 0.46 & 0.22 & 0.89 & 0.88 & 0.20 / 0.01 & 0.89  \\
        FLTrust & 0.11 & 0.11 & 0.11 & 0.13 & 0.11 & 0.91 & 0.10 / 0.66 & 0.11 \\
        UBAR & 0.16 & 0.15 & 0.91 & 0.17 & 0.17 & 0.17 & 0.19 / 0.01 & 0.17  \\
        LEARN & 0.13 & 0.27 & 0.31 & 0.14 & 0.33 & 0.49 & 0.11 / 0.02 & 0.44\\
        SCCLIP & 0.10 & 0.10 & 0.17 & 0.11 & 0.91 & 0.91 & 0.11 / 0.01 & 0.91 \\
        \rowcolor{greyL}
        \alg & 0.10 & 0.10 & 0.10 & 0.10 & 0.10 & 0.10 & 0.10 / 0.01 & 0.10  \\
        \hline
    \end{tabular}%
     }
     \\
        \subfloat[Erdős–Rényi graph.]
        {
\begin{tabular}{|c|c|c|c|c|c|c|c|c|}
			\hline
			Method & No & LF & Feature & Gauss & Krum & Trim & Backdoor & Adapt \\
			\hline
			\hline
                FedAvg & 0.10 & 0.10 & 0.90 & 0.91 & 0.90 & 0.87 & 0.90 / 1.00 & 0.90 \\
                Krum & 0.17 & 0.17 & 0.90 & 0.17 & 0.17 & 0.17 & 0.17 / 0.01 & 0.90  \\
                Trim-mean & 0.14 & 0.16 & 0.27 & 0.14 & 0.32 & 0.42 & 0.14 / 0.01 & 0.45 \\
                Median & 0.17 & 0.22 & 0.43 & 0.24 & 0.30 & 0.65 & 0.17 / 0.01 & 0.40  \\
                FLTrust & 0.10 & 0.10 & 0.11 & 0.11 & 0.09 & 0.90 & 0.10 / 0.04 & 0.10 \\
                UBAR & 0.23 & 0.23 & 0.90 & 0.23 & 0.23 & 0.23 & 0.24 / 0.02 & 0.23 \\
                LEARN & 0.10 & 0.11 & 0.10 & 0.12 & 0.19 & 0.27 & 0.10 / 0.01 & 0.12 \\
                SCCLIP & 0.10 & 0.10 & 0.09 & 0.11 & 0.85 & 0.89 & 0.10 / 0.01 & 0.59 \\
			\rowcolor{greyL}
			\alg & 0.10 & 0.10 & 0.10 & 0.10 & 0.10 & 0.10 & 0.10 / 0.01 & 0.10\\
			\hline
		\end{tabular}
	}
  \\
 	\subfloat[Small-world graph.]{
	\begin{tabular}{|c|c|c|c|c|c|c|c|c|c|}
			\hline
			Method & No & LF & Feature & Gauss & Krum & Trim & Backdoor & Adapt   \\
			\hline
			\hline
                FedAvg & 0.10 & 0.10 & 0.90 & 0.90 & 0.91 & 0.91 & 0.90 / 1.00 & 0.90  \\
                Krum & 0.17 & 0.17 & 0.90 & 0.17 & 0.17 & 0.17 & 0.19 / 0.01 & 0.17  \\
                Trim-mean & 0.15 & 0.43 & 0.63 & 0.17 & 0.71 & 0.87 & 0.32 / 0.01 & 0.80 \\
                Median & 0.14 & 0.15 & 0.73 & 0.19 & 0.87 & 0.87 & 0.19 / 0.01 & 0.85  \\
                FLTrust & 0.10 & 0.10 & 0.13 & 0.13 & 0.10 & 0.91 & 0.10 / 0.14 & 0.11 \\
                UBAR & 0.14 & 0.14 & 0.90 & 0.14 & 0.14 & 0.14 & 0.18 / 0.01 & 0.14  \\
                LEARN & 0.10 & 0.11 & 0.16 & 0.10 & 0.19 & 0.33 & 0.10 / 0.01 & 0.29  \\
                SCCLIP & 0.10 & 0.10 & 0.10 & 0.10 & 0.90 & 0.91 & 0.10 / 0.01 & 0.49  \\
			\rowcolor{greyL}
			\alg & 0.10 & 0.10 & 0.11 & 0.11 & 0.11 & 0.11 & 0.10 / 0.01 & 0.11 \\
			\hline
		\end{tabular}
	}
 \\
 	\subfloat[Ring graph.]{
	\begin{tabular}{|c|c|c|c|c|c|c|c|c|}
			\hline
			Method & No & LF & Feature & Gauss & Krum & Trim & Backdoor & Adapt   \\
			\hline
			\hline
                FedAvg & 0.10 & 0.11 & 0.90 & 0.90 & 0.90 & 0.91 & 0.90 / 1.00 & 0.90 \\
                Krum & 0.17 & 0.17 & 0.90 & 0.90 & 0.90 & 0.90 & 0.90 / 1.00 & 0.90 \\
                Trim-mean & 0.19 & 0.21 & 0.90 & 0.90 & 0.90 & 0.90 & 0.90 / 1.00 & 0.90 \\
                Median & 0.10 & 0.11 & 0.90 & 0.90 & 0.90 & 0.90 & 0.90 / 1.00 & 0.90 \\
                FLTrust & 0.11 & 0.11 & 0.11 & 0.36 & 0.11 & 0.90 & 0.11 / 0.49 & 0.11  \\
                UBAR & 0.13 & 0.13 & 0.90 & 0.13 & 0.13 & 0.13 & 0.27 / 0.01 & 0.90  \\
                LEARN & 0.12 & 0.12 & 0.90 & 0.90 & 0.90 & 0.90 & 0.90 / 1.00 & 0.90 \\
                SCCLIP & 0.11 & 0.11 & 0.13 & 0.11 & 0.83 & 0.90 & 0.11 / 0.01 & 0.49\\
			\rowcolor{greyL}
			\alg & 0.10 & 0.10 & 0.11 & 0.10 & 0.10 & 0.10 & 0.11 / 0.01 & 0.12 \\
			\hline
		\end{tabular}
	}
\end{table}%

\newpage

\begin{table}[H]
	\centering
	\scriptsize
	\setlength{\doublerulesep}{3\arrayrulewidth}
        \caption{Results of different DFL methods on graphs with different FEMBs.}
	\label{tab_diff_FEMB_results_app}
    \subfloat[Graph with FEMB 0.16.]{
\begin{tabular}{|c|c|c|c|c|c|c|c|c|}
        \hline
        Method & No & LF & Feature & Gauss & Krum & Trim & Backdoor & Adapt \\
        \hline
        \hline
        FedAvg & 0.10 & 0.10 & 0.90 & 0.90 & 0.90 & 0.91 & 0.90 / 1.00 & 0.90 \\
        Krum & 0.17 & 0.17 & 0.90 & 0.17 & 0.18 & 0.17 & 0.17 / 0.01 & 0.19 \\
        Trim-mean & 0.13 & 0.13 & 0.13 & 0.13 & 0.31 & 0.39 & 0.13 / 0.01 & 0.33\\
        Median & 0.14 & 0.15 & 0.20 & 0.15 & 0.43 & 0.41 & 0.14 / 0.01 & 0.43 \\
        FLTrust & 0.10 & 0.10 & 0.11 & 0.11 & 0.11 & 0.91 & 0.10 / 0.01 & 0.10 \\
        UBAR & 0.17 & 0.17 & 0.90 & 0.18 & 0.18 & 0.18 & 0.17 / 0.01 & 0.18 \\
        LEARN & 0.10 & 0.10 & 0.10 & 0.10 & 0.14 & 0.28 & 0.10 / 0.01 & 0.10\\
        SCCLIP & 0.10 & 0.10 & 0.10 & 0.11 & 0.81 & 0.91 & 0.10 / 0.01 & 0.49   \\
        \rowcolor{greyL}
        \alg & 0.10 & 0.10 & 0.10 & 0.10 & 0.10 & 0.10 & 0.10 / 0.01 & 0.10 \\
        \hline
    \end{tabular}%
        }
\\
	\subfloat[Graph with FEMB 0.22.]{
\begin{tabular}{|c|c|c|c|c|c|c|c|c|}
			\hline
			Method & No & LF & Feature & Gauss & Krum & Trim & Backdoor & Adapt  \\
			\hline
			\hline
			FedAvg & 0.10 & 0.10 & 0.90 & 0.90 & 0.90 & 0.91 & 0.90 / 1.00 & 0.90 \\
                Krum & 0.17 & 0.17 & 0.90 & 0.17 & 0.17 & 0.17 & 0.17 / 0.01 & 0.17  \\
                Trim-mean & 0.13 & 0.13 & 0.13 & 0.13 & 0.19 & 0.55 & 0.17 / 0.01 & 0.51\\
                Median & 0.14 & 0.14 & 0.14 & 0.17 & 0.54 & 0.62 & 0.22 / 0.01 & 0.59\\
                FLTrust & 0.10 & 0.10 & 0.11 & 0.11 & 0.10 & 0.89 & 0.10 / 0.04 & 0.10\\
                UBAR & 0.16 & 0.16 & 0.90 & 0.17 & 0.18 & 0.18 & 0.18 / 0.01 & 0.16\\
                LEARN & 0.10 & 0.13 & 0.12 & 0.10 & 0.12 & 0.43 & 0.10 / 0.01 & 0.19  \\
                SCCLIP & 0.10 & 0.10 & 0.10 & 0.10 & 0.89 & 0.90 & 0.10 / 0.01 & 0.43\\
			\rowcolor{greyL}
			\alg & 0.10 & 0.10 & 0.11 & 0.10 & 0.10 & 0.11 & 0.10 / 0.01 & 0.11\\
			\hline
		\end{tabular}
	}
\\
	\subfloat[Graph with FEMB 0.32.]{
\begin{tabular}{|c|c|c|c|c|c|c|c|c|}
			\hline
			Method & No & LF & Feature & Gauss & Krum & Trim & Backdoor & Adapt   \\
			\hline
			\hline
                FedAvg & 0.10 & 0.10 & 0.90 & 0.90 & 0.90 & 0.91 & 0.90 / 1.00 & 0.90  \\
                Krum & 0.17 & 0.17 & 0.90 & 0.17 & 0.19 & 0.17 & 0.17 / 0.01 & 0.17 \\
                Trim-mean & 0.12 & 0.13 & 0.19 & 0.13 & 0.86 & 0.87 & 0.59 / 0.60 & 0.88\\
                Median & 0.14 & 0.14 & 0.14 & 0.19 & 0.89 & 0.88 & 0.86 / 0.01 & 0.89  \\
                FLTrust & 0.10 & 0.10 & 0.12 & 0.12 & 0.10 & 0.91 & 0.10 / 0.01 & 0.11 \\
                UBAR & 0.16 & 0.17 & 0.91 & 0.17 & 0.19 & 0.19 & 0.21 / 0.01 & 0.28 \\
                LEARN & 0.13 & 0.13 & 0.25 & 0.13 & 0.19 & 0.43 & 0.13 / 0.01 & 0.39   \\
                SCCLIP & 0.10 & 0.10 & 0.10 & 0.10 & 0.90 & 0.91 & 0.10 / 0.01 & 0.59  \\
			\rowcolor{greyL}
			\alg & 0.10 & 0.10 & 0.10 & 0.10 & 0.10 & 0.10 & 0.10 / 0.01 & 0.10\\
			\hline
		\end{tabular}
	}
\end{table}%

\end{document}